\documentclass[10pt,times,amsfonts,twocolumn,amsmath,amssymb,superscriptaddress,aps,prx]{revtex4-2}

\usepackage{tabularx} 
\usepackage{graphicx}

\usepackage[caption=false]{subfig}

\usepackage{color}
\usepackage{comment}
\usepackage{color}
\usepackage[colorlinks,urlcolor=blue,citecolor=blue,linkcolor=blue]{hyperref}
\usepackage{cleveref}
\usepackage{physics}
\usepackage{mathtools}
\usepackage{mathrsfs}
\usepackage{booktabs}
\usepackage{bm}
\usepackage{amsthm}
\usepackage[english]{babel}
\newtheorem{theorem}{Theorem}

\newtheorem{lemma}{Lemma}
\newtheorem{definition}{Definition}
\usepackage{amsmath}
\newtheorem{proposition}{Proposition}
\newtheorem{corollary}{Corollary}

\usepackage{needspace}
\usepackage{physics}
\usepackage{bm}
\usepackage[page,toc,titletoc,title]{appendix}
\usepackage{fixmath}
\usepackage{dsfont}
\hypersetup{
    colorlinks,
    citecolor=blue,
    filecolor=black,
    linkcolor=blue,
    urlcolor=blue
}
\usepackage{graphicx}
\usepackage{graphicx,accents}

\usepackage{mathtools}

\begin{document}

\title{Distinguishing quantum processes with bounded coherent memory}

\author{Magdalini Zonnios}
\email{zonniosm@tcd.ie}
\affiliation{School of Physics, Trinity College Dublin, College Green, Dublin 2, D02 K8N4, Ireland}
\affiliation{Trinity Quantum Alliance, Unit 16, Trinity Technology and Enterprise Centre, Pearse Street, Dublin 2, D02 YN67, Ireland}

\author{Felix C. Binder}
\email{felix.binder@tcd.ie}
\affiliation{School of Physics, Trinity College Dublin, College Green, Dublin 2, D02 K8N4, Ireland}
\affiliation{Trinity Quantum Alliance, Unit 16, Trinity Technology and Enterprise Centre, Pearse Street, Dublin 2, D02 YN67, Ireland}

\date{\today}

\begin{abstract}
Distinguishing multi-time quantum processes is a fundamental task underlying the diagnosis, benchmarking, and learning of temporally correlated quantum dynamics. The standard benchmark for distinguishing two processes is the strategy-norm distance, which optimizes over arbitrary adaptive probing strategies but can require large coherent memory and time-dependent control. We introduce machines for autonomous distinction~($\mathsf{MAD}$s): probing strategies that apply the same quantum instrument at each time step, retain the full classical outcome record, and carry a coherent memory of dimension $d_A$. Optimizing over these strategies defines a memory-parametrized distinguishability measure, $d^{(N)}_{\mathsf{MAD}}(\mathbf{P}^N,\mathbf{Q}^N;d_A)$. We show that the resulting hierarchy is monotone in coherent memory and complete at finite times. Specifically, any admissible $N$-step probing strategy can be compiled into a single $\mathsf{MAD}$ with an internal counter and sufficiently large coherent memory, so the hierarchy saturates the strategy-norm benchmark. For recurrent processes generated by repeated system--environment interactions, we derive a single-step description that separates the generation of new distinguishing information from the propagation and decay of information generated at earlier times. Numerical results in a repeated-interaction model show that increasing coherent memory systematically improves the $\mathsf{MAD}$ success probability and closes the gap to the strategy-norm distance while remaining substantially more tractable to evaluate. $\mathsf{MAD}$ distinguishability therefore provides an operational and scalable framework for quantifying what can be learned about genuinely multi-time quantum processes with bounded coherent memory.
\end{abstract}

\maketitle

\section{Introduction}

Quantum processes describe how systems evolve in time and how they respond to interventions performed at different moments. They are naturally represented within the quantum-comb formalism~\cite{chiribella2009theoretical,PollockRodriguezRosarioFrauenheimPaternostroModi2018,MilzPRX2021Qprocesses}. A natural operational question is whether two such processes can be told apart by an experimenter who is allowed to interact with the system as it evolves. In the single-time step setting, this reduces to the familiar problem of distinguishing quantum states or channels~\cite{watrous2018theory,GilchristLangfordNielsen2005}. 
For multi-time processes, the task is richer: the experimenter may probe the system repeatedly, store information from earlier interactions, and use it to inform later ones (as depicted in the top panel of Fig.~\ref{fig:process_tester_intro}). The distinguishability of two processes therefore depends on the temporal correlations that connect different times, as well as on the capabilities of the probing strategy~\cite{GutoskiWatrous2007,gutoski2012measure,ZambonDistinguishability2024}.

\begin{figure}[t!]
    \centering
    \includegraphics[width=0.92\linewidth]{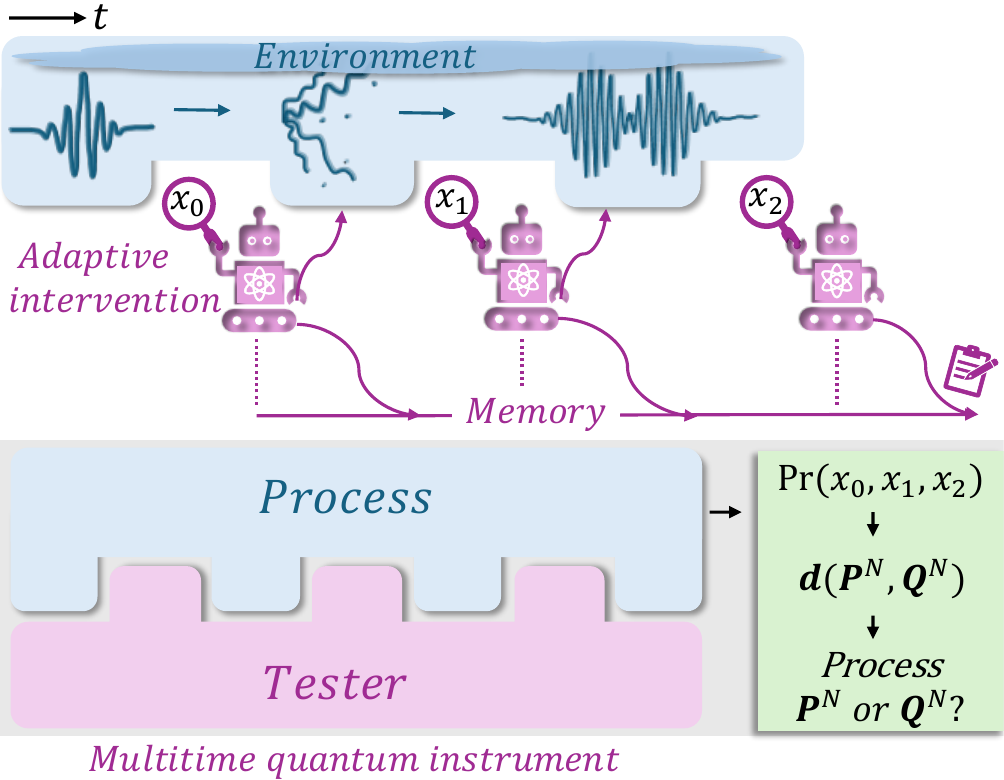}
   \caption{
Multi-time process discrimination as an operational task. 
An experimenter probes a process at several times, stores information in memory, and obtains an outcome record $(x_0,x_1,x_2)$. 
Equivalently, a tester, i.e. a multi-time quantum instrument, is inserted into the slots of a process comb to produce statistics used to decide whether the process was $\mathbf{P}^N$ or $\mathbf{Q}^N$. 
}
\label{fig:process_tester_intro}
\end{figure}

This dependence on the probing strategy is especially important when the experimenter has limited memory. In a multi-time discrimination experiment, the probing strategy, or \textit{tester}, consists of interventions inserted into the process slots, together with the memory and final decision rule used to process the resulting data. Such memory is operationally important because temporal correlations can only be exploited if information obtained at earlier intervention times can be carried forward and used later~\cite{Pollock2018,milz2020,taranto2024hierarchy}. A fully general tester may store this information coherently and use it adaptively in later probes~\cite{GutoskiWatrous2007,gutoski2012measure,chiribella2008memory}, whereas a more realistic tester may retain a classical outcome record but only a bounded amount of coherent quantum memory. Thus, memory acts as a resource determining which temporal correlations can be accessed~\cite{taranto2021,giarmatzi2023,ohst2026}. The central question of this work is: \textit{how well can two multi-time quantum processes be distinguished when the process discrimination strategy has limited coherent memory}?

The relevance of this question comes from the fact that coherent memory is itself a physical resource. In realistic devices, long-lived coherent memory is costly, noise-sensitive, and often the limiting component of a multi-time experiment~\cite{preskill2018nisq,terhal2015quantum,heshami2016quantum}, whereas classical outcome records can usually be retained and processed with comparatively little overhead. In this way, process distinguishability is recast as an operational question about accessible temporal information: which differences between processes can be detected by an agent with limited quantum memory \cite{chiribella2009theoretical,pollock2018nonmarkovian,MilzPRX2021Qprocesses,ZambonDistinguishability2024}.

Importantly, quantum process discrimination is a fundamental task in quantum information theory, appearing in channel and network discrimination~\cite{gutoski2012measure,wilde2020amortized,hirche2023network}, channel estimation~\cite{katariya2021geometric}, cryptographic security~\cite{portmann2022security}, and theories of dynamical resources~\cite{wang2019asymmetric,nakahira2021general}. For multi-time processes, quantum combs provide the natural framework for describing sequences of interventions on systems with temporal correlations~\cite{chiribella2009theoretical,pollock2018nonmarkovian,Milz2017,milz2021quantum,OrtegaTaberner2024Unifying}. In this setting, memory is revealed operationally through multi-time experiments~\cite{Pollock2018,milz2020,taranto2024hierarchy}, with experimental demonstrations in superconducting quantum processors~\cite{white2020demonstration,zhang2022predicting,nakamura2024gate} and recent theoretical developments involving memory kernels, long-range temporal structures, and process-tensor distinguishability~\cite{jorgensen2020discrete,dowling2024capturing,ZambonDistinguishability2024}.

The standard benchmark for distinguishing multi-time processes is the strategy-norm distance $d_{\mathrm{str}}^{(N)}$, which optimizes over all admissible adaptive distinguishing strategies~\cite{gutoski2012measure,watrous2018theory}. Operationally, $d_{\mathrm{str}}^{(N)}$ quantifies the optimal bias in deciding whether the process under test is $\mathbf{P}^N$ or $\mathbf{Q}^N$. Equivalently, this is a binary hypothesis-testing problem~\cite{helstrom1969,bae2015}. For equal priors, the optimal success probability satisfies $p_{\mathrm{succ}}^{\mathrm{str},\ast}=\frac12(1+d_{\mathrm{str}}^{(N)})$, so maximizing distinguishability is equivalent to minimizing the average type-I and type-II error probability $(\alpha+\beta)/2$~\cite{helstrom1969,bae2015}. As for the trace and diamond-norm distances, the strategy-norm distance is given by half the relevant norm of the difference, $d_{\mathrm{str}}^{(N)}=\frac12|\mathbf{P}^N-\mathbf{Q}^N|_{\mathrm{str}}$. Although fully general, this benchmark may require coherent memory across many time steps, time-dependent control, and the solution of a semidefinite program over positive semidefinite matrices of dimension $\sim d_S^{2N}$ with $\sim d_S^{4N}$ constraints, where $d_S$ is the system dimension. To identify the optimal distinguishing strategy with the state-of-the-art interior-point methods, this leads to a runtime scaling of order $\mathcal{O}(d_S^{12N})$~\cite{Jiang2020AFI}.

This motivates a memory-resolved approach to process discrimination. Rather than asking only how distinguishable two processes are in principle, we ask how distinguishable they are in practice when the available tester has bounded coherent memory and limited control. Such a framework should isolate coherent memory from classical record-keeping, provide a hierarchy that converges to the strategy norm as memory increases, and remain meaningful when full process reconstruction or strategy-norm optimization is infeasible.

\subsection*{Contributions}

We introduce \textit{machines for autonomous distinction} ($\mathsf{MAD}$s) and $\mathsf{MAD}$ testers -- process discrimination strategies that repeatedly apply a recurrent quantum instrument, retain the full classical outcome record, and carry a coherent quantum memory of dimension $d_A$. Optimizing over such strategies defines a memory-parametrized distinguishability measure, the $\mathsf{MAD}$ distinguishability $d^{(N)}_{\mathrm{MAD}}(\mathbf{P}^N,\mathbf{Q}^N;d_A)$. The central value of this framework is that it turns process distinguishability into a memory-resolved question, asking not only how different are two processes in principle, but how much of that difference is accessible to an agent with limited coherent memory? In this sense, $\mathsf{MAD}$ distinguishability measures accessible temporal information \cite{chiribella2009theoretical,pollock2018nonmarkovian,MilzPRX2021Qprocesses,ZambonDistinguishability2024}.

Our first contribution is a coherent-memory hierarchy for multi-time process discrimination. For fixed $d_A$, $\mathsf{MAD}$ distinguishability $d_{\mathsf{MAD}}$ gives the best distinguishing bias achievable by testers with that amount of coherent memory, and hence a lower bound on the strategy-norm distance. We prove that this hierarchy is monotone in $d_A$ and complete at finite times, meaning that for any fixed process length $N$, there is a finite memory dimension for which $d_{\mathsf{MAD}}$ equals the strategy-norm distance. This follows by showing that any admissible $N$-step adaptive tester can be compiled into a single $\mathsf{MAD}$ tester using an internal counter and sufficiently large coherent memory \cite{GutoskiWatrous2007,gutoski2012measure,watrous2018theory}. The hierarchy therefore does more than improve an approximation as $d_A$ increases. It reveals how much coherent memory is needed to access the temporal information that distinguishes the processes.

This makes the gap between $d^{(N)}_{\mathrm{MAD}}$ and the strategy norm operationally meaningful. If the hierarchy converges rapidly with $d_A$, then the relevant temporal correlations are compressible into a small coherent memory. If a gap persists, then the processes differ in information that a small-memory tester cannot coherently carry forward and exploit. The gap is therefore not just a limitation of the restricted tester class but also a witness of the coherent temporal memory required by the discrimination task \cite{milz2020,taranto2021,taranto2024hierarchy,ohst2026}.

This perspective separates coherent memory from other resources in a multi-time experiment. Existing approaches to restricted discrimination often constrain adaptive structure, network architecture, causal order, admissible inputs, or separability conditions \cite{chiribella2008memory,harrow2010adaptive,jencova2016conditions,bavaresco2021strict,salek2022usefulness,nakahira2021restricted,ohst2026}. By contrast, the $\mathsf{MAD}$ framework constrains coherent memory directly while retaining the full classical outcome record. The hierarchy therefore turns coherent memory into an operationally tunable resource. By varying $d_A$ while retaining the full classical record, one can isolate how much of the process distinguishability comes from coherently stored temporal information \cite{milz2020,taranto2021,giarmatzi2023,taranto2024hierarchy,ohst2026}.

Our second contribution is a recurrent description of distinguishability for processes generated by repeated system--environment interactions. We derive a single-step picture that separates the generation of new distinguishability from the propagation and decay of distinguishability generated earlier. This provides a dynamical account of how distinguishing information accumulates over time and connects naturally to collision models \cite{KretschmannWerner2005,ciccarello2022collision}, tensor-network methods for open dynamics \cite{jorgensen2020discrete,cygorek2022compression,fux2021efficient,ivander2024unified,dowling2024capturing}, and quantum reservoir or recurrent quantum models \cite{fujii2021quantum,NakajimaFujiiNegoroMitaraiKitagawa2018,mujal2021opportunities,govia2021reservoir,suzuki2022natural,lyu2026variational}.

Our third contribution is a scalable variational route to memory-resolved process discrimination. Rather than optimizing over all admissible testers, the $\mathsf{MAD}$ hierarchy optimizes over recurrent testers with increasing coherent memory dimension, yielding a controlled sequence of lower bounds on the strategy norm. In this sense, $d_A$ plays a role analogous to a bond dimension. Increasing it enlarges the expressive power of the tester while remaining tied to a physically meaningful resource \cite{Orus2014,CiracVerstraete2021}. This connects naturally to tensor-network approaches for simulating, learning, and compressing non-Markovian dynamics \cite{fux2024oqupy,cygorek2024ace,keeling2025processTensorApproaches}.

Our numerical examples support this interpretation in a repeated-interaction model. Increasing $d_A$ systematically improves the $\mathsf{MAD}$ success probability and closes the gap to the strategy norm while avoiding the full SDP over general testers. The framework is therefore particularly useful when temporal correlations can be compressed into a small coherent memory, as expected for mixing dynamics, finite-memory environments, collision models, or compact process-tensor representations \cite{KretschmannWerner2005,ciccarello2022collision,jorgensen2020discrete,cygorek2022compression,dowling2024capturing}. Conversely, slowly mixing, non-ergodic, near-critical, or otherwise highly structured environments may require substantially larger memories \cite{fux2021efficient,cygorek2022compression,ivander2024unified}.

Finally, the framework suggests applications in process-tensor learning, benchmarking, and noise characterization. Rather than reconstructing a full multi-time process, one can ask which process differences are visible to a finite-memory probing agent, a perspective that is complementary to recent reinforcement-learning approaches for quantum processes with memory~\cite{lumbreras2026reinforcement}. This is relevant for detecting temporally correlated noise, crosstalk, calibration drift, and finite-memory environmental effects under realistic experimental constraints \cite{white2020demonstration,whiteNonMarkovian2022PRX,Zhang2022NonMarkovianSuperconducting,WhitePRL2023,Tripathi2024BenchmarkingGates,Hashim2025QCVV,white2025unifyingPRX,zhang2025learningForecasting}. The recurrent structure of $\mathsf{MAD}$ testers also connects naturally to quantum reservoir computing and recurrent quantum models \cite{Ghosh2019_Reservoir_npj,fujii2021quantum,nakajima2019boosting,NakajimaFujiiNegoroMitaraiKitagawa2018,mujal2021opportunities,govia2021reservoir,suzuki2022natural,sannia2024dissipation,martinezpena2025inputDependence,wringe2025reservoirBenchmarks}, as well as (quantum) computational mechanics \cite{CrutchfieldYoung1989,ShaliziCrutchfield2001,GuEtAl2012,Thompson2017UsingQuantumTheory,binderPRLpractical2018,Distinguishability_Yang_2020_PRE,YangEtAl2025QuantumDimensionReduction,lyu2026variational}. Taken together, these results establish $\mathsf{MAD}$ distinguishability as both a tractable approximation to the strategy norm and a tool for probing the operational structure of temporal quantum processes.

The paper is organized as follows. In Sec.~\ref{sec:background}, we review quantum combs, testers, and the strategy-norm distance. In Sec.~\ref{sec:MAD-definition}, we introduce $\mathsf{MAD}$ testers and define the corresponding memory-constrained distinguishability measure. In Sec.~\ref{sec:mad_distinguishability}, we prove the structural properties of the $\mathsf{MAD}$ hierarchy, including monotonicity and finite-time completeness. In Sec.~\ref{sec:repeated_interaction_picture}, we specialize to recurrent processes and derive the single-step description of distinguishability generation and propagation. In Sec.~\ref{sec:numerics}, we present numerical examples comparing bounded-memory $\mathsf{MAD}$ strategies with the strategy-norm distance. We conclude in Sec.~\ref{sec:conclusion} with a discussion of possible extensions.

\section{Background: combs, testers, and the strategy-norm distance}
\label{sec:background}

\subsection{Quantum processes as quantum combs}
\label{sec:combs}

\begin{figure}
  \centering
  \includegraphics[width=\columnwidth]{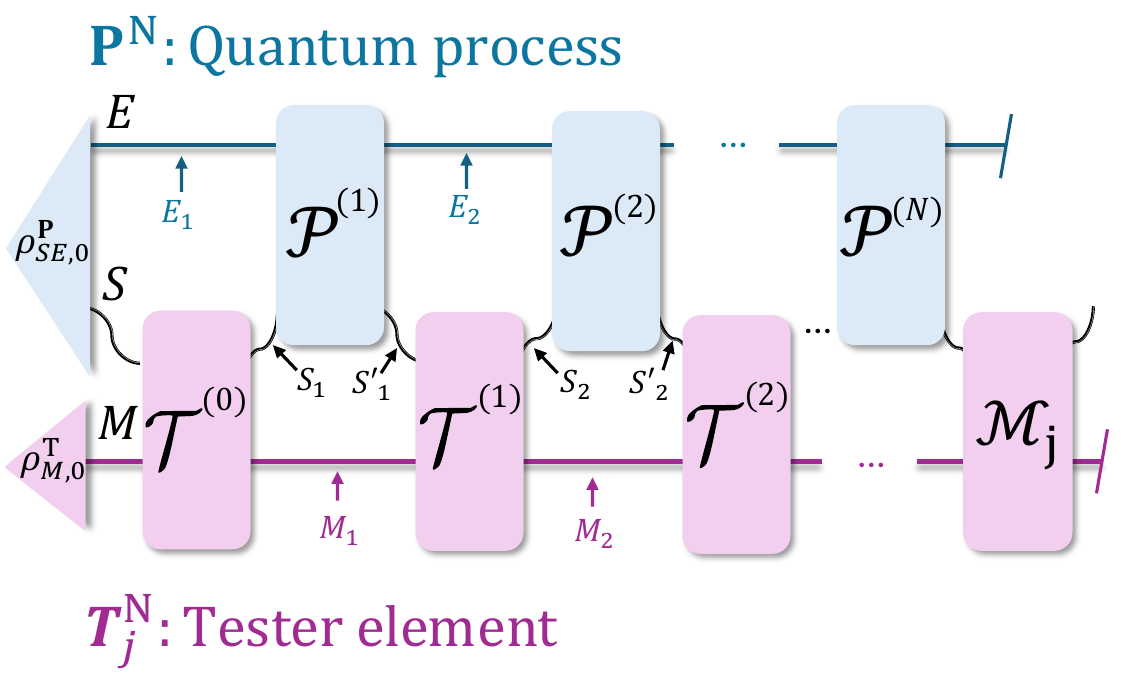}
  \caption{A quantum process over $N$ time steps can be represented as a quantum comb $\mathbf{P}^{N}$ (blue), with open ``slots'' on the system where interventions may be performed. A multi-time tester $\{\mathbf{T}^{N}_j\}_j$ (pink) is a correlated sequence of operations (potentially with memory) inserted into these slots to produce an outcome $j$. Without loss of generality, any tester can be written as a sequence of CPTP maps followed by a final quantum instrument $\{\mathcal{M}_j\}_j$, so that all intermediate measurements can be deferred to a single measurement at the end. The number of possible outcomes grows as the product of the number of outcomes at each time step, and is therefore generally exponential in $N$. Contracting a tester element with the comb yields the outcome probability $\Pr(j|\mathbf{P}^{N})$.}
  \label{fig:tester_and_comb}
\end{figure}

Multi-time quantum processes are naturally described within the process-tensor or \textit{quantum-comb} framework. We follow standard process-tensor / quantum-comb formalism and notation from
\cite{chiribella2009theoretical,pollock2018nonmarkovian,MilzPollockModi2016} which is also outlined in Appendix~\ref{app:choi-combs-testers}. An $N$-step quantum comb $\mathbf{P}^{N}$ represents the input--output behavior of an inaccessible system--environment evolution across $N$ intervention times. Operationally, $\mathbf{P}^{N}$ maps interventions on the system at different times, to combs, channels, states and outcome probabilities depending on the type of interventions applied. Any quantum comb can be realized by a sequence of completely positive trace preserving (CPTP) maps $\mathcal{P}^{(k)}\equiv\mathcal{P}^{(k:k+1)}$ which evolve the system and environment from time $k$ to $k+1$, connected via an environment,

\begin{align}\label{eqn:comb_as_sequence}
\mathbf{P}^{N}
:=
\tr_{E_{N+1}}\!\left(
\mathcal{P}^{(N)}\circ_{E_{N}}\dots\circ_{E_{2}} \mathcal{P}^{(1)}\circ_{E_1}[\rho_{SE,0}^{\mathbf{P}}]
\right),
\end{align}
where $\circ_{E_{k}}$ and $\tr_{E_{k}}$ denote the composition and partial trace over the environment space at the $k^{\text{th}}$ time step. One may equivalently think of the composition over the environment spaces as a link product of the corresponding Choi states (see Appendix~\ref{app:choi-combs-testers} for details). The state $\rho^{\mathbf{P}}_{SE,0}$ is the initial system-environment state specified by the process. We use the system, environment and memory space labeling as shown in Fig.~\ref{fig:tester_and_comb} throughout where the primed spaces $S^{'}_k$ denote the system output of the process at time $k$. Since the final environment is traced out and the maps are composed over the environment spaces, the process $\mathbf{P}^N$ specifies the effect of the environment, but is itself an operation only on the system between applications of the maps $P^{(k)}$. 
The corresponding Choi operator $\Upsilon^{\mathbf{P}^{N}}$ is positive semidefinite and satisfies multi-time causality and normalisation constraints \cite{chiribella2009theoretical}. The explicit Choi conventions and constraints are given in Appendix~\ref{app:choi-combs-testers}. To probe and extract information from such multi-time processes, one requires a
notion of multi-time measurements, which we introduce next.

\subsection{Testers as multi-time instruments}\label{sec:testers}
A \textit{tester} is the multi-time analogue of a quantum instrument. It specifies a (possibly correlated) sequence of operations that is inserted into the slots of a process to produce classical outcomes. Formally, a tester is a collection of CP (completely positive), trace-nonincreasing combs $\{\mathbf{T}^{N}_j\}_j$ whose sum $\mathbf{T}^{N}:=\sum_j \mathbf{T}^{N}_j$ is a causal and trace-preserving comb. 

Given a process comb $\mathbf{P}^{N}$, the probability of obtaining outcome $j$ is given by the contraction of $\mathbf{T}^{N}_j$ with $\mathbf{P}^{N}$,
\begin{equation}
\label{eq:born-rule-combs}
\begin{split}
\Pr(j|\mathbf{P}^{N}, \mathbf{T}^{N}) &=\tr(\mathbf{P}^{N}[\mathbf{T}^{N}_j])\\&= \tr\!\left(\Upsilon^{\mathbf{P}^{N}}\,(\Upsilon^{\mathbf{T}^{N}_j})^T\right),
\end{split}
\end{equation}
where the second line is expressed in terms of Choi operators of the process and tester (see Appendix~\ref{app:choi-combs-testers}).
In this sense, testers implement generalized measurements on processes. 

Testers can be realized by a sequence of CPTP maps $\{\mathcal{T}^{(k)}\}_{k=0}^{N}$ acting on the system and memory, but connected only via memory and are followed by a final quantum instrument $\{\mathcal{M}^{(N+1)}_{j}\}_j$. A tester element can thus be written as,
\begin{align}\label{eqn:tester_as_sequence}
\mathbf{T}^{N}_j
:=
\tr_{M_{N+2}}\!\left(
\mathcal{M}_{j}^{(N+1)}\!\circ_{M_{N+1}}\! \mathcal{T}_{}^{(N)}\!\circ_{M_{N}}\!\dots\circ_{M_{0}}\! \mathcal{T}_{}^{(0)}[\rho^{\mathbf{T}}_{M,0}]
\right),
\end{align}
where $\circ_{M_{k}}$ denotes composition over the memory space at the $k^{\text{th}}$ time step (see Fig.~\ref{fig:tester_and_comb}). The state $\rho^{\mathbf{T}}_{M,0}$ is the initial memory state specified by the tester. This mirrors the internal structure of the process to which it is applied.

Once the maps, $\{\mathcal{T}^{(k)}\}$ have acted in the slots of the process (at intermediate times between maps $\{\mathcal{P}^{(k)}\}$), the quantum instrument $\{\mathcal{M}^{(N+1)}_{j}\}_j$ acts on a final system-memory output state $\rho_{SM,N}$ as $\mathcal{J}[\rho_{SM,N}]:=\sum_j\mathcal{M}_j[\rho_{SM,N}]\otimes\dyad{j}$ such that, upon observing the outcome $j$, $\mathcal{J}$ updates a classical register to $\dyad{j}$ and the state $\rho_{SM,N}$ via the completely positive (CP) map $\mathcal{M}_j$~\cite{vieira_temporal_2022, Zonnios2025QuantumGeneration}. The instrument also induces a positive operator-valued measure (POVM) $\{E_j\}$ with elements $E_j=\sum_\alpha K_\alpha^{(j)\dag} K_\alpha^{(j)}$ specified for any choice of Kraus operators $K_\alpha$ of the corresponding map $\mathcal{M}_j$, $\mathcal{M}_j[\rho]=\sum_\alpha K_{\alpha}^{(j)}\rho K_\alpha^{(j)\dag}$. The probability of obtaining outcome $j$ when the system is in the state $\rho_{SM,N}$ follows the Born rule, $\Pr(j|{\rho_{SM,N}})=\tr(E_j\rho_{SM,N})$. 

Although one may consider performing measurements at multiple time steps, this construction allows all intermediate measurements to be deferred. In other words, any sequence of instruments applied during the process can be equivalently represented by a single measurement at the end, whose outcomes $j$ label the entire sequence of intermediate outcomes (see App.~\ref{app:testers}). Consequently, the number of possible outcomes $x_{0:N}:=(x_0,\dots,x_{N})$ grows as the product of the number of outcomes at each time step,  $\#\text{outcomes}=\prod_{k=0}^N|\mathcal{X}_k|$ (for output alphabets $\mathcal{X}_k$ at times $k$), and is therefore generally exponential in $N$ when the dimension of the system $d_{S_j}$ is fixed for all times $j$.

Testers thus provide the most general measurement framework for multi-time processes. In particular, they enable the formulation of discrimination tasks between quantum processes, which we quantify using the strategy-norm distance.

\subsection{The strategy-norm distance between quantum processes}

For quantum states, the standard operational measure of distinguishability is the trace distance, $\tfrac12\|\rho-\sigma\|_1$. It characterizes the optimal bias in binary state discrimination -- that is, the maximum advantage over random guessing in deciding whether the unknown state is $\rho$ or~$\sigma$. For quantum channels, the analogous measure is the diamond-norm distance, $\tfrac12\|\mathcal{P}-\mathcal{Q}\|_\diamond$, where the optimization allows for arbitrary input states, including ancillary systems, and final positive operator valued measure (POVMs)~\cite{gutoski2012measure,watrous2018theory}. The natural extension to multi-time quantum processes is the strategy-norm distance~\cite{gutoski2012measure,watrous2018theory}.

For two $N$-step processes $\mathbf{P}^{N}$ and $\mathbf{Q}^{N}$, this distance is defined as
\begin{equation}
\label{eq:strategy-norm}
\begin{split}
d_{\mathrm{str}}^{(N)}(\mathbf{P}^N,\mathbf{Q}^N)
&:=
\frac12|\mathbf{P}^N-\mathbf{Q}^N|_{\mathrm{str}}
\\&=
\sup_{\mathbf{T}^{N}_{0|1}}
\Big|\Pr(0|\mathbf{P}^{N})-\Pr(0|\mathbf{Q}^{N})\Big|
\\
&=\;
\sup_{\Upsilon^{\mathbf{T}^{N}_{0|1}}}\,
\Big|\mathrm{tr}\!\left((\Upsilon^{\mathbf{P}^{N}}-\Upsilon^{\mathbf{Q}^{N}})\,
(\Upsilon^{\mathbf{T}^{N}_{0|1}})^T\right)\Big|,
\end{split}
\end{equation}
where the optimization is over all adaptive probing strategies, equivalently over all testers, with a final binary outcome. The tester $\{\mathbf{T}^{N\star}_{0|1}\}$ achieving the strategy norm distance is referred to as the optimal tester~\cite{gutoski2012measure}. Although a tester may have many outcomes, binary discrimination only requires a binary decision. For any fixed tester, the optimal strategy assigns each outcome to the hypothesis with larger likelihood, inducing a coarse-graining of outcomes into two subsets. Therefore, the optimization can be restricted to two-outcome testers without loss of generality (see App.~\ref{app:binary_testers} for details). For $N=0$, this quantity reduces to the trace distance, $\tfrac12\|\rho-\sigma\|_1$, while for $N=1$ it reduces to the diamond-norm distance, $\tfrac12\|\mathcal{P}-\mathcal{Q}\|_\diamond$ (see App.~\ref{app:strategy-norm-reduces-trace-diamond} and Refs.~\cite{gutoski2012measure,watrous2018theory} for details). Moreover, it is contractive under physically valid superprocesses \cite{ZambonDistinguishability2024}.

The optimization in Eq.~\eqref{eq:strategy-norm} admits a semidefinite programming (SDP) formulation \cite{gutoski2012measure,Watrous2009SDP}, but the dimension of the underlying operator space grows exponentially with the number of time steps, as do the number of causality constraints imposed on the testers for increasing $N$. In particular, using state-of-the-art interior-point methods runtime typically grows as $\mathcal{O}(d_S^{12N})$ per iteration~\cite{Watrous2009SDP,Jiang2020AFI, mironowicz2024sdp} (see App.~\ref{app:strategy-norm-sdp} for details) where $d_S$ is the system dimension. This motivates restricting the class of admissible testers to those compatible with realistic experimental constraints, leading to the $\mathsf{MAD}$ framework introduced next. Specifically, we can overcome this unfavorable scaling by introducing a memory-constrained variant of the strategy-norm distance, based on restricted testers that approximate the optimal bias while alleviating the exponential complexity of the underlying SDP. This leads to substantially faster evaluation and facilitates simpler diagnostics for processes generated by repeated interactions.

\section{Distinguishability of quantum processes under constrained memory}
\label{sec:MAD-definition}

\subsection{Machine for autonomous distinction ($\mathsf{MAD}s$) and $\mathsf{MAD}$ testers }

\begin{figure}
    \centering
    \includegraphics[width=1\linewidth]{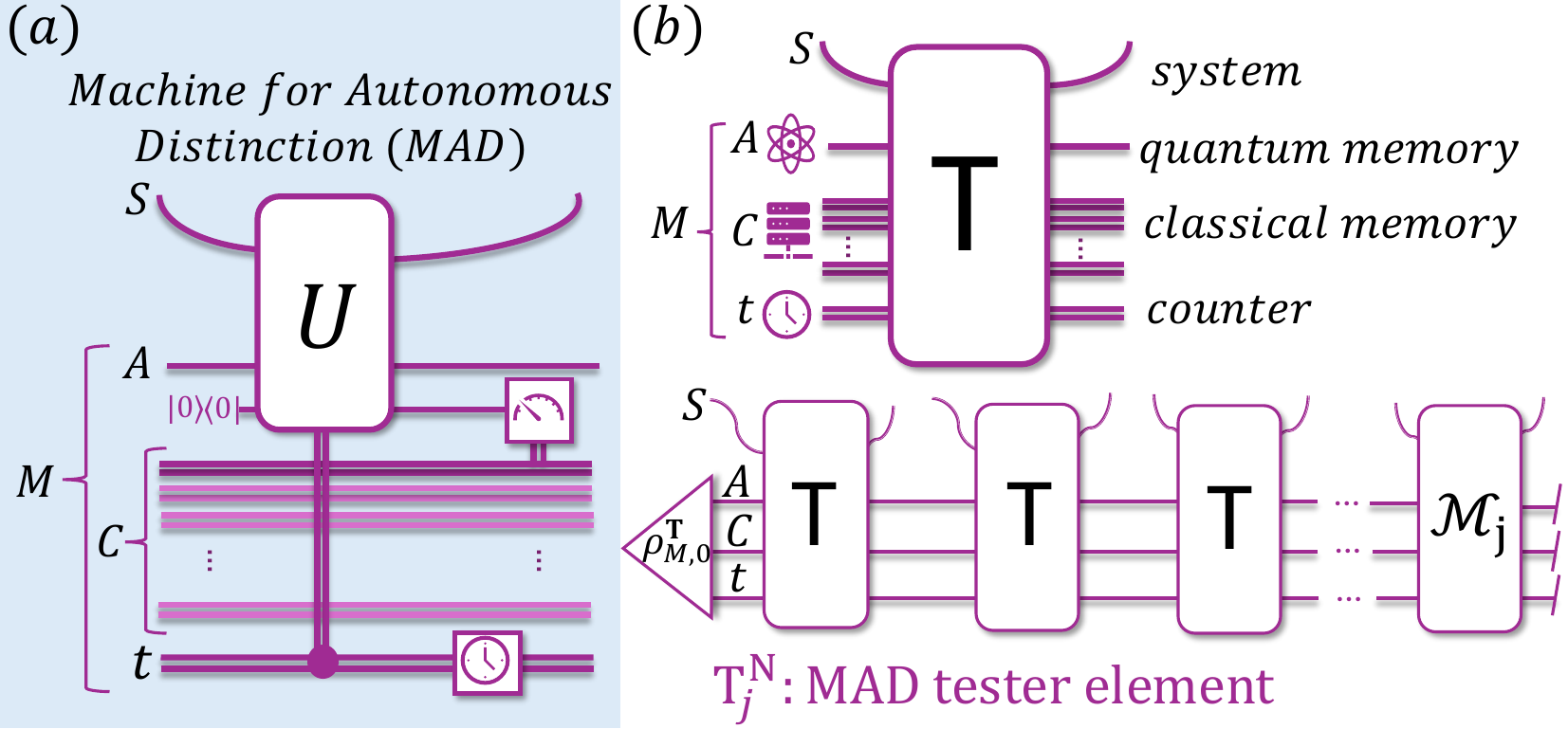}
    \caption{
(a) A \textit{machine for autonomous distinction} ($\mathsf{MAD}$) is a probing device modelled as a quantum instrument acting on the system and a coherent ancillary memory. Each use of the instrument can be realized by a unitary interaction with an auxiliary probe followed by a projective measurement on the probe, its nonselective action being CPTP. In the autonomous implementation shown here, the unitary acts on the system and quantum memory, while a classical counter specifies which output register stores the outcome and, if desired, allows explicit time dependence to be encoded internally.
(b) A $\mathsf{MAD}$ tester is obtained by repeatedly applying the same $\mathsf{MAD}$ to the process slots as a special case of $\mathbf{T}^N_j$ from Fig.~\ref{fig:tester_and_comb}. The tester carries a coherent quantum memory of dimension $d_A$, together with a classical memory large enough to store the outcome record for $N+1$ iterations and a counter that labels the time step. After the repeated interaction, a final distinguishing measurement via the quantum instrument $\{\mathcal{M}_j\}_j$ is performed on the accessible output. This defines a $\mathsf{MAD}$ tester for distinguishing between two processes $\mathbf{P}^N$ and $\mathbf{Q}^N$.
}
    \label{fig:MAD}
\end{figure}

We consider finite-dimensional internal memory consisting of a quantum ancilla of dimension $d_A$ together with classical registers used to store measurement outcomes. This construction reflects the fact that maintaining coherent quantum memory is typically costly, whereas classical information can be stored and propagated efficiently over many time steps~\cite{preskill2018nisq,taranto2021}. Recent work identifies memory as a fundamental resource in multi-time processes and characterizes hierarchies in terms of classical and quantum memory~\cite{taranto2024hierarchy,giarmatzi2023}. Our formulation exploits the fact that while optimal discrimination of quantum processes may in general require an exponentially large coherent memory~\cite{GutoskiWatrous2007,chiribella2008}, storing classical outcomes together with only a bounded amount of coherent memory can already yield good approximations to the full strategy-norm distance.

\begin{definition}[Machines for Autonomous Distinction]\needspace{3\baselineskip}
A \textit{machine for autonomous distinction} (a $\mathsf{MAD}$) is a quantum instrument that acts jointly on a system and coherent quantum memory of dimension $d_A$,
\begin{equation}
\label{eq:MAD-instrument}
\mathsf{T} \equiv \left(\{\mathcal{T}^{(x_k)}\}_{x_k,k}, N\right),
\end{equation}
where, for each $k=0,\ldots,N$, the maps ${\mathcal{T}^{(x_k)}}{x_k}$ form a valid quantum instrument. The $\mathsf{MAD}$ has access to a classical register with $N+1$ slots, each with dimension $|\mathcal{X}_k|$, to keep track of the outcomes at all integer times $k=0,\ldots,N$. The state of the counter determines which quantum instrument is applied and which slot of the classical register is updated. 
\end{definition}

Intuitively, a $\mathsf{MAD}$ can be seen as a fixed operation, device or quantum instrument that is used repeatedly at each time step, with all adaptivity encoded internally via memory registers. The overall evolution $\mathsf{T}$ remains fixed and identical at every step, while the counter determines the step index $k$ and hence the instrument $\{\mathcal{T}^{(x_k)}\}_{x_k}$ applied at that step. The classical register states are initialized as $\dyad{0}^{\otimes (N+1)}C$ and a time-keeping classical state is also initialized as $\dyad{0}t$. The quantum ancilla is initialized as $\rho_{A,0}$ and evolves coherently under the tester dynamics along with the initial state of the system, which is possibly entangled with the environment. The total initial state that we consider is then $\rho_{ESA,0}$ together with the counter and classical register. For the first application, the $\mathsf{MAD}$ acts on this state as
\begin{align}
\label{eq:mad-first-step}
\begin{split}
&\mathsf{T}\!\otimes\!\mathds{1}_E
\big[\rho_{ESA,0}\otimes\dyad{0}_t\!\otimes\!\dyad{0}^{\otimes (N+1)}_C\big]\
\\&\quad :=
\sum_{x_0\in\mathcal{X}_0}
\mathcal{T}^{(x_0)}\!\otimes\!\mathds{1}_E[\rho_{ESA,0}]
\!\otimes\!\dyad{1}_t\!\otimes\!\dyad{x_0}_C\!\otimes\!\dyad{0}_C^{\otimes N}
\
\\&\quad =
\sum_{x_0\in\mathcal{X}_0}
\hat{\rho}_{ESA,x_0}
\!\otimes\!\dyad{1}_t\!\otimes\!\dyad{x_0}_C\!\otimes\!\dyad{0}_C^{\otimes N}.
\end{split}
\end{align}

Here $\mathds{1}_E$ denotes the identity over the inaccessible environment and $\hat{\rho}_{ESA,x_0}$ denotes an $x_0$ output-dependent sub-normalised state on the system, environment and ancillary spaces. For each fixed $k$, the maps $\{\mathcal{T}^{(x_k)}\}_{x_k}$ form a valid quantum instrument, i.e. $\sum_{x_k}\mathcal{T}^{(x_k)}$ is completely positive and trace-preserving (CPTP) and thus $\tr\!\left(\sum_{x_0}\hat{\rho}_{ESA,x_0}\right)=1$. Consequently, the repeated composition of the corresponding nonselective maps is also CPTP. 

Operationally, a $\mathsf{MAD}$ is simply a quantum instrument whose classical outcomes are recorded in memory at each time step, while the system and ancilla evolve coherently. Explicit time dependence can be incorporated internally via the step index $k$, so that the overall evolution $\mathsf{T}$ remains fixed and identical at every step. Equivalently, a $\mathsf{MAD}$ can be realized by a fixed unitary interaction with an auxiliary probe, followed by a measurement, together with classical registers that store the outcomes as shown in Fig.~\ref{fig:MAD}. Besides the non-adaptive case in which the same instrument is applied at each step with fixed coherent memory, $\mathsf{MAD}$ testers can also represent adaptive testers with fixed coherent ancillary memory. Explicit time dependence can be encoded internally by the counter register, so that the operation applied at step $k$ may correspond to a different instrument, or equivalently to a different unitary dilation on the system and ancillary space, while the coherent memory dimension remains fixed. This construction, established formally in Sec.~\ref{sec:mad-finite-time-completeness}, shows that any finite-time tester can be implemented as a $\mathsf{MAD}$ tester given sufficiently large coherent memory.

Applying the same $\mathsf{MAD}$ repeatedly via Eq.~\eqref{eq:mad-first-step} to the outputs of a hypothesis process $\mathbf{H}^{N}\in\{\mathbf{P}^{N},\mathbf{Q}^{N}\}$ (as in Fig.~\ref{fig:tester_and_comb}) and tracing out the environment and counter yields the classical-quantum state
\begin{align}
\label{eqn:tester_process_state_evolution}
\begin{split}
\rho^{\mathsf{T},\mathbf{H}}_{SAC,N}
:=&
\tr_{E_{N+1},t}\big({H}^{(N)}\!\circ\!\mathsf{T}\!\circ\!\dots
\!\circ\!{H}^{(1)}\circ\mathsf{T}[\rho_{SE,0}^{\mathbf{H}} \otimes\rho_{M,0}^{\mathsf{T}}]\big)
\\=&
\sum_{x_{0:N}\in\mathcal{X}^{N+1}}\tr_{E_{N+1}}\big(\hat{\rho}_{ESA,x_{0:N}}^{\mathsf{T},\mathbf{H}}\big)\!\otimes\! \dyad{x_{0:N}}_C,
\end{split}
\end{align}
where $M$ is the total memory (coherent ancillary $A$, classical register $C$ and counter $t$) which the $\mathsf{MAD}$ $\mathsf{T}$ acts on; the initial memory state is given by $\rho_{M,0}^{\mathsf{T}}:=\rho_{A,0}^{\mathsf{T}}\otimes\dyad{0}_t\otimes\dyad{0}^{\otimes {N+1}}_C$; the maps $\{H^{(k)}\}_k$ act on the system and environment at time $t_k$ corresponding to the process $\mathbf{H}^N$; and $\rho_{SE,0}^{\mathbf{H}}$ is the hypothesis-specified initial state on the system and environment. At each step, $\mathsf{T}$ acts on the system and memory, while the hypothesis map acts on the system and environment (the identities on the environment $\mathds{1}_E$ and $\mathds{1}_M$ memory respectively are suppressed in Eq.~\eqref{eqn:tester_process_state_evolution}). The environment is updated by the process, the coherent memory is updated by the $\mathsf{MAD}$, and the system is updated by both. A classical outcome $x_k$ is appended to a classical register that is stored for processing at some designated final time. After $N$ steps, the induced output under hypothesis $\mathbf{H}^{N}$ is given by tracing out the environment and the counter to give the classical-quantum state
\begin{equation}
\label{eq:cq-output}
\rho_{SAC,N}^{\mathsf{T},\mathbf{H}}
=
\sum_{x_{0:N}\in\mathcal{X}^{N+1}}
|x_{0:N}\rangle\!\langle x_{0:N}|_{C}\otimes \hat{\sigma}^{\mathsf{T},\mathbf{H}}_{x_{0:N}},
\end{equation}
where $|x_{0:N}\rangle$ is the length-$N$ record and $\hat{\sigma}^{\mathsf{T},\mathbf{H}}_{x_{0:N}}=\mathbf{P}^{N}[\mathbf{T}^{N}_{x_{0:N}}]$ is the corresponding subnormalized final state on the system and quantum ancillary space corresponding to the outcome sequence $x_{0:N}\equiv(x_0,x_1,\dots,x_N)$ after tracing out the inaccessible environment. The set of output states for a given tester satisfy, 
\begin{align}
\begin{split}
    \sum_{x_{0:N}}\tr(\hat{\sigma}^{\mathsf{T},\mathbf{H}}_{x_{0:N}} )&=\sum_{x_{0:N}} \tr\left(\mathbf{H}^{N}[\mathbf{T}^{N}_{x_{0:N}}]\right) \\&= \sum_{x_{0:N}} \Pr(x_{0:N}|\mathbf{H}^N,\mathbf{T}^N) \\&= 1.
\end{split}
\end{align}

For a fixed $\mathsf{MAD}$ tester, the two hypotheses $\mathbf{P}^{N}$ and $\mathbf{Q}^{N}$ induce two final output states $\rho_{N}^{\mathsf{T},\mathbf{P}}$ and $\rho_{N}^{\mathsf{T},\mathbf{Q}}$ where we have suppressed the label $SAC$. The discrimination
problem for the processes therefore reduces to the standard binary discrimination problem for these two states. With priors $p_0$ and $q_0$, the optimal measurement is the Helstrom
measurement~\cite{helstrom1969}, given by the positive and negative eigenspaces of $p_0\rho_{N}^{\mathsf{T},\mathbf{P}}-q_0\rho_{N}^{\mathsf{T},\mathbf{Q}}$. In the following we take equal priors, $p_0=q_0=1/2$. The optimal success probability for distinguishing the two processes using the repeated $\mathsf{MAD}$ $\mathsf{T}$ is then
\begin{align}
\label{eq:helstrom-success}  
p_{\mathrm{succ}}^*(\mathbf{P}^{N},\mathbf{Q}^{N},\mathsf{T})
&= \frac12\Big(1+\frac12\big\|\rho^{\mathsf{T},\mathbf{P}}_{N}-\rho^{\mathsf{T},\mathbf{Q}}_{N}\big\|_1\Big).
\end{align}

\begin{definition}[$\mathsf{MAD}$ testers]
A length-$N$ \textit{$\mathsf{MAD}$ tester}~$(\mathsf{T},\rho_{M,0}^{\mathsf{T}})$ is a quantum tester obtained by specifying an initial quantum ancillary state~$\rho_{A,0}^{\mathsf{T}}$, and consequently the initial memory state $\rho_{M,0}^{\mathsf{T}}:=\rho_{A,0}^{\mathsf{T}}\otimes\dyad{0}_t\otimes\dyad{0}^{\otimes N}_C$, applying the same $\mathsf{MAD}$ $\mathsf{T}$ for $N$ time steps, with memory connecting successive uses, followed by a final measurement. For the binary discrimination task between $\mathbf{P}^{N}$ and $\mathbf{Q}^{N}$, the final measurement is the Helstrom measurement between the respective states $\rho^{\mathsf{T},\mathbf{P}}_{N}$ and $\rho^{\mathsf{T},\mathbf{Q}}_{N}$ generated by Eqs.~\eqref{eqn:tester_process_state_evolution} and~\eqref{eq:cq-output}.
\end{definition}

We denote the class of all $\mathsf{MAD}$ testers with coherent memory dimension $d_A$ by
\begin{align}
\label{eq:mad-class}
\mathsf{MAD}(d_A)
:=
\{(\mathsf{T},\rho_{A,0}^{\mathsf{T}}) \text{ with } \dim(A)=d_A\}.
\end{align}

In practice, $\mathsf{MAD}$ testers are autonomous process discrimination strategies for quantum processes that repeatedly apply the same $\mathsf{MAD}$ at each time step without external time-dependent control, so all adaptivity is mediated by internal registers. The classical register stores outcomes, while the coherent memory propagates quantum information and correlations between steps, inducing a natural resource hierarchy indexed by the coherent memory dimension $d_A$. Operationally, this restriction isolates memory as the resource governing temporal correlations while allowing unrestricted classical post-processing~\cite{holevo2011,bae2015}. 

The MAD construction therefore identifies a restricted, but operationally meaningful, class of probing strategies in which the probing strategy may retain an unrestricted classical record, but only a bounded amount of coherent quantum memory. This restriction is useful for two reasons. First, it separates the information stored as classical outcomes from the information carried coherently between interventions. The latter is precisely the resource that allows a tester to access quantum temporal correlations, and its dimension controls which recovery, simulation, and discrimination tasks can be performed on multi-time processes~\cite{milz2020,taranto2021,taranto2024hierarchy,ohst2026}. Second, it provides a practical alternative to optimizing over the full class of testers. Although the strategy-norm distance gives the optimal distinguishing bias, evaluating it requires an optimization over general adaptive testers whose memory dimension and constraint structure grow as $\sim d_S^{2N}$ and $\sim d_S^{4N}$, respectively, where $d_S$ is the system dimension and $N$ is the number of interventions. The solution admits a semidefinite programming formulation which, even with state-of-the-art interior-point methods, leads to a runtime scaling of order $\mathcal{O}(d_S^{12N})$~\cite{Jiang2020AFI}. This quickly becomes impractical beyond small $N$. The natural next step is therefore to define a distinguishability measure obtained by optimizing only over MAD testers with a fixed coherent memory dimension $d_A$, which we do in the following section.

\subsection{$\mathsf{MAD}$ distinguishability}
\label{sec:mad_distinguishability}

We now introduce a memory-constrained version of the strategy-norm distance between quantum processes. First, we show that the measure defines a bounded operational pseudometric on processes. Then, we show that it forms a hierarchy in the coherent memory dimension and is always upper bounded by the full strategy-norm distinguishability.

\begin{definition}[$\mathsf{MAD}$ distinguishability]\needspace{3\baselineskip}
Fix any two processes $\mathbf{P}^{N}$ and $\mathbf{Q}^{N}$. The \textit{$\mathsf{MAD}$ distinguishability} between the two processes is the optimal bias achievable by any $\mathsf{MAD}$ tester with coherent memory dimension $d_A$,
\begin{equation}
\label{eq:drec-def}
d_{\mathsf{MAD}}^{(N)}(\mathbf{P}^{N},\mathbf{Q}^{N};d_A)
:=
\sup_{(\mathsf{T},\rho_{M,0}^{\mathsf{T}})\in \mathsf{MAD}(d_A)}
\frac12 \big\| \rho^{\mathsf{T},\mathbf{P}}_{N} - \rho^{\mathsf{T},\mathbf{Q}}_{N} \big\|_1,
\end{equation}
where the respective states $\rho^{\mathsf{T},\mathbf{P}}_{N}$ and $\rho^{\mathsf{T},\mathbf{Q}}_{N}$ are generated by Eqs.~\eqref{eqn:tester_process_state_evolution} and~\eqref{eq:cq-output}.
\end{definition}

\begin{proposition}[Pseudometric properties of $\mathsf{MAD}$ distinguishability]
\label{prop:basic-properties-dmad}
For fixed $N$ and $d_A$, $d_{\mathsf{MAD}}^{(N)}(\cdot,\cdot;d_A)$ defines a bounded operational pseudometric on processes. In particular, for any processes $\mathbf{P}^{N}$, $\mathbf{Q}^{N}$, and $\mathbf{R}^{N}$,
\begin{enumerate}
    \item \textit{Boundedness:}
    \[
    0\le d_{\mathsf{MAD}}^{(N)}(\mathbf{P}^{N},\mathbf{Q}^{N};d_A)\le 1.
    \]
    \item \textit{Symmetry:}
    \[
    d_{\mathsf{MAD}}^{(N)}(\mathbf{P}^{N},\mathbf{Q}^{N};d_A)
    =
    d_{\mathsf{MAD}}^{(N)}(\mathbf{Q}^{N},\mathbf{P}^{N};d_A).
    \]
    \item \textit{Triangle inequality:}
    \[
    \begin{split}
    \label{eq:dmad-triangle}
    d_{\mathsf{MAD}}^{(N)}(\mathbf{P}^{N},\mathbf{R}^{N};d_A)
    \le
    &d_{\mathsf{MAD}}^{(N)}(\mathbf{P}^{N},\mathbf{Q}^{N};d_A)\\
    &+
    d_{\mathsf{MAD}}^{(N)}(\mathbf{Q}^{N},\mathbf{R}^{N};d_A).
    \end{split}
    \]
\end{enumerate}
\end{proposition}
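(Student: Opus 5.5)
The plan is to reduce all three properties to the corresponding properties of the trace distance on the induced output states $\rho^{\mathsf{T},\mathbf{H}}_N$. Everything then follows from the fact that, for each fixed $\mathsf{MAD}$ tester $(\mathsf{T},\rho^{\mathsf{T}}_{M,0})\in\mathsf{MAD}(d_A)$, the map $\mathbf{H}^N\mapsto\rho^{\mathsf{T},\mathbf{H}}_N$ is a well-defined assignment of a normalized classical-quantum state to each process. The tester class itself does not depend on the pair of processes being compared. This last point is what makes the supremum well behaved.

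\textbf{Boundedness.} By Eq.~\eqref{eqn:tester_process_state_evolution} and the normalization $\sum_{x_{0:N}}\tr(\hat\sigma^{\mathsf{T},\mathbf{H}}_{x_{0:N}})=1$ shown above, each $\rho^{\mathsf{T},\mathbf{H}}_N$ is a density operator. The composition of CPTP nonselective maps is CPTP, so it is positive semidefinite with unit trace. For density operators $\rho,\sigma$ we have $0\le\frac12\|\rho-\sigma\|_1\le\frac12(\|\rho\|_1+\|\sigma\|_1)=1$. Taking the supremum over a nonempty set preserves both bounds. The set is nonempty because, for instance, the identity instrument with a single outcome and any ancilla state is an element.

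\textbf{Symmetry.} For each fixed tester, $\|\rho^{\mathsf{T},\mathbf{P}}_N-\rho^{\mathsf{T},\mathbf{Q}}_N\|_1=\|\rho^{\mathsf{T},\mathbf{Q}}_N-\rho^{\mathsf{T},\mathbf{P}}_N\|_1$, since $\|X\|_1=\|-X\|_1$. The supremum runs over the same set $\mathsf{MAD}(d_A)$ in both cases, so the two values coincide.

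\textbf{Triangle inequality.} Fix an arbitrary tester $(\mathsf{T},\rho^{\mathsf{T}}_{M,0})\in\mathsf{MAD}(d_A)$. The same tester generates all three states $\rho^{\mathsf{T},\mathbf{P}}_N$, $\rho^{\mathsf{T},\mathbf{Q}}_N$ and $\rho^{\mathsf{T},\mathbf{R}}_N$ on the common space $SAC$. The triangle inequality for $\|\cdot\|_1$ gives
\[
\tfrac12\|\rho^{\mathsf{T},\mathbf{P}}_N-\rho^{\mathsf{T},\mathbf{R}}_N\|_1\le\tfrac12\|\rho^{\mathsf{T},\mathbf{P}}_N-\rho^{\mathsf{T},\mathbf{Q}}_N\|_1+\tfrac12\|\rho^{\mathsf{T},\mathbf{Q}}_N-\rho^{\mathsf{T},\mathbf{R}}_N\|_1.
\]
Each term on the right is bounded by the corresponding supremum, $d^{(N)}_{\mathsf{MAD}}(\mathbf{P}^N,\mathbf{Q}^N;d_A)$ and $d^{(N)}_{\mathsf{MAD}}(\mathbf{Q}^N,\mathbf{R}^N;d_A)$. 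The resulting bound holds for every tester, so taking the supremum of the left-hand side over $\mathsf{MAD}(d_A)$ yields the claim.

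\textbf{Main obstacle.} The only delicate point is to make sure the intermediate process $\mathbf{Q}^N$ is probed by the \emph{same} tester as $\mathbf{P}^N$ and $\mathbf{R}^N$. This requires that the output state depends on the tester only through $(\mathsf{T},\rho^{\mathsf{T}}_{M,0})$, and not on the hypothesis pair. The Helstrom measurement does depend on the pair, but it has already been optimized away into the trace norm in Eq.~\eqref{eq:helstrom-success}. So I would state explicitly that $\mathsf{MAD}(d_A)$ is independent of the processes and that the trace-norm expression is the optimized bias.

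Finally, I would remark on why $d^{(N)}_{\mathsf{MAD}}$ is only a pseudometric rather than a metric. Distinct processes can give zero value, for example if they differ only in environment degrees of freedom invisible on $S$, or when $d_A$ is too small to reveal the difference. This also explains why no identity-of-indiscernibles property is claimed in the statement.
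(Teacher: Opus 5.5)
Your proof is correct and is essentially the paper's argument. Boundedness and symmetry come from the trace norm on the normalized classical--quantum outputs, and the triangle inequality follows by probing all three processes with one fixed tester in $\mathsf{MAD}(d_A)$, applying the trace-norm triangle inequality, and passing to suprema (bounding each right-hand term by its supremum is the paper's $\sup_T(A_T+B_T)\le\sup_T A_T+\sup_T B_T$).

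I would drop your closing remark, because neither of its examples gives distinct processes at zero distance. Dilations that differ only in environment degrees of freedom invisible on $S$ define the same comb. And with unrestricted outcome alphabets, an informationally complete measure-and-prepare instrument, used with no coherent memory at all, already gives nonzero bias for any two distinct combs.
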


\begin{proof}
Boundedness follows since $d_{\mathsf{MAD}}^{(N)}$ is one half of a trace distance between two final classical--quantum states, hence lies in $[0,1]$. Symmetry is immediate from $\|\rho-\sigma\|_1=\|\sigma-\rho\|_1$. For the triangle inequality, fix any $\mathsf{MAD}$ tester $\mathsf{T}$ and let $\rho^{\mathsf{T},\mathbf{H}}_{N}$ denote the corresponding final classical--quantum state under hypothesis $\mathbf{H}^{N}$. Then
\begin{align}
\frac12\|\rho^{\mathsf{T},\mathbf{P}}_{N}-\rho^{\mathsf{T},\mathbf{R}}_{N}\|_1
\le
\frac12\|\rho^{\mathsf{T},\mathbf{P}}_{N}-\rho^{\mathsf{T},\mathbf{Q}}_{N}\|_1
+
\frac12\|\rho^{\mathsf{T},\mathbf{Q}}_{N}-\rho^{\mathsf{T},\mathbf{R}}_{N}\|_1
\end{align}
by the triangle inequality for the trace distance. Taking the supremum over $(\mathsf{T},\rho_{M,0}^{\mathsf{T}})\in\mathsf{MAD}(d_A)$ and using $\sup_T(A_T+B_T)\le \sup_T A_T+\sup_T B_T$ yields property~\ref{eq:dmad-triangle}.
\end{proof}

Consequently, for fixed $N$ and $d_A$, $d_{\mathsf{MAD}}^{(N)}(\cdot,\cdot;d_A)$ satisfies nonnegativity, symmetry, and the triangle inequality, but may assign zero distance to distinct processes that are indistinguishable within the $\mathsf{MAD}(d_A)$ class.

\begin{proposition}[Hierarchy properties]
\label{prop:mad-hierarchy-properties}
The $\mathsf{MAD}$ distinguishability satisfies:
\begin{enumerate}
    \item \textit{Monotonicity in coherent memory:} If $d_A'\ge d_A$, then
    \[
    d_{\mathsf{MAD}}^{(N)}(\mathbf{P}^{N},\mathbf{Q}^{N};d_A')
    \ge
    d_{\mathsf{MAD}}^{(N)}(\mathbf{P}^{N},\mathbf{Q}^{N};d_A).
    \]
    \item \textit{Monotonicity in time:} If $\mathbf{P}^{N}$ and $\mathbf{Q}^{N}$ are the $N$-step restrictions of consistent $(N+1)$-step processes $\mathbf{P}^{N+1}$ and $\mathbf{Q}^{N+1}$, then
    \[
    d_{\mathsf{MAD}}^{(N)}(\mathbf{P}^{N},\mathbf{Q}^{N};d_A)
    \le
    d_{\mathsf{MAD}}^{(N+1)}(\mathbf{P}^{N+1},\mathbf{Q}^{N+1};d_A).
    \]
    \item \textit{Upper bound by the strategy-norm distance:} For all $d_A$,
    \[
    \label{eq:dmad-leq-dstr}
    d_{\mathsf{MAD}}^{(N)}(\mathbf{P}^{N},\mathbf{Q}^{N};d_A)
    \le
    d_{\mathrm{str}}^{(N)}(\mathbf{P}^{N},\mathbf{Q}^{N}),
    \]
    where $d_{\mathrm{str}}^{(N)}(\mathbf{P}^{N},\mathbf{Q}^{N})$ denotes the optimal distinguishability over the full class of admissible testers, equivalently the strategy-norm-defined distinguishability.
\end{enumerate}
\end{proposition}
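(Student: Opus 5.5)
All three properties follow from simulation arguments: in each case I exhibit, for every tester in one class, a tester in the other class that reproduces its final classical--quantum output (or an output from which the original one can be recovered by a CPTP map). The bounds then follow from the definition in Eq.~\eqref{eq:drec-def} together with contractivity of the trace distance.

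\textbf{Monotonicity in coherent memory.} Given $(\mathsf{T},\rho_{A,0}^{\mathsf{T}})\in\mathsf{MAD}(d_A)$ and $d_A'\ge d_A$, I embed $A$ isometrically into $A'=A\oplus A^\perp$. The initial state becomes $\rho_{A',0}=V\rho_{A,0}^{\mathsf{T}}V^\dagger$, and each instrument element is extended as $\mathcal{T}'^{(x_k)}[X]=(\mathds{1}_S\otimes V)\,\mathcal{T}^{(x_k)}[(\mathds{1}_S\otimes V^\dagger)X(\mathds{1}_S\otimes V)]\,(\mathds{1}_S\otimes V)^\dagger$. To restore trace preservation on inputs supported on $A^\perp$, I add a term $\mathcal{D}[(\mathds{1}_S\otimes\Pi_\perp)X(\mathds{1}_S\otimes\Pi_\perp)]$ to one fixed outcome, with $\mathcal{D}$ any CPTP map. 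Because the ancilla never leaves $V(A)$ along the actual trajectory, the output states are exactly $(\mathds{1}\otimes V)\rho_N^{\mathsf{T},\mathbf{H}}(\mathds{1}\otimes V)^\dagger$. Conjugation by an isometry preserves the trace norm, so the same bias is attained in $\mathsf{MAD}(d_A')$, and taking the supremum gives the inequality.

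\textbf{Monotonicity in time.} Given an $N$-step tester, I build an $(N+1)$-step tester that uses the same instruments for $k\le N$ and, at step $N+1$, a trivial one-outcome identity instrument. The counter makes this step-dependence admissible within a single $\mathsf{MAD}$. The extra classical slot is deterministic. The process map $H^{(N+1)}$ acts on $S$ and $E$, and $E$ is then traced out. Hence the reduced $SAC$ output of the $(N+1)$-step experiment, restricted to the first $N+1$ register slots, equals $\mathcal{H}\otimes\mathrm{id}_{AC}[\rho_N^{\mathsf{T},\mathbf{H}}]$ for the system channel $\mathcal{H}$ obtained from the last step. A priori $\mathcal{H}$ could depend on $\mathbf{H}$, so the output alone would not suffice. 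This is the main obstacle, and it is resolved by consistency of the restriction. The $N$-step comb of Eq.~\eqref{eqn:comb_as_sequence} contracted with the first $N$ interventions already yields, as its final open output, the state on $S'_N AC$. Any state-based distinguisher on that output can be realized inside step $N+1$ of the longer tester by swapping the system output into enlarged memory, or equivalently by measuring the Helstrom POVM as the instrument at step $N+1$ and recording the result classically. One must check that this fits the memory constraint. If the swap would exceed $d_A$, I instead use the Helstrom measurement at step $N+1$ acting on $S A$ conditioned on $C$, which needs no extra memory. The trace-norm bias is then preserved, and taking the supremum gives the claim.

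\textbf{Upper bound by the strategy norm.} Any $\mathsf{MAD}$ tester followed by the Helstrom measurement is, by Eq.~\eqref{eqn:tester_as_sequence} with memory $M=A\otimes t\otimes C$, a valid two-outcome tester $\{\mathbf{T}^N_{0},\mathbf{T}^N_1\}$. Its bias $\tfrac12\|\rho_N^{\mathsf{T},\mathbf{P}}-\rho_N^{\mathsf{T},\mathbf{Q}}\|_1=|\Pr(0|\mathbf{P}^N)-\Pr(0|\mathbf{Q}^N)|$ is one of the candidates in the supremum of Eq.~\eqref{eq:strategy-norm}. Taking the supremum over $\mathsf{MAD}(d_A)$ yields $d_{\mathsf{MAD}}^{(N)}\le d_{\mathrm{str}}^{(N)}$.
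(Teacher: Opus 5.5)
Parts 1 and 3 are correct and match the paper's argument: the ancilla is embedded isometrically, and $\mathsf{MAD}(d_A)$ is a subclass of all testers. Your explicit trace-preserving completion on $A^\perp$ fills in detail the paper leaves implicit.

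The gap is in Part 2. The paper handles it in one line: regard the $N$-step tester as an $(N+1)$-step tester that ignores the extra slot, then use data processing. You rightly see that this is not a hypothesis-independent post-processing, because the extra process map acts on $S$ after your last intervention. Your intermediate expression $\mathcal{H}\otimes\mathrm{id}_{AC}[\rho_N^{\mathsf{T},\mathbf{H}}]$ is also not correct: $S$ is correlated with $E$, so the final step does not even induce a channel on the reduced $SAC$ state.

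Your fix requires the step-$(N+1)$ instrument to be the Helstrom measurement of a classical--quantum state. That is the block-diagonal POVM $\sum_{x}|x\rangle\langle x|_C\otimes\Pi_x$, with $\Pi_x$ projecting onto the positive part of $\hat{\sigma}^{\mathsf{T},\mathbf{P}}_{x}-\hat{\sigma}^{\mathsf{T},\mathbf{Q}}_{x}$. It is an instrument on $SA$ selected by the stored record $x_{0:N}$, and the $\mathsf{MAD}$ as formally defined does not grant this. Only the counter selects the instrument, Eq.~\eqref{eq:mad-first-step} and Appendix~\ref{app:mad-simulates-tester} apply $\mathcal{T}^{(x_k)}$ independently of earlier outcomes, and the record is read only by the final measurement. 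Your swap alternative needs coherent memory of dimension $d_A d_S$. It therefore only yields $d^{(N)}_{\mathsf{MAD}}(\mathbf{P}^N,\mathbf{Q}^N;d_A)\le d^{(N+1)}_{\mathsf{MAD}}(\mathbf{P}^{N+1},\mathbf{Q}^{N+1};d_Ad_S)$.

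This is not a removable technicality: without classical feed-forward from the register, the inequality can fail. Take the following example.
\begin{itemize}
\item Use a trivial ancilla (no coherent memory) and the initial state $\frac12\sum_{e}|e\rangle\langle e|_S\otimes|e\rangle\langle e|_E$.
\item The first process map discards $S$, reads $e$ from $E$, and for $e=0,1$ prepares $|+\rangle,|0\rangle$ under $\mathbf{P}$ or $|-\rangle,|1\rangle$ under $\mathbf{Q}$.
\item The longer processes append a map that fully depolarizes $S$.
\end{itemize}
For the short processes, measure $S$ in the computational basis at the first intervention and let the final Helstrom measurement condition on $x_0$; this gives bias $1$. In the longer experiment, everything must be extracted by a POVM $\{F_y\}$ on $S$ at the second intervention, and this POVM cannot depend on $x_0$. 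The bias is then at most $\frac14\sum_y\big(|\tr(F_yX)|+|\tr(F_yZ)|\big)\le 1/\sqrt2$.

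So you must state classical feed-forward as an explicit extension of the $\mathsf{MAD}$ model. With it, your Part 2 argument is correct and sharper than the paper's. Under the definition as written, it does not establish the claim, and the paper's one-line proof passes over exactly the obstruction you identified.
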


\begin{proof}
For monotonicity in $d_A$, embed any $\mathsf{MAD}$ tester with coherent memory dimension $d_A$ into one with dimension $d_A'\ge d_A$; this enlarges the optimization domain and cannot reduce the supremum. The upper bound by the strategy-norm distance holds because $\mathsf{MAD}(d_A)$ is a subset of the full class of admissible testers. For monotonicity in time, any $N$-step $\mathsf{MAD}$ tester can be regarded as an $(N+1)$-step tester that ignores the additional step, or equivalently discards the final-step registers. Since trace distance is monotonic under CPTP maps~\cite{Nielsen_Chuang_2010,watrous2018theory}, the optimal distinguishing advantage cannot decrease when an additional process time is made available.
\end{proof}

\subsection{Finite-time completeness of the $\mathsf{MAD}$ hierarchy}
\label{sec:mad-finite-time-completeness}

A natural question is whether the $\mathsf{MAD}$ restriction reduces expressive power relative to general multi-time testers. This is addressed by work on constrained discrimination strategies~\cite{nakahira2021restricted,ohst2026}, which shows that performance is governed by quantum memory rather than control complexity. The next result establishes that, for any finite time, the $\mathsf{MAD}$ restriction is fully general in that any tester can be realized as repeated use of a single recurrent $\mathsf{MAD}$ with sufficiently large coherent memory.

\begin{theorem}[Compilation of any tester into a $\mathsf{MAD}$ tester]
\label{thm:mad-compilation}
Fix a time $N$ and let $\{\mathbf{T}^{N}_j\}_j$ be any admissible $N$-step tester. Then there exists a $\mathsf{MAD}$ tester $(\mathsf{T},\rho_{A,0}^{\mathsf{T}})$ such that, for every $N$-step process $\mathbf{P}^{N}$, the induced outcome statistics coincide,
\begin{equation}
\label{eq:mad-compilation-stats}
\Pr(j \mid \mathbf{P}^{N}, \{\mathbf{T}^{N}_j\}_j)
=
\Pr(j \mid \mathbf{P}^{N}, (\mathsf{T},\rho_{A,0}^{\mathsf{T}}))
\qquad \forall\, j .
\end{equation}
\end{theorem}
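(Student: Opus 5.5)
Our plan is to start from the sequential realization of an arbitrary tester given in Eq.~\eqref{eqn:tester_as_sequence}, and then absorb its time dependence into the counter and its memory into the coherent ancilla. Every admissible tester $\{\mathbf{T}^{N}_j\}_j$ can be written as an initial memory state $\rho^{\mathbf{T}}_{M,0}$, a sequence of CPTP maps $\mathcal{T}^{(0)},\dots,\mathcal{T}^{(N)}$ acting on system and memory, and a final instrument $\{\mathcal{M}^{(N+1)}_j\}_j$. We rely on the standard comb realization theorem, as in Refs.~\cite{chiribella2009theoretical,GutoskiWatrous2007}. It guarantees that this can be done with finite memory dimensions $d_{M_k}$, for instance $d_{M_k}\le \prod_{l\le k} d_{S_l}d_{S'_l}$ via the Choi/Stinespring construction. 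We then set $d_A:=\max_k d_{M_k}$, or take an isometric embedding of every $M_k$ into a common space $A$ of that dimension. Maps defined on the smaller subspaces are extended to CPTP maps on all of $A$, for example by acting as any fixed channel on the orthogonal complement. This extension is harmless because the state never leaves the embedded subspace.

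\textbf{Defining the $\mathsf{MAD}$.} For each counter value $k=0,\dots,N$ we let the $\mathsf{MAD}$ use a trivial single-outcome instrument, $|\mathcal{X}_k|=1$, with $\mathcal{T}^{(x_k)}:=\mathcal{T}^{(k)}$ transported to $S\otimes A$. This is exactly what the definition permits, since the counter selects the instrument at step $k$. The initial ancilla is $\rho^{\mathsf{T}}_{A,0}:=\rho^{\mathbf{T}}_{M,0}$, embedded in $A$.

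\textbf{Matching the final measurement.} The final instrument $\{\mathcal{M}_j\}$ can be handled in either of two ways. The first is to fold it into the last step: take $\mathcal{X}_N$ to be the outcome set $\{j\}$ and set $\mathcal{T}^{(x_N=j)}:=\mathcal{M}_j\circ\mathcal{T}^{(N)}$, which is a valid instrument because $\sum_j\mathcal{M}_j$ is CPTP. The second is to note that the $\mathsf{MAD}$ tester's terminal measurement on $SAC$ may be taken to be $\{\mathcal{M}_j\}$ itself. We prefer the first option, because it makes the $j$-statistics appear directly in the classical register $C$. The tracing and measurement conventions of Eqs.~\eqref{eqn:tester_process_state_evolution}--\eqref{eq:cq-output} then give $\Pr(x_{0:N}\mid\mathbf{P}^N,\mathsf{T})$, which is concentrated on $x_N=j$.

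\textbf{Verifying the statistics.} We expand Eq.~\eqref{eqn:tester_process_state_evolution} step by step. Because the counter is classical and advances deterministically, the $k$-th application of $\mathsf{T}$ acts on $S\otimes A$ as $\mathcal{T}^{(k)}$, interleaved with the process maps $\mathcal{P}^{(k)}$ on $SE$. This is term by term the same composition as contracting Eq.~\eqref{eqn:tester_as_sequence} with Eq.~\eqref{eqn:comb_as_sequence}. The link-product identity of Eq.~\eqref{eq:born-rule-combs} then yields Eq.~\eqref{eq:mad-compilation-stats} for every $\mathbf{P}^N$, since nothing in the construction depended on the process.

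\textbf{Main obstacle.} We expect the delicate step to be the first one: ensuring that a general tester, specified only through its Choi operator satisfying the causality constraints, actually admits a sequential realization with finite, time-independent-dimension memory. Equally, we must check that the transport to a common $A$ preserves complete positivity and trace preservation. For the first point we would invoke the realization theorem for combs. For the second we would use isometries $V_k:M_k\to A$ together with the extension on complements described above, and check that $V_{k+1}\mathcal{T}^{(k)}V_k^\dagger$ composes correctly.
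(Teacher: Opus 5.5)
Your proposal is correct and follows essentially the same route as the paper's proof (Appendix~\ref{app:mad-simulates-tester}). Both arguments invoke the sequential realisation of testers, embed all intermediate memories into one coherent ancilla of dimension $\max_k \dim \mathcal{H}_{M_k}$, and let the classical counter select the step map $\mathcal{T}^{(k)}$, so that a single fixed device reproduces the tester's statistics independently of the process. Your explicit embedding (isometries with a trace-preserving extension on the orthogonal complement) and your folding of $\mathcal{M}_j$ into the last step only fill in details the paper leaves implicit.
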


\begin{proof}
By the standard sequential realisation of testers~\cite{GutoskiWatrous2007,chiribella2009theoretical}, any $N$-step tester can be implemented as a sequence of CPTP maps linked by a finite-dimensional quantum memory, followed by a final measurement. Introducing a classical counter register that selects the appropriate step map allows this sequence to be implemented by repeated application of a single identical quantum instrument. A detailed construction is given in Appendix~\ref{app:mad-simulates-tester}.
\end{proof}

Theorem~\ref{thm:mad-compilation} shows that, for any fixed finite time $N$, restricting to recurrent process discrimination strategies does not reduce the expressive power of multi-time testers. Any admissible tester can be implemented by repeated use of an identically repeated quantum instrument whose outcomes are stored in a classical memory register, with all time dependence encoded in an internal classical counter and correlations mediated by a coherent quantum memory.

\begin{corollary}[Finite-time completeness and memory bound]
\label{cor:mad-saturates-strategy}
\label{cor:mad-dimension}
Fix two $N$-step processes $\mathbf{P}^{N}$ and $\mathbf{Q}^{N}$. Then there exists a finite $d_A^\star$ such that for all $d_A\ge d_A^\star$,
\begin{equation}
\label{eq:mad-saturates}
d_{\mathsf{MAD}}^{(N)}(\mathbf{P}^{N},\mathbf{Q}^{N};d_A)
=
d_{\mathrm{str}}^{(N)}(\mathbf{P}^{N},\mathbf{Q}^{N}).
\end{equation}
Moreover, $d_A^\star$ may be chosen no larger than the largest intermediate quantum memory dimension appearing in a sequential realisation of an optimal tester,
\begin{equation}
\label{eq:mad-compilation-dim}
d_A^\star \le d_{\mathrm{suf}},
\qquad
d_{\mathrm{suf}} := \max_{0\le k\le N}\dim \mathcal H_{M_k}.
\end{equation}
\end{corollary}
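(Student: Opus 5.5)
The plan is to combine Theorem~\ref{thm:mad-compilation} with the upper bound of Proposition~\ref{prop:mad-hierarchy-properties} and then use monotonicity in $d_A$ to extend the equality to all larger memories. For the lower bound, I would first argue that the supremum in Eq.~\eqref{eq:strategy-norm} is attained. The set of two-outcome testers of fixed finite dimension is compact, since it is cut out by positive semidefiniteness and linear causality/normalisation constraints on bounded Choi operators. The objective is continuous, so an optimal binary tester $\{\mathbf{T}^{N\star}_{0},\mathbf{T}^{N\star}_{1}\}$ exists with $|\Pr(0|\mathbf{P}^N)-\Pr(0|\mathbf{Q}^N)|=d^{(N)}_{\mathrm{str}}$. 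By the sequential realisation of testers in Eq.~\eqref{eqn:tester_as_sequence}, this optimal tester is implemented by CPTP maps $\mathcal{T}^{(k)}$ linked through memories $M_k$ of finite dimensions, followed by a final instrument. This fixes $d_{\mathrm{suf}}=\max_k\dim\mathcal H_{M_k}$.

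Next I would apply the construction of Theorem~\ref{thm:mad-compilation} to this realisation. Each $M_k$ is isometrically embedded into a common space $A$ with $\dim A=d_{\mathrm{suf}}$, which I take as the coherent memory. The classical counter selects $\mathcal{T}^{(k)}$, extended arbitrarily but CPTP off the embedded subspace, as the step-$k$ instrument with a trivial outcome. The final step absorbs the instrument $\{\mathcal{M}_j\}$, recording $j$ in the classical register. The resulting $\mathsf{MAD}$ tester reproduces $\Pr(j|\cdot)$ for both hypotheses. Its final cq-states then have classical parts with marginals $\Pr(j|\mathbf{P}^N)$ and $\Pr(j|\mathbf{Q}^N)$. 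Discarding the quantum part and all registers except the binary outcome is a CPTP map, and trace distance is monotone under CPTP maps. The Helstrom bias of the $\mathsf{MAD}$ tester is therefore at least $|\Pr(0|\mathbf{P}^N)-\Pr(0|\mathbf{Q}^N)|=d^{(N)}_{\mathrm{str}}$. This gives $d^{(N)}_{\mathsf{MAD}}(\cdot,\cdot;d_{\mathrm{suf}})\ge d^{(N)}_{\mathrm{str}}$.

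Combining this with item 3 of Proposition~\ref{prop:mad-hierarchy-properties} yields equality at $d_A=d_{\mathrm{suf}}$. Item 1 (monotonicity), together with the same upper bound, gives equality for every $d_A\ge d_{\mathrm{suf}}$. Taking $d_A^\star$ to be the smallest dimension at which equality holds gives $d_A^\star\le d_{\mathrm{suf}}$.

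I expect the main obstacle to be the bookkeeping in the memory-dimension claim. One must check that the compilation of Theorem~\ref{thm:mad-compilation} needs only $\max_k\dim\mathcal H_{M_k}$ coherent dimensions rather than their sum or product. This requires the counter, and any classical information, to live in the classical registers, and varying dimensions to be handled by embedding without loss of complete positivity or trace preservation. I would handle this by padding each $\mathcal{T}^{(k)}$ with a replacement channel on the orthogonal complement of the embedded $M_k$; that complement is never populated, so the padding leaves the statistics unchanged. A secondary point is attainment of the supremum. If one wants to avoid the compactness argument, an $\varepsilon$-optimal tester works instead, provided the memory dimensions of near-optimal realisations are uniformly bounded. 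Such a bound holds because Choi-based realisations use memories of dimension at most a function of $d_S$ and $N$.
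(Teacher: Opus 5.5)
Your proposal is correct and follows essentially the same route as the paper: compile an optimal tester via Theorem~\ref{thm:mad-compilation} into a $\mathsf{MAD}$ whose coherent ancilla only needs to hold the largest intermediate memory $\mathcal{H}_{M_k}$. Then observe that the Helstrom measurement on its output does at least as well as the coarse-grained tester decision, and close with the upper bound of Proposition~\ref{prop:mad-hierarchy-properties}. Your additions make explicit details the paper's proof takes for granted: the compactness argument for attainment of the supremum, the CPTP padding off the embedded memory subspace, and the use of monotonicity in $d_A$ to cover all $d_A\ge d_{\mathrm{suf}}$.
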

\begin{proof}
Let $\{\mathbf{T}^{N\star}_j\}_j$ be a tester achieving $d_{\mathrm{str}}^{(N)}(\mathbf{P}^{N},\mathbf{Q}^{N})$ as per Eq.~\eqref{eq:strategy-norm}. By Theorem~\ref{thm:mad-compilation}, for $d_A\ge d_A^\star$ there exists a $\mathsf{MAD}$ tester whose outcome statistics match those of $\{\mathbf{T}^{N*}_j\}_j$ on both hypotheses. Hence there is a final measurement on the $\mathsf{MAD}$ output that reproduces the optimal tester's binary decision statistics. The Helstrom measurement on the $\mathsf{MAD}$ output cannot do worse, so
\[
d_{\mathsf{MAD}}^{(N)}(\mathbf{P}^{N},\mathbf{Q}^{N};d_A)
\ge
d_{\mathrm{str}}^{(N)}(\mathbf{P}^{N},\mathbf{Q}^{N}).
\]
The reverse inequality holds because $\mathsf{MAD}(d_A)$ is a subset of the full tester class, as in Eq.~\eqref{eq:dmad-leq-dstr}. Combining both inequalities yields Eq.~\eqref{eq:mad-saturates}. The memory bound follows from the sequential realization used in Theorem~\ref{thm:mad-compilation} -- the compiled $\mathsf{MAD}$ only needs a coherent ancilla large enough to contain the largest intermediate memory space $\mathcal H_{M_k}$ of the tester realization. 
\end{proof}

For uniform input and output system dimension $d_S$, a coarse bound is obtained by bounding this memory by the Hilbert-space dimension of all $N$ input-output slots, giving $d_{\mathrm{suf}}\le d_S^{2N}$. Corollary~\ref{cor:mad-saturates-strategy} establishes that the $\mathsf{MAD}$ hierarchy is complete at finite times. Increasing the coherent memory dimension $d_A$ suffices to recover the full strategy-norm distinguishability. The bound $d_{\mathrm{suf}}\le d_S^{2N}$ is generally loose as the necessary quantum memory is controlled by the largest intermediate memory space appearing in a sequential realisation of the tester and may be much smaller for some choices of processes.

While in principle $d_{\mathrm{suf}}$, and hence $d_A^\star$, may grow exponentially with $N$, as we will show in Sec.~\ref{sec:numerics}, much smaller coherent memory often suffices in practice to approximate, and in some cases even attain, the strategy-norm distance. In particular, the required memory can be significantly smaller than that implicitly used in standard semidefinite programming approaches.

This reduction in coherent memory has significant computational implications. The dimension of the optimization space for $\mathsf{MAD}$ testers scales directly with $d_A$, so restricting to small memory dramatically reduces the effective problem size. In contrast, direct semidefinite programming approaches scale with the full process dimension and quickly become intractable. Thus, memory-limited $\mathsf{MAD}$ optimization provides a practical route to approximating the strategy-norm distance beyond the reach of standard SDP methods.

To gain further analytical insight, we now specialize to a physically relevant class of processes generated by repeated and identical system--environment interactions.
\section{$\mathsf{MAD}$ testers for repeated interaction processes}\label{sec:repeated_interaction_picture}

\begin{figure}
    \centering
    \includegraphics[width=0.9\linewidth]{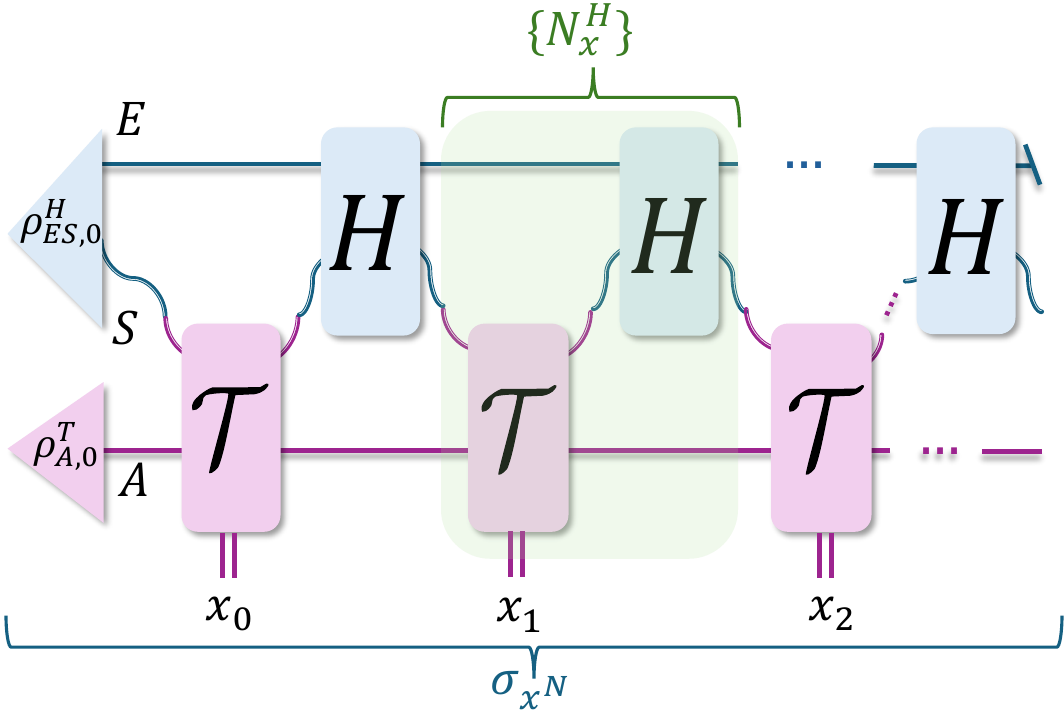}
    \caption{
A recurrent process hypothesis probed by a $\mathsf{MAD}$ tester. The recurrent process $\mathbf{H}_H$ (blue) is obtained by repeatedly applying $H$ which interacts with a persistent environment $E$, while a fixed quantum instrument $\{\mathcal{T}^{(x)}\}_{x}$ (purple) acts jointly on the system and quantum memory $A$ at each time step. The instrument produces a classical outcome $x_k$ that is stored, generating a trajectory $x_{0:N}$, while the system and memory evolve coherently. The induced single-step maps $\{\mathcal{N}_x^{\mathsf{T},H}\}_x$ (highlighted in green) also correspond to a quantum instrument and describe the joint action of the process and tester at each step, and generate the branchwise states $\hat{\sigma}_{x_{0:N}}^{\mathsf{T},H}$.
}
    \label{fig:reccurent_hypothess_plus_tester}
\end{figure}

We now specialize to recurrent processes generated by repeating a fixed system-environment interaction. A hypothesis $H\in\{P,Q\}$ is specified by a finite-dimensional environment space, an initial joint state $\rho^{H}_{SE}$ on system and environment, and a fixed CPTP map applied at each time step. A recurrent process is then given by, 
\begin{align}
\mathbf{H}_H^{N}
:=
\tr_{E_{N+1}}\!\left(
{H}\circ_{E_N}\dots\circ_{E_3}{H}\circ_{E_2} {H}\circ_{E_1}[\rho_{SE,0}^{\mathbf{H}_H}]
\right),
\end{align}
where $\tr_{E_{N+1}}$ is the trace over the final environment space and $\circ_{E_k}$ is the composition over the $k^\text{th}$ environment. 
The state $\rho_{SE,0}^{\mathbf{H}}$ defines the (hypothesis-dependent) initialized state on the system-environment space. Since the map is the same at each time step, so are the adjoining environment spaces. Although the induced multi-time statistics can remain strongly correlated, the process is specified by fixed objects $\rho_{SE,0}^{\mathbf{H}_H}$ (the initial system-environment state) and $H$ (the repeated interaction map). The quantum classical state generated by the action of a $\mathsf{MAD}$ tester on one of the recurrent processes hypotheses $\mathbf{H}_H^N$ with $H\in{\{P,Q\}}$ is denoted by $\rho_{SAC,N}^{\mathsf{T},\mathbf{H}_H}\equiv \rho_{N}^{\mathsf{T},H}$, and marginals are denoted $\hat{\sigma}^{\mathsf{T},\mathbf{H}_H}_{x_{0:N}}\equiv \hat{\sigma}^{{\mathsf{T},H}}_{x_{0:N}}$. 

Recurrent processes arise naturally in settings such as collision models~\cite{ciccarello2022collision}, repeated system–environment interactions~\cite{ciccarello2022collision,AttalPautrat2006,BruneauJoyeMerkli2014}, and quantum memory channels~\cite{KretschmannWerner2005}, where the environment is not reset between uses. In these scenarios, information propagates across time via the environment, leading to non-Markovian temporal correlations~\cite{pollock2018nonmarkovian,Pollock2018,MilzPRX2021Qprocesses}. This makes recurrent processes a natural setting for studying the buildup of distinguishability which we explore in the following sections.

\subsection{Recurrent processes probed by $\mathsf{MAD}$ testers}

For recurrent processes probed by a fixed $\mathsf{MAD}$ tester, each time step applies the same joint operation, so the multi-time evolution is generated by iterating a single map. This structure allows us to upper-bound the distinguishability of processes over time in terms of two components. One that propagates distinguishability accumulated in previous time steps, and another that further increases the distinguishability with each time step.

To make this explicit we define, for each hypothesis $H\in\{P,Q\}$ probed by the $\mathsf{MAD}$ $\mathsf{T}=(\{\mathcal{T}^{(x)}_{SA}\}_x,N)$ such that for each outcome $x\in\mathcal X$, the single-step hypothesis map as
\begin{equation}
\label{eq:Nhx-def}
\mathcal{N}^{\mathsf{T}, H}_x
:=
({H}_{SE}\otimes \mathds{1}_A)\circ(\mathcal{T}^{(x)}_{SA}\otimes \mathds{1}_E),
\end{equation}
so that $\{\mathcal{N}^{\mathsf{T}, H}_x\}_{x\in\mathcal X}$ is a quantum instrument and $\sum_x \mathcal{N}^{\mathsf{T}, H}_x$ is a CPTP map. The full trajectory-conditioned states update with each new observed symbol as
\begin{align}
    \hat{\gamma}^{{\mathsf{T}, H}}_{x_{0:k+1}}
    =
    \mathcal{N}^{\mathsf{T}, H}_{x_{k+1}}[\hat{\gamma}^{{\mathsf{T}, H}}_{x_{0:k}}],
\end{align}
where $\hat{\gamma}_{x_{0:k}}^{\mathsf{T},H}$ is the subnormalized trajectory-conditioned state on the environment, system and quantum ancillary space after observing $x_{0:k}:=(x_0,\dots,x_{k})$. The states $\hat{\gamma}^{{\mathsf{T}, H}}_{x_{0:k}}$ are related to the marginal subnormalized states on the system-ancillary spaces as
\begin{align}
    \hat{\sigma}^{{\mathsf{T}, H}}_{x_{0:k}}
    =
    \tr_{E_{k+1}}(\hat{\gamma}^{{\mathsf{T}, H}}_{x_{0:k}}).
\end{align}
Equivalently, after the next observed symbol,
\begin{align}
    \hat{\sigma}^{{\mathsf{T}, H}}_{x_{0:k+1}}
    =
    \tr_{E_{k+2}}
    \left(
    \mathcal{N}^{\mathsf{T}, H}_{x_{k+1}}[\hat{\gamma}^{{\mathsf{T}, H}}_{x_{0:k}}]
    \right).
\end{align}
We then define the accessible trajectory-wise difference as
\begin{align}
    \Delta_{x_{0:k}}
    :=
    \hat{\sigma}^{{\mathsf{T},P}}_{x_{0:k}}
    -
    \hat{\sigma}^{{\mathsf{T},Q}}_{x_{0:k}},
\end{align}
and the corresponding full trajectory-wise difference as
\begin{align}
    \widetilde{\Delta}_{x_{0:k}}
    :=
    \hat{\gamma}^{{\mathsf{T},P}}_{x_{0:k}}
    -
    \hat{\gamma}^{{\mathsf{T},Q}}_{x_{0:k}}.
\end{align}
Note that, since $\mathcal{N}^{\mathsf{T}, H}_{x_k}$ and $\hat{\gamma}^{{\mathsf{T}, H}}_{x_{0:k}}$ act on the full environment-system-ancillary spaces, the resultant trajectory-conditioned states and trajectory-wise differences will also be functionally dependent on the initial environment-system-ancillary state $\rho_{ESA,0}^{\mathsf{T}, H}=\rho_{ES,0}^{H}\otimes\rho_{A,0}^{\mathsf{T}}$. In particular, the subnormalized states after the first output are given by
\begin{align}
    \hat{\gamma}^{\mathsf{T}, H}_{x_0}
    =
    \mathcal{N}^{\mathsf{T}, H}_{x_0}[\rho_{ESA,0}^{\mathsf{T}, H}].
\end{align}
Recall that from Eq.~\eqref{eq:helstrom-success} and from the classical quantum states in Eq.~\eqref{eq:cq-output}, the optimal probability of successfully distinguishing two recurrent hypotheses $\mathbf{P}_P$ and $\mathbf{Q}_Q$ at time $k$ can be written as
\begin{align}
    p_{succ}^*(\mathbf{P}^k_P,\mathbf{Q}^k_Q,\mathsf{T}^k)
    =
    \frac12(1+\delta_{\mathsf{T}}^k),
\end{align}
where
\begin{align}
    \delta_{\mathsf{T}}^k
    :=
    \frac{1}{2}\sum_{x_{0:k}\in\mathcal{X}^k}
    \big|\!\big|\Delta_{x_{0:k}}\big|\!\big|_1
\end{align}
is the accessible distinguishing bias between the two processes at time $k$ using the $\mathsf{MAD}$ tester $(\mathsf{T},\rho_{A,0}^{\mathsf{T}})$. We also define the full trajectory distinguishability
\begin{align}
    \widetilde{\delta}_{\mathsf{T}}^k
    :=
    \frac{1}{2}\sum_{x_{0:k}\in\mathcal{X}^k}
    \big|\!\big|\widetilde{\Delta}_{x_{0:k}}\big|\!\big|_1.
\end{align}
Since $\hat{\sigma}^{\mathsf{T},H}_{x_{0:k}}$ is obtained from $\hat{\gamma}^{\mathsf{T},H}_{x_{0:k}}$ by tracing out the inaccessible environment, contractivity of the trace norm implies
\begin{align}
    \delta_{\mathsf{T}}^k
    \leq
    \widetilde{\delta}_{\mathsf{T}}^k.
\end{align}

\begin{theorem}\label{thm:stepwise_distinguishability_recurrent}
The accessible distinguishing bias between two recurrent processes $\mathbf{P}_P$ and $\mathbf{Q}_Q$ probed by a $\mathsf{MAD}$ tester $(\mathsf{T},\rho_{A,0}^{\mathsf{T}})$ at time $k+1$ is bounded by the full trajectory distinguishability one step previously, together with a local generation term~$\epsilon_{\mathsf{T}}^k$,
\begin{align}\label{eq:stepwise_distinguishability}
    \delta_{\mathsf{T}}^{k+1}
    \leq
    \widetilde{\delta}_{\mathsf{T}}^k
    +
    \epsilon_{\mathsf{T}}^k,
\end{align}
where
\begin{align}
    \epsilon_{\mathsf{T}}^k
    =
    \frac12
    \sum_{x,x_{0:k}}
    \big|\!\big|
    \tr_{E_{k+2}}
    \big(
    (\mathcal{N}_x^{\mathsf{T},P}-\mathcal{N}_x^{\mathsf{T},Q})
    [\hat{\gamma}_{x_{0:k}}^{\mathsf{T},Q}]
    \big)
    \big|\!\big|_1.
\end{align}
\end{theorem}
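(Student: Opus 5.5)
The bound follows from a hybrid (telescoping) decomposition of the one-step update, combined with contractivity of the trace norm under partial traces and under the nonselective CPTP map generated by a quantum instrument. For every trajectory $x_{0:k}$ and new symbol $x$ we have
\begin{align}
\Delta_{x_{0:k}x}
=
\tr_{E_{k+2}}\!\big(\mathcal{N}^{\mathsf{T},P}_x[\hat{\gamma}^{\mathsf{T},P}_{x_{0:k}}]-\mathcal{N}^{\mathsf{T},Q}_x[\hat{\gamma}^{\mathsf{T},Q}_{x_{0:k}}]\big).
\end{align}
We add and subtract $\mathcal{N}^{\mathsf{T},P}_x[\hat{\gamma}^{\mathsf{T},Q}_{x_{0:k}}]$. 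This splits the difference into a \emph{propagation} term $\tr_{E_{k+2}}\big(\mathcal{N}^{\mathsf{T},P}_x[\widetilde{\Delta}_{x_{0:k}}]\big)$ and a \emph{generation} term $\tr_{E_{k+2}}\big((\mathcal{N}^{\mathsf{T},P}_x-\mathcal{N}^{\mathsf{T},Q}_x)[\hat{\gamma}^{\mathsf{T},Q}_{x_{0:k}}]\big)$. The triangle inequality for $\|\cdot\|_1$ then bounds $\|\Delta_{x_{0:k}x}\|_1$ by the sum of the trace norms of these two pieces. Summing over $x$ and $x_{0:k}$ with the factor $\tfrac12$, the generation pieces reproduce $\epsilon^k_{\mathsf{T}}$ exactly as defined in the statement.

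It remains to show that
\begin{align}
\tfrac12\sum_{x_{0:k}}\sum_x\big\|\tr_{E_{k+2}}\mathcal{N}^{\mathsf{T},P}_x[\widetilde{\Delta}_{x_{0:k}}]\big\|_1\le \widetilde{\delta}^k_{\mathsf{T}} .
\end{align}
First drop the partial trace, which is trace-norm contractive. Then fix $x_{0:k}$ and consider the map
\begin{align}
\Phi[\,\cdot\,]:=\sum_x \mathcal{N}^{\mathsf{T},P}_x[\,\cdot\,]\otimes\dyad{x}.
\end{align}
This map is CPTP because $\{\mathcal{N}^{\mathsf{T},P}_x\}_x$ is an instrument: it is the composition of the instrument $\{\mathcal{T}^{(x)}\}$ with the CPTP map $H_{SE}\otimes\mathds{1}_A$, as noted after Eq.~\eqref{eq:Nhx-def}. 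Its output is block diagonal in the classical register, so
\begin{align}
\sum_x\|\mathcal{N}^{\mathsf{T},P}_x[\widetilde{\Delta}_{x_{0:k}}]\|_1=\|\Phi[\widetilde{\Delta}_{x_{0:k}}]\|_1\le\|\widetilde{\Delta}_{x_{0:k}}\|_1 ,
\end{align}
where the inequality is trace-norm contractivity of CPTP maps applied to the Hermitian operator $\widetilde{\Delta}_{x_{0:k}}$. Summing over $x_{0:k}$ gives $\widetilde{\delta}^k_{\mathsf{T}}$, and combining this with the generation term yields Eq.~\eqref{eq:stepwise_distinguishability}.

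The main point to handle carefully is this contractivity step. Contractivity has to be applied to the whole instrument, via the classical-register embedding, rather than to each $\mathcal{N}_x$ separately, because an individual trace-nonincreasing CP map alone does not control the sum over $x$. A second point is bookkeeping: one must check that the environment labels are consistent, i.e.\ that $\hat{\gamma}_{x_{0:k}}$ lives on $E_{k+1}$ and that the new map outputs on $E_{k+2}$, so that the partial trace in $\epsilon^k_{\mathsf{T}}$ matches the one in $\Delta_{x_{0:k+1}}$. The choice of intermediate hybrid, with $P$ dynamics acting on the $Q$ state, is what makes the generation term evaluate the local map difference on $\hat{\gamma}^{\mathsf{T},Q}$, as in the statement. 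The symmetric choice would instead give the analogous bound with $\hat{\gamma}^{\mathsf{T},P}$.
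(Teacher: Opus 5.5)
Your proposal is correct and follows essentially the same route as the paper's proof. Both insert the hybrid term $\tr_{E_{k+2}}(\mathcal{N}^{\mathsf{T},P}_x[\hat{\gamma}^{\mathsf{T},Q}_{x_{0:k}}])$, apply the triangle inequality, and use contractivity of the instrument $\{\mathcal{N}^{\mathsf{T},P}_x\}_x$ together with the partial trace to bound the propagation term by $\widetilde{\delta}^k_{\mathsf{T}}$; your classical-register embedding $\Phi$ simply spells out the ``contractivity of quantum instruments'' step that the paper invokes without further justification.
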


\begin{proof}
From the definition of $\Delta_{x_{0:k+1}}$,
\begin{align}
\Delta_{x_{0:k+1}}
=
\tr_{E_{k+2}}
\left(
\mathcal{N}^{\mathsf{T},P}_{x_{k+1}}[\hat{\gamma}^{\mathsf{T},P}_{x_{0:k}}]
-
\mathcal{N}^{\mathsf{T},Q}_{x_{k+1}}[\hat{\gamma}^{\mathsf{T},Q}_{x_{0:k}}]
\right).
\end{align}
Adding and subtracting
$\tr_{E_{k+2}}\left(\mathcal{N}^{\mathsf{T},P}_{x}[\hat{\gamma}^{\mathsf{T},Q}_{x_{0:k}}]\right)$,
taking the trace norm, applying the triangle inequality, and summing over all $x_{0:k+1}\equiv (x_{0:k},x_{k+1}) = (x_{0:k},x)  $, we obtain
\begin{align}
\sum_{x_{0:k+1}}
\big|\!\big|\Delta_{x_{0:k+1}}\big|\!\big|_1
\leq
&
\sum_{x_{0:k},x}
\big|\!\big|
\tr_{E_{k+2}}
\left(
\mathcal{N}^{\mathsf{T},P}_{x}
[\widetilde{\Delta}_{x_{0:k}}]
\right)
\big|\!\big|_1
\nonumber\\
&
+
\sum_{x_{0:k},x}
\big|\!\big|
\tr_{E_{k+2}}
\left(
(\mathcal{N}^{\mathsf{T},P}_{x}-\mathcal{N}^{\mathsf{T},Q}_{x})
[\hat{\gamma}^{\mathsf{T},Q}_{x_{0:k}}]
\right)
\big|\!\big|_1.
\end{align}
The first term is bounded by the previous full trajectory distinguishability, since $\{\mathcal{N}^{\mathsf{T},P}_{x}\}_x$ is a quantum instrument and the partial trace is CPTP. Therefore
\begin{align}
\sum_{x_{0:k},x}
\big|\!\big|
\tr_{E_{k+2}}
\left(
\mathcal{N}^{\mathsf{T},P}_{x}
[\widetilde{\Delta}_{x_{0:k}}]
\right)
\big|\!\big|_1
\leq
\sum_{x_{0:k}}
\big|\!\big|
\widetilde{\Delta}_{x_{0:k}}
\big|\!\big|_1.
\end{align}
Dividing both sides by $2$ yields Eq.~\eqref{eq:stepwise_distinguishability} as claimed. A detailed proof is given in Appendix.~\ref{app:proof_of_thm_stepwise}.
\end{proof}

Theorem~\ref{thm:stepwise_distinguishability_recurrent} shows that the evolution of the accessible distinguishing bias can be upper bounded by two contributions. The first is the propagation of distinguishability stored in the full trajectory-conditioned state, including distinguishability that may be inaccessible at time $k$ because it is stored in the environment. The second is the local distinguishability $\epsilon_{\mathsf{T}}^k$ generated between the two hypotheses along the induced trajectories at each time step. This decomposition demonstrates that the performance of a $\mathsf{MAD}$ tester is governed by the interplay between how efficiently distinguishability is generated, how much of it is stored in inaccessible degrees of freedom, and how well it is preserved under the induced dynamics.

\subsection{Heuristic behavior of $\mathsf{MAD}$ distinguishability for recurrent processes}\label{sec:heuristics}

We can develop further insight into the evolution of distinguishability in this regime by noting that, from a physical perspective, we must also account for the loss of memory induced by the dynamics. In particular, if the dynamics generated by the instrument $\{\mathcal{N}_x\}_x$ is mixing, then deviations will be progressively suppressed~\cite{Wolf2012,Kossakowski2012,Souissi2025exponential} under repeated application, so that distinguishability accumulated at earlier times is effectively damped. While the rigorous bound above involves the full trajectory distinguishability $\widetilde{\delta}_{\mathsf T}^k$, the operational success probability is determined by the accessible distinguishability $\delta_{\mathsf T}^k$. The difference between these quantities represents distinguishability that is temporarily stored in the inaccessible environment. Under mixing dynamics, this hidden contribution is not expected to remain coherently available indefinitely, and its effect on the accessible trajectory distinguishability is captured only after coarse graining over trajectories and tracing out the environment.

Motivated by these observations, we consider a heuristic description of the accessible dynamics,
\begin{align}
    \delta_{\mathsf T}^{k+1}
    \approx
    \lambda\, \delta_{\mathsf T}^k
    +
    \epsilon_{\mathsf T}^k,
\end{align}
where the first term captures effective propagation with loss of memory, and the second term captures new distinguishability together with any short-lived environmental contribution that becomes accessible at the next step. Here $\lambda \in (0,1)$ should be understood as an effective propagation parameter, rather than as a microscopic contraction coefficient acting on every trajectory. It describes the net fraction of accessible distinguishability that survives one further recurrent step after averaging over trajectories and tracing out the inaccessible environment. Such behavior is known in the study of quantum Markov processes, where exponential mixing is governed by the spectral gap of the channel~\cite{Kossakowski2012,Souissi2025exponential}, and more generally by contraction coefficients that quantify the degradation of distinguishability under noisy evolution~\cite{Temme2010,Hirche2022}.

We may also consider that the distinguishability that is added per time step becomes approximately constant along most trajectories so that $\epsilon_{\mathsf T}^k \approx m > 0$. This is physically reasonable when the induced dynamics is mixing, as the trajectory-conditioned states $\hat{\sigma}_{x_{0:k}}^{\mathsf{T},H}$ converge towards a stationary ensemble~\cite{Kossakowski2012,Souissi2025exponential,BruneauJoyeMerkli2014}, and local observables, including the distinguishability between $\mathcal{N}_x^{\mathsf{T},P}$ and $\mathcal{N}_x^{\mathsf{T},Q}$ evaluated on these states, become approximately time-independent~\cite{Temme2010,Souissi2025exponential,BruneauJoyeMerkli2014} up to small fluctuations. Then, the dynamics reduces to a driven linear recursion
\begin{align}
    \delta_{\mathsf T}^{k+1}
    \approx
    \lambda\, \delta_{\mathsf T}^k
    +
    m,
\end{align}
with solution given by
\begin{align}\label{eq:recursion_solution}
    \delta_{\mathsf T}^N
    \approx
    \delta_\infty
    +
    (\delta_{\mathsf T}^0 - \delta_\infty)\lambda^N,
    \qquad
    \delta_\infty
    =
    \frac{m}{1-\lambda}.
\end{align}
Equation~\eqref{eq:recursion_solution} essentially describes exponential relaxation towards a steady distinguishability. Writing $\lambda^N = e^{-cN}$ with $c = -\log \lambda > 0$, this becomes
\begin{align}
    \delta_{\mathsf T}^N
    \approx
    \delta_\infty
    -
    A e^{-cN},
\end{align}
where $A$ absorbs initial conditions and short-time transients.

Since $p_{\mathrm{succ}}^\ast = \frac12(1 + \delta_{\mathsf T}^N)$, this immediately implies exponential saturation of the success probability,
\begin{align}\label{eqn:p_success_heuristic}
\begin{split}
        p_{\mathrm{succ}}^\ast(N)
    &\approx
    p_\infty
    -
    \frac12 A e^{-cN},
    \\
    p_\infty
    &=
    \frac12(1 + \delta_\infty).
\end{split}
\end{align}
In general, $p_\infty < 1$, reflecting the fact that a fixed, finite-memory tester may not extract all available distinguishability between the processes. In the unconstrained case, where the coherent memory dimension may grow with the number of time steps, the finite-time optimum is recovered by the strategy-norm distinguishability~\cite{GutoskiWatrous2007,gutoski2012measure}. When recurrent dynamics continually generates distinguishability that is propagated through the induced multi-time statistics, as in the repeated-interaction models considered below, the strategy-norm success probability is expected to approach one as $N$ increases. This need not occur if the hypotheses differ only through transient dynamics, for example through their initial system-environment states, since such differences may be washed out under mixing dynamics~\cite{Kossakowski2012,Souissi2025exponential,BruneauJoyeMerkli2014}, or if their difference lies in degrees of freedom that are never reached, or reached only with vanishing weight, under admissible interventions. In these cases the induced multi-time statistics may provide only bounded evidence, so even the unconstrained success probability may saturate below one.

This picture can be understood physically as a driven-dissipative process~\cite{BreuerPetruccione2002,ciccarello2022collision}. Distinguishability is continuously added at each step, while contributions generated in the past are progressively erased by the mixing dynamics. As a result, although the $\mathsf{MAD}$ tester stores an ever-growing classical record $x_{0:N}$, the additional distinguishability available for extraction decreases over time, so that later measurements yield progressively less information. Consequently, the total distinguishability saturates despite the unbounded growth of the record.

We now test this heuristic prediction numerically in a concrete model.

\section{Numerical illustration: $\mathsf{MAD}$ distinguishability as an efficiently calculable alternative to the strategy-norm distance}
\label{sec:numerics}

\begin{figure}
    \centering
    \includegraphics[width=\linewidth]{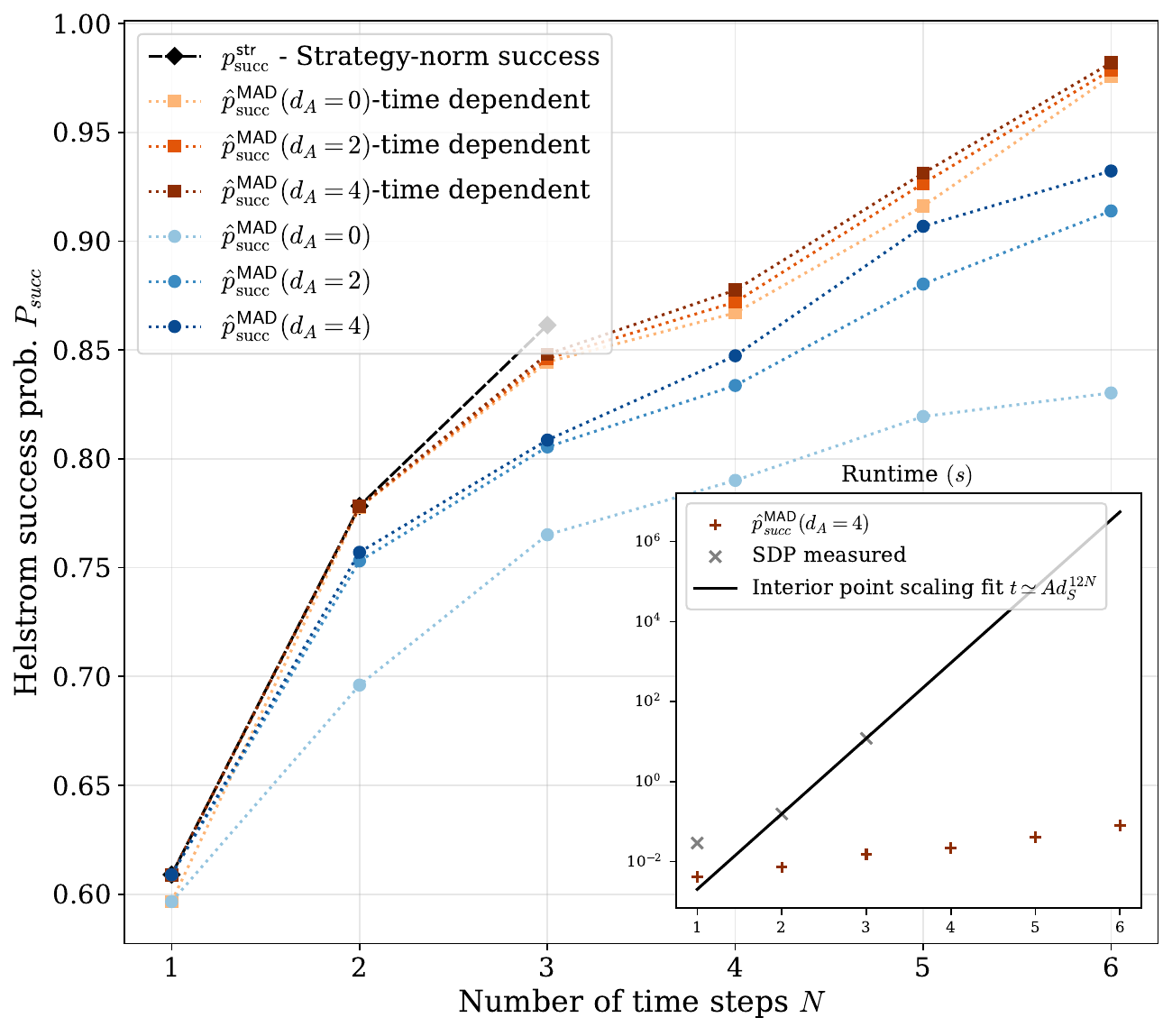}
    \caption{Helstrom success probability $p_{\mathrm{succ}}^{\mathsf{MAD}}$ for numerically optimized $\mathsf{MAD}$ testers distinguishing the repeated-interaction processes generated by the unitaries in Eq.~\eqref{eq: numerical_process_unitaries}, with $\theta_{\mathcal P}=0.2$ and $\theta_{\mathcal Q}=0.5$, compared with the strategy-norm success benchmark. Increasing the coherent ancilla dimension $d_A\in\{0,2,4\}$, for a time-dependent strategy (orange lines), or time-independent strategy (blue lines), systematically improves performance, illustrating the coherent-memory $\mathsf{MAD}$ hierarchy and its convergence toward the strategy-norm benchmark. For fixed coherent ancilla dimension $d_A=0$ (no coherent memory), allowing explicit time dependence through a classical counter substantially improves the success probability relative to time-independent recurrent $\mathsf{MAD}$ testers. The inset compares the measured runtime of our SDP implementation for the values $N = 1,2,3$, where the full strategy-norm optimisation could be performed. The black curve in the inset is a fit of the expected interior-point scaling $t_{\text{SDP}} \simeq Ad_S^{12N}$ to these SDP runtime data, and is included only as a visual guide to the expected asymptotic growth of dense SDP methods. In contrast, the measured runtime of the MAD search exhibits a much milder increase over the range of N considered.}
    \label{fig:mad_clock_numerics}
\end{figure}

We turn to a numerical illustration of the framework developed above, benchmarking $\mathsf{MAD}$ testers against the strategy-norm optimum and examining how memory constraints and dynamical mixing govern the buildup of distinguishability. We also show that increasing the available memory improves distinguishing performance, while remaining significantly more efficient to evaluate than the full strategy-norm SDP. To this end, we consider a repeated-interaction model and define two process hypotheses $P$ and $Q$ via different joint system--environment unitaries applied at each time step.

Writing $\rho_{SE,k}$ for the joint system-environment state after $k$ interactions,
\begin{align}
\rho_{SE,k+1}=U_X\,\rho_{SE,k}\,U_X^\dagger,
\qquad X\in\{P,Q\},
\end{align}
with the same environment carried forward between steps. Concretely, we take partial-SWAP dynamics,
\begin{align}\label{eq: numerical_process_unitaries}
\begin{split}
    U_{P}&=\exp\!\big[\!\!-i\theta_{P}\,(\mathrm{SWAP})_{SE}\big];~
\\U_{Q}&=\exp\!\big[\!\!-i\theta_{Q}\,(\mathrm{SWAP})_{SE}\big],
\end{split}
\end{align}
with $\theta_{P}\neq \theta_{Q}$. The resulting $N$-step processes $\mathbf{P}^{N}$ and $\mathbf{Q}^{N}$ are recurrent and memoryful, with temporal correlations mediated by the persistent environment. For each time $N$ we construct the process Choi operators $\Upsilon^{\mathbf{P}^{N}}$ and~$\Upsilon^{\mathbf{Q}^{N}}$, and compute the strategy-norm distinguishability $d_{\mathrm{str}}^{(N)}(\mathbf{P}^{N}_P,\mathbf{Q}^{N}_Q)$ using the SDP formulation (see Appendices.~\ref{app:combs} and ~\ref{app:strategy-norm-sdp} for details). This yields a benchmark success probability
\begin{align}
    p^{\mathrm{str}*}_{succ}(\mathbf{P}^{N},\mathbf{Q}^{N})=\frac12\left(1+d_{\mathrm{str}}^{(N)}(\mathbf{P}^{N}\!,\!\mathbf{Q}^{N})\right)
\end{align}

For an $N$-step process, the corresponding SDP required to evaluate $p^{\mathrm{str}*}_{succ}$ involves optimizing a positive semidefinite matrix of size $n = d^{2N}$ together with causality constraints whose number scales as $m \sim d^{4N}$. For state-of-the-art interior-point methods, the per-iteration cost scales as $\mathcal{O}(m n^2 + m^2 n + m^3)$ \cite{Jiang2020AFI}. This leads to a runtime
that grows as $d_S^{12N}$ per iteration~\cite{Watrous2009SDP, mironowicz2024sdp}, and renders direct optimization infeasible beyond small $N$. This is reflected in Fig.~\ref{fig:mad_clock_numerics} where we have calculated $p^{\mathrm{str}*}_{succ}$ only to $N=3$, and where the inset shows the growing runtime of the SDP (the gray markers show the measured times of our SDP implementation for the values of $N$ for which the full strategy-norm optimization could be performed). The `interior-point scaling fit' curve is obtained by fitting the expected dense interior-point scaling $t_{\mathrm{SDP}}(N) \simeq A d_S^{12N}$ to the measured runtimes and serves as a visual guide to the anticipated asymptotic growth for state of the art interior-point SDP methods~\cite{Jiang2020AFI} and illustrates why direct evaluation of $p_{\mathrm{succ}}^{\mathrm{str},*}$ rapidly becomes computationally prohibitive as $N$ increases.

Motivated by this computational bottleneck, we evaluate bounded-memory $\mathsf{MAD}$ testers using a direct numerical search. For each coherent ancilla dimension $d_A$, the instrument is parameterised by a unitary dilation acting on the system and ancilla, and candidate testers are obtained by sampling Haar-random unitaries and applying a local optimisation routine in Python. The resulting classical--quantum output states are then distinguished using the Helstrom measurement. Since the optimisation over $\mathsf{MAD}$ testers is non-convex, the plotted values should be interpreted as the best values found by this search procedure, rather than as certified global optima over all testers in $\mathsf{MAD}(d_A)$. Nevertheless, every tester generated in this way is admissible, so the numerical values provide rigorous lower bounds on the true memory-constrained optima, 

\begin{align}
   \begin{split}
       {p}_{\mathrm{succ}}^{\mathsf{MAD}}(\mathbf{P}^N\!,\mathbf{Q}^N\!;d_A) &\leq p_{\mathrm{succ}}^{\mathsf{MAD},*}(\mathbf{P}^N\!,\mathbf{Q}^N\!;d_A) \\&\leq p_{\mathrm{succ}}^{\mathrm{str}*}(\mathbf{P}^N\!,\mathbf{Q}^N)
   \end{split}
\end{align} 
where ${p}_{\mathrm{succ}}^{\mathsf{MAD}}(\mathbf{P}^N\!,\mathbf{Q}^N;d_A)$ is the best found value, $p_{\mathrm{succ}}^{\mathsf{MAD},*}(\mathbf{P}^N\!,\mathbf{Q}^N;d_A)$ is the true $\mathsf{MAD}$ optimal value and $p_{\mathrm{succ}}^{\mathrm{str}*}(\mathbf{P}^N\!,\mathbf{Q}^N)$ is the true strategy-norm distance optimal probability.
The agreement with the strategy-norm benchmark at small $N$, where the latter is available, indicates that the search already finds near-optimal testers for the examples considered here.

Given a $\mathsf{MAD}$ tester, the corresponding success probability is computed from the resulting classical-quantum output state, taking into account both the full classical outcome log and the final coherent
memory. Specifically, we evaluate
\begin{equation}
p^{\mathsf{MAD}*}_{succ}(\mathbf{P}^{N}\!,\mathbf{Q}^{N}\!;d_A)
=
\tfrac12\!\left(1+d_{\mathsf{MAD}}^{(N)}(\mathbf{P}^{N}\!,\mathbf{Q}^{N}\!;d_A)\right),
\end{equation}
where $d_{\mathrm{MAD}}^{(N)}$ is defined in Eq.~\eqref{eq:drec-def}. We compare two cases that match the $\mathsf{MAD}$ framework. First, a $\mathsf{MAD}$ instrument without internal time-keeping and second, a counter-controlled $\mathsf{MAD}$ instrument in which an internal counter routes outcomes and selects the appropriate branch at each use (Fig.~\ref{fig:MAD}). This comparison isolates the role of classical time control in enhancing distinguishability at fixed quantum memory. 

Figure~\ref{fig:mad_clock_numerics} shows the $\mathsf{MAD}$ memory hierarchy. Increasing the coherent ancilla dimension $d_A$ systematically improves $p_{\mathrm{succ}}^{\mathrm{MAD}}(N;d_A)$, both for time-independent testers (blue lines) and for counter-routed, time-dependent testers (orange lines), and brings it closer to the strategy-norm benchmark. This behavior directly reflects monotonicity in coherent memory and the saturation of the hierarchy at large $d_A$. That is, larger coherent memory allows the tester to preserve more time-nonlocal information about the interaction history and to implement more powerful internal feedback.

The counter-routed curves also show that explicit time dependence can substantially improve discrimination. This is expected because the counter allows the effective intervention implemented by the tester, encoded by the family of instruments $\{\mathcal{T}_{x_k}\}_{x_k,k}$ to depend on the time index $k$, while the overall device remains recurrent as in Theorem~\ref{thm:mad-compilation}. However, this improvement comes from an additional control resource, since one must be able to implement, or variationally optimize over, a different effective quantum instrument at different process slots. From a computational perspective, this also leads to a less scalable optimization problem, as the number of independently tunable operations grows with the number of process slots, in contrast to the recurrent time-independent setting where the same instrument is reused throughout. Thus, the time-dependent curves should be viewed as a more powerful comparison class of testers rather than as the primary operational restriction considered here. 

The central resource isolated by the $\mathsf{MAD}$ hierarchy is instead the coherent memory dimension $d_A$. By fixing the recurrent structure of the tester and varying $d_A$, one directly probes how much information about the past interaction history must be retained coherently in order to distinguish the processes. The numerical results make this dependence explicit. Increasing $d_A$ yields consistent gains in discrimination performance within both the time-independent and time-dependent classes, while the hierarchy approaches the unrestricted strategy-norm benchmark as $d_A$ becomes large. This behavior is precisely what one would expect if coherent memory is the limiting resource governing multi-time discrimination. In this sense, the figure illustrates two complementary mechanisms for improving performance. Explicit time dependence provides additional control over the probing strategy, whereas increasing $d_A$ expands the amount of temporal information that can be stored and processed coherently. The latter is the resource quantified by the $\mathsf{MAD}$ hierarchy, and the observed monotonic improvement with $d_A$ provides direct operational evidence that coherent memory controls the power of recurrent testers in distinguishing quantum processes.

Figure~\ref{fig:Helstrom_success_with_fit} provides strong numerical support for the effective description developed in Sec.~\ref{sec:heuristics}. In particular, the results support the idea that driven--dissipative dynamics govern the distinguishability which is continuously generated at each step, while also being suppressed by mixing. Here, the simulated success probability is well described by the form $p_{\mathrm{succ}}^\ast(N) \approx p_\infty - A e^{-cN}$~\footnote{Fit values for $d_A=(0,4)$ are, respectively, $p_\infty=(0.86,0.89)$, $A=(0.83,0.89)$ and $c=(0.45,0.46)$.}, demonstrating that the distinguishability dynamics exhibits exponential relaxation towards a steady value.

This behavior is consistent with the recursion $\delta_{\mathsf T}^{k+1} \!\approx\!\lambda \delta_{\mathsf T}^k\! +\! m$ and indicates that, despite the underlying trajectory-dependent dynamics, the ensemble-averaged evolution is effectively governed by a single contraction rate and a constant addition of distinguishability. Moreover, the asymptotic value $p_\infty < 1$ reflects a genuine limitation of finite-memory $\mathsf{MAD}$ testers. Even at long times, the tester does not extract all available distinguishability between the processes despite significantly increasing the initial $p_{\text{succ}}$ of around $60\%$ to above $85\%$ for $N=10$ time steps. This is further supported by the dependence of $p_\infty$ on the coherent memory dimension $d_A$, with larger memory enabling higher asymptotic success probability. In other words, the saturation value encodes the interplay between dynamical mixing and the memory constraints of the tester.

\begin{figure}
    \centering
    \includegraphics[width=0.95\linewidth]{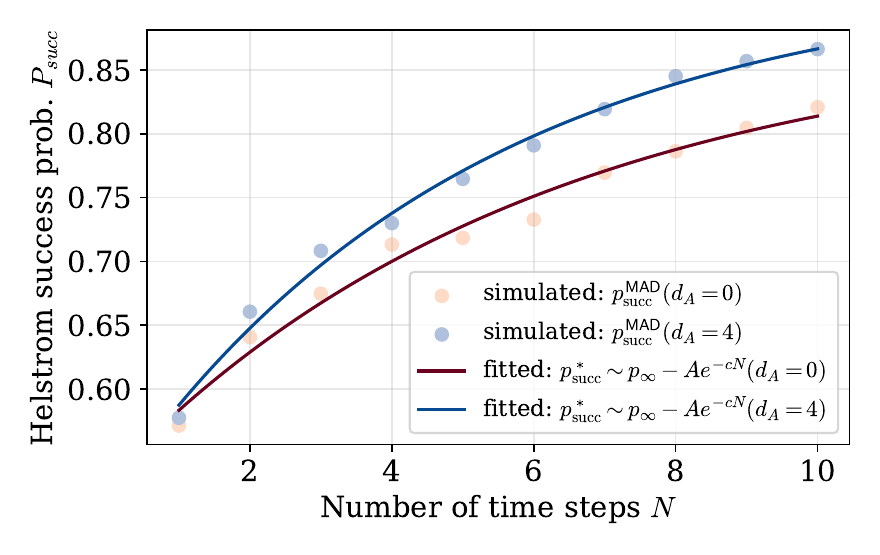}
    \caption{Helstrom success probability $P_{\mathrm{succ}}$ for numerically optimized fixed recurrent $\mathsf{MAD}$ testers versus the number of time steps $N$, for coherent ancilla dimensions $d_A=0$ (no coherent memory) and $d_A=4$. The pink and blue markers show the simulated values for $d_A=0$ and $d_A=4$, respectively. The corresponding solid maroon and blue curves are fits to the heuristic model $p_{\mathrm{succ}}^\ast \sim p_\infty - A e^{-cN}$. The good agreement indicates that the distinguishability dynamics is well captured by an effective driven-dissipative picture, with exponential approach to a fixed value.
}
    \label{fig:Helstrom_success_with_fit}
\end{figure}

\section{Conclusion and outlook}\label{sec:conclusion}

We introduced machines for autonomous distinction ($\mathsf{MAD}$s) as a framework for distinguishing multi-time quantum processes using testers with bounded coherent memory. The quantity
$d^{(N)}_{\mathrm{MAD}}(\mathbf{P}^N,\mathbf{Q}^N;d_A)$ gives the largest distinguishing bias achievable when the experimenter repeatedly probes the process with a fixed instrument, retains the full classical outcome record, and carries a coherent memory of dimension at most $d_A$. In this way, process distinguishability becomes an operational measure of accessible temporal information, quantifying which differences between processes are visible to an agent with limited quantum memory \cite{chiribella2009theoretical,pollock2018nonmarkovian,MilzPRX2021Qprocesses,ZambonDistinguishability2024}.

The resulting hierarchy interpolates between experimentally restricted and fully adaptive discrimination. For fixed $d_A$, $\mathsf{MAD}$ distinguishability gives an operationally meaningful lower bound on the strategy-norm distance. In the large-memory limit, however, the restriction is complete for finite times. Any admissible $N$-step tester can be compiled into a $\mathsf{MAD}$ tester with an internal counter and sufficiently large coherent memory, so that the hierarchy saturates the strategy-norm benchmark \cite{GutoskiWatrous2007,gutoski2012measure,watrous2018theory}. This identifies coherent memory as a key resource controlling the gap between realistic recurrent probing and fully adaptive process discrimination, in line with recent work on memory as a resource in multi-time quantum processes and channel discrimination \cite{milz2020,taranto2021,giarmatzi2023,taranto2024hierarchy,kechrimparis2025quantumagentscomplexity,ohst2026}. Indeed, in quantum channel discrimination, adaptive strategies can be strictly more powerful than non-adaptive ones, and refined classes of probing strategies, such as parallel, sequential, and indefinite-causal-order strategies, can form strict hierarchies of discrimination power~\cite{harrow2010adaptive,bavaresco2021strict}. These results concern memory and adaptivity on the probing side; here, we ask how such resources interact with temporal memory in the process itself.

For recurrent processes generated by repeated system--environment interactions, we derived a single-step description of the induced distinguishability. This separates the generation of new distinguishability from the propagation and decay of distinguishability generated at earlier times. The numerical examples show that increasing $d_A$ systematically improves the $\mathsf{MAD}$ success probability and closes the gap to the strategy-norm benchmark, while avoiding the full semidefinite program (SDP) over general testers. The latter has runtime scaling of order $\mathcal{O}(d_S^{12N})$ with state-of-the-art interior-point methods for solving the SDP, where $d_S$ is the system dimension~\cite{Jiang2020AFI}. This makes the framework particularly useful when the relevant temporal correlations can be compressed into a small coherent memory, as is often expected for mixing dynamics, finite-memory environments, collision models, or process tensors admitting compact tensor-network representations \cite{ciccarello2022collision,KretschmannWerner2005,jorgensen2020discrete,cygorek2022compression,dowling2024capturing}.

The same perspective also clarifies when finite-memory $\mathsf{MAD}$ distinguishability should perform poorly. Since a $\mathsf{MAD}$ only retains a bounded coherent memory, it is not expected to approximate the strategy-norm distance well when the relevant distinguishing information is genuinely quantum \cite{milz2020,taranto2024hierarchy}. This can occur when process differences are distributed coherently across many time steps, especially in processes with long-range temporal correlations or memory structures that are not efficiently captured by a small sequential memory~\cite{jorgensen2020discrete,dowling2024capturing}. It can also occur when information is stored in environmental degrees of freedom for long periods before being returned to the system, as in repeated-interaction and collision-model settings with persistent environments \cite{KretschmannWerner2005,ciccarello2022collision}. More generally, slowly mixing, non-ergodic, near-critical, or otherwise highly structured environments are natural regimes in which the effective memory timescale of the process may exceed the coherent memory available to the tester \cite{fux2021efficient,cygorek2022compression,ivander2024unified}. In such cases, convergence of the $\mathsf{MAD}$ hierarchy towards the strategy norm may require substantially larger ancilla dimensions. The gap between $d^{(N)}_{\mathrm{MAD}}$ and the strategy norm can therefore be interpreted not merely as a limitation of the restricted tester class, but as a diagnostic of the coherent temporal memory required to discriminate the underlying processes.

Several future directions and use cases follow from this viewpoint. First, the $\mathsf{MAD}$ hierarchy suggests a variational, tensor-network-inspired approach to process discrimination, with the coherent memory dimension $d_A$ playing a role analogous to a bond dimension \cite{Orus2014}. Rather than optimizing over all admissible testers, one can optimize over a family of recurrent testers of increasing coherent memory dimension. This gives a controlled sequence of lower bounds on the strategy-norm distinguishability and connects naturally with process-tensor methods for simulating, learning, and compressing non-Markovian dynamics \cite{fux2024oqupy,cygorek2024ace,keeling2025processTensorApproaches}.

Second, $\mathsf{MAD}$ distinguishability provides a diagnostic for process-tensor learning, benchmarking, and noise characterization. This complements non-Markovian quantum process tomography and recent efficient frameworks for non-Markovian characterization~\cite{whiteNonMarkovian2022PRX,white2025unifyingPRX}. Rather than reconstructing a full multi-time process, one may ask which process differences are visible to a finite-memory probing agent. This is relevant for experiments in which temporally correlated noise, crosstalk, or calibration drift must be detected with restricted control and limited coherent storage~\cite{white2020demonstration,Zhang2022NonMarkovianSuperconducting,WhitePRL2023,Tripathi2024BenchmarkingGates,Hashim2025QCVV,zhang2025learningForecasting}. In this setting, the $\mathsf{MAD}$ success probability gives an operational benchmark for the amount of non-Markovian information that can actually be extracted by realistic probing.

Third, the dependence of $d^{(N)}_{\mathsf{MAD}}$ on $d_A$ can be used as a memory diagnostic. Rapid convergence of the hierarchy indicates that the process differences relevant for discrimination are compressible into a small coherent memory. By contrast, a persistent gap between $d^{(N)}_{\mathsf{MAD}}$ and the strategy-norm distance signals that the processes differ in temporal correlations that require larger coherent memory to access. This makes the hierarchy useful not only as an approximation scheme, but also as a way of estimating the coherent memory resources required to detect non-Markovian effects and long-range temporal structure~\cite{milz2020,taranto2021,taranto2024hierarchy,ohst2026}.

Fourth, the recurrent structure of $\mathsf{MAD}$ testers connects directly to quantum reservoir computing and recurrent quantum models. A reservoir is useful precisely when it maps temporal inputs into distinguishable internal states and readout statistics. The $\mathsf{MAD}$ framework provides a discrimination-based way to quantify this ability, and may therefore serve as a benchmark for input separation, memory capacity, fading-memory behavior, and dissipative information processing in quantum reservoirs \cite{Ghosh2019_Reservoir_npj,fujii2021quantum,nakajima2019boosting,mujal2021opportunities,govia2021reservoir,suzuki2022natural,sannia2024dissipation,martinezpena2025inputDependence,wringe2025reservoirBenchmarks}.

Further extensions include treating the available control structure as a resource distinct from coherent memory. In the present framework, time dependence and feedback can be encoded through an internal counter, allowing the $\mathsf{MAD}$ hierarchy to isolate coherent memory as the constrained resource. In realistic implementations, however, the ability to vary the probing instrument over time, condition operations on previous outcomes, or implement a large family of controlled interventions may itself be limited. A refined hierarchy could therefore constrain both coherent memory and control structure, interpolating between strictly time-homogeneous recurrent testers, counter-controlled $\mathsf{MAD}$s, and fully adaptive comb testers \cite{chiribella2009theoretical,GutoskiWatrous2007,gutoski2012measure,watrous2018theory}. This would help separate limitations caused by insufficient coherent memory from those caused by restricted control, connecting the present framework to broader questions about resources, costs, and complexity in quantum control \cite{milz2020,taranto2024hierarchy}.

Taken together, these directions suggest that bounded-memory distinguishability is not only a tractable approximation to an ideal benchmark, but also a way of probing the operational structure of temporal quantum processes. The $\mathsf{MAD}$ hierarchy identifies which process differences are visible to agents with limited coherent memory, which differences require more powerful temporal resources, and how those requirements change with the structure of the process. In this sense, $\mathsf{MAD}$ distinguishability provides a bridge between the abstract strategy-norm theory of quantum processes and the finite-memory setups available in experiment, simulation, and recurrent quantum information processing.

\section{Acknowledgments}
The authors would like to thank Simon Milz, Alexander B. Boyd, and Marco Radaelli for insightful comments. The research conducted in this publication was funded by Taighde Éireann -- Research Ireland under grant number IRCLA/2022/3922.
\bibliography{mybib}

@article{ZambonDistinguishability2024,
  title = {Process tensor distinguishability measures},
  author = {Zambon, Guilherme},
  journal = {Phys. Rev. A},
  volume = {110},
  issue = {4},
  pages = {042210},
  numpages = {9},
  year = {2024},
  month = {Oct},
  publisher = {American Physical Society},
  doi = {10.1103/PhysRevA.110.042210},
  url = {https://link.aps.org/doi/10.1103/PhysRevA.110.042210}
}

@misc{Watrous2009SDP,
      title={Semidefinite programs for completely bounded norms}, 
      author={John Watrous},
      year={2009},
      primaryClass={quant-ph},
      url={https://arxiv.org/abs/0901.4709}, 
}

@article{MilzPRX2021Qprocesses,
  title = {Quantum Stochastic Processes and Quantum non-Markovian Phenomena},
  author = {Milz, Simon and Modi, Kavan},
  journal = {PRX Quantum},
  volume = {2},
  issue = {3},
  pages = {030201},
  numpages = {81},
  year = {2021},
  month = {Jul},
  publisher = {American Physical Society},
  doi = {10.1103/PRXQuantum.2.030201},
  url = {https://link.aps.org/doi/10.1103/PRXQuantum.2.030201}
}

@article{Berk2021ResourceTheoriesOf,
  author       = {Berk, Graeme D. and Garner, Andrew J. P. and Yadin, Benjamin and Modi, Kavan and Pollock, Felix A.},
  title        = {Resource theories of multi-time processes: A window into quantum non-Markovianity},
  journal      = {Quantum},
  volume       = {5},
  pages        = {435},
  year         = {2021},
  doi          = {10.22331/q-2021-04-20-435},
}

@article{Distinguishability_Yang_2020_PRE,
  title = {Measures of distinguishability between stochastic processes},
  author = {Yang, Chengran and Binder, Felix C. and Gu, Mile and Elliott, Thomas J.},
  journal = {Phys. Rev. E},
  volume = {101},
  issue = {6},
  pages = {062137},
  numpages = {10},
  year = {2020},
  month = {Jun},
  publisher = {American Physical Society},
  doi = {10.1103/PhysRevE.101.062137},
  url = {https://link.aps.org/doi/10.1103/PhysRevE.101.062137}
}

@article{Ghosh2019_Reservoir_npj,
  title = {Quantum Reservoir Processing},
  author = {Ghosh, Sarma and Opanchuk, Bogdan and Rosales-Zárate, Laura and Wilson, Bela and Reid, Margaret D.},
  journal = {npj Quantum Information},
  volume = {5},
  number = {1},
  pages = {35},
  year = {2019},
  doi = {10.1038/s41534-019-0148-8}
}

@article{GilchristLangfordNielsen2005,
  author       = {A. Gilchrist and N. K. Langford and M. A. Nielsen},
  title        = {Distance measures to compare real and ideal quantum processes},
  journal      = {Phys. Rev. A},
  volume       = {71},
  pages        = {062310},
  year         = {2005},
  doi          = {10.1103/PhysRevA.71.062310}
}

@article{PollockRodriguezRosarioFrauenheimPaternostroModi2018,
  author       = {F. A. Pollock and C. A. Rodr{\'i}guez-Rosario and T. Frauenheim and M. Paternostro and K. Modi},
  title        = {Non-Markovian quantum processes: Complete framework and efficient characterization},
  journal      = {Phys. Rev. A},
  volume       = {97},
  pages        = {012127},
  year         = {2018},
  doi          = {10.1103/PhysRevA.97.012127}
}

@article{Pollock2018,
  author       = {F. A. Pollock and C. A. Rodr{\'i}guez-Rosario and T. Frauenheim and M. Paternostro and K. Modi},
  title        = {Operational Markov Condition for Quantum Processes},
  journal      = {Phys. Rev. Lett.},
  volume       = {120},
  pages        = {040405},
  year         = {2018},
  doi          = {10.1103/PhysRevLett.120.040405}
}

@article{MilzPollockModi2016,
  author       = {S. Milz and F. A. Pollock and K. Modi},
  title        = {Reconstructing open quantum system dynamics with limited control},
  journal      = {Phys. Rev. A},
  volume       = {96},
  pages        = {042109},
  year         = {2017},
  doi          = {10.1103/PhysRevA.96.042109}
}

@article{Tripathi2024BenchmarkingGates,
  author  = {Vinay Tripathi and Daria Kowsari and Kumar Saurav and Haimeng Zhang and Eli M. Levenson-Falk and Daniel A. Lidar},
  title   = {Benchmarking Quantum Gates and Circuits},
  journal = {Chemical Reviews},
  volume  = {125},
  number  = {12},
  pages   = {5745--5775},
  year    = {2024},
  doi     = {10.1021/acs.chemrev.4c00870}
}

@article{Hashim2025QCVV,
  title = {Practical Introduction to Benchmarking and Characterization of Quantum Computers},
  author = {Hashim, Akel and Nguyen, Long B. and Goss, Noah and Marinelli, Brian and Naik, Ravi K. and Chistolini, Trevor and Hines, Jordan and Marceaux, J.P. and Kim, Yosep and Gokhale, Pranav and Tomesh, Teague and Chen, Senrui and Jiang, Liang and Ferracin, Samuele and Rudinger, Kenneth and Proctor, Timothy and Young, Kevin C. and Siddiqi, Irfan and Blume-Kohout, Robin},
  journal = {PRX Quantum},
  volume = {6},
  issue = {3},
  pages = {030202},
  numpages = {132},
  year = {2025},
  month = {Aug},
  publisher = {American Physical Society},
  doi = {10.1103/PRXQuantum.6.030202},
  url = {https://link.aps.org/doi/10.1103/PRXQuantum.6.030202}
}

@article{NakajimaFujiiNegoroMitaraiKitagawa2018,
  author       = {K. Nakajima and K. Fujii and M. Negoro and K. Mitarai and M. Kitagawa},
  title        = {Boosting computational power through spatial multiplexing in quantum reservoir computing},
  journal      = {Phys. Rev. Applied},
  volume       = {11},
  pages        = {034021},
  year         = {2019},
  doi          = {10.1103/PhysRevApplied.11.034021}
}

@article{CiracVerstraete2021,
  title = {Matrix product states and projected entangled pair states: Concepts, symmetries, theorems},
  author = {Cirac, J. Ignacio and P\'erez-Garc\'{\i}a, David and Schuch, Norbert and Verstraete, Frank},
  journal = {Rev. Mod. Phys.},
  volume = {93},
  issue = {4},
  pages = {045003},
  numpages = {65},
  year = {2021},
  month = {Dec},
  publisher = {American Physical Society},
  doi = {10.1103/RevModPhys.93.045003},
  url = {https://link.aps.org/doi/10.1103/RevModPhys.93.045003}
}

@article{Orus2014,
title = {A practical introduction to tensor networks: Matrix product states and projected entangled pair states},
journal = {Annals of Physics},
volume = {349},
pages = {117-158},
year = {2014},
issn = {0003-4916},
doi = {https://doi.org/10.1016/j.aop.2014.06.013},
url = {https://www.sciencedirect.com/science/article/pii/S0003491614001596},
author = {Román Orús}
}

@article{CrutchfieldYoung1989,
  author       = {J. P. Crutchfield and K. Young},
  title        = {Inferring statistical complexity},
  journal      = {Phys. Rev. Lett.},
  volume       = {63},
  pages        = {105--108},
  year         = {1989},
  doi          = {10.1103/PhysRevLett.63.105}
}

@article{ShaliziCrutchfield2001,
  author       = {C. R. Shalizi and J. P. Crutchfield},
  title        = {Computational mechanics: Pattern and prediction, structure and simplicity},
  journal      = {J. Stat. Phys.},
  volume       = {104},
  pages        = {817--879},
  year         = {2001},
  doi          = {10.1023/A:1010388907793}
}

@article{GuEtAl2012,
  author       = {M. Gu and K. Wiesner and E. Rieper and V. Vedral},
  title        = {Quantum mechanics can reduce the complexity of classical models},
  journal      = {Nat. Commun.},
  volume       = {3},
  pages        = {762},
  year         = {2012},
  doi          = {10.1038/ncomms1761}
}

@article{Zhang2022NonMarkovianSuperconducting,
  title = {Predicting Non-Markovian Superconducting-Qubit Dynamics from Tomographic Reconstruction},
  author = {Zhang, Haimeng and Pokharel, Bibek and Levenson-Falk, E.M. and Lidar, Daniel},
  journal = {Phys. Rev. Appl.},
  volume = {17},
  issue = {5},
  pages = {054018},
  numpages = {22},
  year = {2022},
  month = {May},
  publisher = {American Physical Society},
  doi = {10.1103/PhysRevApplied.17.054018},
  url = {https://link.aps.org/doi/10.1103/PhysRevApplied.17.054018}
}

@article{WhitePRL2023,
  title = {Filtering Crosstalk from Bath Non-Markovianity via Spacetime Classical Shadows},
  author = {White, G. A. L. and Modi, K. and Hill, C. D.},
  journal = {Phys. Rev. Lett.},
  volume = {130},
  issue = {16},
  pages = {160401},
  numpages = {6},
  year = {2023},
  month = {Apr},
  publisher = {American Physical Society},
  doi = {10.1103/PhysRevLett.130.160401},
  url = {https://link.aps.org/doi/10.1103/PhysRevLett.130.160401}
}

@inproceedings{GutoskiWatrous2007,
  author    = {Gus Gutoski and John Watrous},
  title     = {Toward a General Theory of Quantum Games},
  booktitle = {Proceedings of the 39th Annual ACM Symposium on Theory of Computing (STOC)},
  year      = {2007},
  pages     = {565--574},
  doi       = {10.1145/1250790.1250873},
  publisher = {ACM}
}

@article{helstrom1969,
  author  = {Carl W. Helstrom},
  title   = {Quantum Detection and Estimation Theory},
  journal = {Journal of Statistical Physics},
  volume  = {1},
  number  = {2},
  pages   = {231--252},
  year    = {1969}
}

@article{chiribella2008,
  author  = {Chiribella, G. and D'Ariano, G. M. and Perinotti, P.},
  title   = {Transforming quantum operations: quantum supermaps},
  journal = {EPL (Europhysics Letters)},
  volume  = {83},
  number  = {3},
  pages   = {30004},
  year    = {2008},
  doi     = {10.1209/0295-5075/83/30004},
  url     = {https://doi.org/10.1209/0295-5075/83/30004},
  eprint  = {0804.0180},
}

@book{holevo2011,
  author    = {Holevo, Alexander},
  title     = {Probabilistic and Statistical Aspects of Quantum Theory},
  series    = {Publications of the Scuola Normale Superiore},
  volume    = {1},
  publisher = {Edizioni della Normale Pisa},
  year      = {2011},
  doi       = {10.1007/978-88-7642-378-9},
  url       = {https://doi.org/10.1007/978-88-7642-378-9},
  isbn      = {978-88-7642-378-9}
}

@article{ohst2026,
  author = {Ohst, Ties-Albrecht and Zhang, Shijun and Nguyen, Hai Chau and Pl{\'a}vala, Martin and Quintino, Marco T{\'u}lio},
  title = {Characterising memory in quantum channel discrimination via constrained separability problems},
  journal = {Quantum},
  volume = {10},
  pages = {1988},
  year = {2026},
  doi = {10.22331/q-2026-01-28-1988},
  eprint = {arXiv:2411.08110}
}

@article{bae2015,
  author = {Bae, Joonwoo and Kwek, Leong-Chuan},
  title = {Quantum state discrimination and its applications},
  journal = {Journal of Physics A: Mathematical and Theoretical},
  volume = {48},
  number = {8},
  pages = {083001},
  year = {2015},
  doi = {10.1088/1751-8113/48/8/083001}
}

@article{taranto2021,
  author = {Taranto, Philip and Pollock, Felix A. and Milz, Simon and Tomamichel, Marco and Modi, Kavan},
  title = {Non-Markovian memory strength bounds quantum process recoverability},
  journal = {npj Quantum Information},
  volume = {7},
  pages = {149},
  year = {2021},
  doi = {10.1038/s41534-021-00481-4}
}

@article{milz2020,
  author = {Milz, Simon and Egloff, Daniel and Taranto, Philip and Theurer, Thomas and Plenio, Martin B. and Smirne, Andrea and Huelga, Susana F.},
  title = {When Is a Non-Markovian Quantum Process Classical?},
  journal = {Physical Review X},
  volume = {10},
  number = {4},
  pages = {041049},
  year = {2020},
  doi = {10.1103/PhysRevX.10.041049}
}

@article{giarmatzi2023,
  author = {Giarmatzi, Cristina and Costa, Fabio},
  title = {Witnessing quantum memory in stochastic processes},
  journal = {Quantum},
  volume = {7},
  pages = {1036},
  year = {2023}
}

@article{preskill2018nisq,
  author = {Preskill, John},
  title = {Quantum Computing in the NISQ era and beyond},
  journal = {Quantum},
  volume = {2},
  pages = {79},
  year = {2018}
}

@article{mironowicz2024sdp,
  author  = {Piotr Mironowicz},
  title   = {Semi-definite programming and quantum information},
  journal = {Journal of Physics A: Mathematical and Theoretical},
  volume  = {57},
  number  = {16},
  pages   = {163002},
  year    = {2024},
  doi     = {10.1088/1751-8121/ad2b85},
  url     = {https://doi.org/10.1088/1751-8121/ad2b85}
}

@article{Jiang2020AFI,
  title={A Faster Interior Point Method for Semidefinite Programming},
  author={Haotian Jiang and Tarun Kathuria and Yin Tat Lee and Swati Padmanabhan and Zhao Song},
  journal={2020 IEEE 61st Annual Symposium on Foundations of Computer Science (FOCS)},
  year={2020},
  pages={910-918},
  url={https://api.semanticscholar.org/CorpusID:221836388}
}

@misc{Wolf2012,
  author = {Michael M. Wolf},
  title = {Quantum Channels \& Operations: Guided Tour},
  year = {2012},
  note = {Lecture notes},
  url = {https://www-m5.ma.tum.de/foswiki/pub/M5/Allgemeines/MichaelWolf/QChannelLecture.pdf}
}

@article{Kossakowski2012,
  title = {Quantum-correlation breaking channels, broadcasting scenarios, and finite Markov chains},
  author = {Korbicz, J. K. and Horodecki, P. and Horodecki, R.},
  journal = {Phys. Rev. A},
  volume = {86},
  issue = {4},
  pages = {042319},
  numpages = {8},
  year = {2012},
  month = {Oct},
  publisher = {American Physical Society},
  doi = {10.1103/PhysRevA.86.042319},
  url = {https://link.aps.org/doi/10.1103/PhysRevA.86.042319}
}

@article{Temme2010,
  author    = {Temme, K. and Kastoryano, M. J. and Ruskai, Mary Beth and Wolf, Michael M. and Verstraete, Frank},
  title     = {The $-\alpha$-divergence and mixing times of quantum Markov processes},
  journal   = {Journal of Mathematical Physics},
  volume    = {51},
  number    = {12},
  pages     = {122201},
  year      = {2010},
  doi       = {10.1063/1.3511335}
}

@article{Hirche2022,
  author  = {Hirche, Christoph and Rouz{\'e}, Cambyse and Stilck Fran{\c{c}}a, Daniel},
  title   = {On contraction coefficients, partial orders and approximation of capacities for quantum channels},
  journal = {Quantum},
  volume  = {6},
  pages   = {862},
  year    = {2022},
  doi     = {10.22331/q-2022-11-28-862}
}

@article{Souissi2025exponential,
  author    = {Souissi, Abdessatar and Barhoumi, Abdessatar},
  title     = {An Exponential Mixing Condition for Quantum Channels: Application to Matrix Product States},
  journal   = {Quantum Information Processing},
  volume    = {24},
  pages     = {150},
  year      = {2025},
  doi       = {10.1007/s11128-025-04762-1}
}

@article{gutoski2012measure,
  title = {On a measure of distance for quantum strategies},
  author = {Gutoski, Gus},
  journal = {Journal of Mathematical Physics},
  volume = {53},
  number = {3},
  pages = {032202},
  year = {2012},
  doi = {10.1063/1.3693621},
  url = {https://doi.org/10.1063/1.3693621}
}

@book{watrous2018theory,
  title = {The Theory of Quantum Information},
  author = {Watrous, John},
  publisher = {Cambridge University Press},
  address = {Cambridge},
  year = {2018},
  doi = {10.1017/9781316848142},
  url = {https://doi.org/10.1017/9781316848142}
}

@article{chiribella2009theoretical,
  title = {Theoretical framework for quantum networks},
  author = {Chiribella, Giulio and D'Ariano, Giacomo Mauro and Perinotti, Paolo},
  journal = {Physical Review A},
  volume = {80},
  pages = {022339},
  year = {2009},
  doi = {10.1103/PhysRevA.80.022339},
  url = {https://doi.org/10.1103/PhysRevA.80.022339}
}

@article{pollock2018nonmarkovian,
  title = {Non-{M}arkovian quantum processes: Complete framework and efficient characterization},
  author = {Pollock, Felix A. and Rodr{\'i}guez-Rosario, C{\'e}sar and Frauenheim, Thomas and Paternostro, Mauro and Modi, Kavan},
  journal = {Physical Review A},
  volume = {97},
  pages = {012127},
  year = {2018},
  doi = {10.1103/PhysRevA.97.012127},
  url = {https://doi.org/10.1103/PhysRevA.97.012127}
}

@article{Milz2017,
  author  = {Milz, Simon and Pollock, Felix A. and Modi, Kavan},
  title   = {An Introduction to Operational Quantum Dynamics},
  journal = {Open Systems \& Information Dynamics},
  volume  = {24},
  number  = {04},
  pages   = {1740016},
  year    = {2017},
  doi     = {10.1142/S1230161217400169},
  url     = {https://doi.org/10.1142/S1230161217400169}
}

@article{milz2021quantum,
  title = {Quantum Stochastic Processes and Quantum non-{M}arkovian Phenomena},
  author = {Milz, Simon and Modi, Kavan},
  journal = {PRX Quantum},
  volume = {2},
  pages = {030201},
  year = {2021},
  doi = {10.1103/PRXQuantum.2.030201},
  url = {https://doi.org/10.1103/PRXQuantum.2.030201}
}

@article{taranto2024hierarchy,
  title = {Characterising the Hierarchy of Multi-time Quantum Processes with Classical Memory},
  author = {Taranto, Philip and Quintino, Marco T{\'u}lio and Murao, Mio and Milz, Simon},
  journal = {Quantum},
  volume = {8},
  pages = {1328},
  year = {2024},
  doi = {10.22331/q-2024-05-02-1328},
  url = {https://doi.org/10.22331/q-2024-05-02-1328}
}

@article{jorgensen2020discrete,
  title = {Discrete memory kernel for multitime correlations in non-{M}arkovian quantum processes},
  author = {J{\o}rgensen, Mathias R. and Pollock, Felix A.},
  journal = {Physical Review A},
  volume = {102},
  pages = {052206},
  year = {2020},
  doi = {10.1103/PhysRevA.102.052206},
  url = {https://doi.org/10.1103/PhysRevA.102.052206}
}

@article{dowling2024capturing,
  title = {Capturing Long-Range Memory Structures with Tree-Geometry Process Tensors},
  author = {Dowling, Neil and Modi, Kavan and Mu{\~n}oz, Roberto N. and Singh, Sukhbinder and White, Gregory A. L.},
  journal = {Physical Review X},
  volume = {14},
  pages = {041018},
  year = {2024},
  doi = {10.1103/PhysRevX.14.041018},
  url = {https://doi.org/10.1103/PhysRevX.14.041018}
}

@article{ciccarello2022collision,
  title = {Quantum collision models: Open system dynamics from repeated interactions},
  author = {Ciccarello, Francesco and Lorenzo, Salvatore and Giovannetti, Vittorio and Palma, G. Massimo},
  journal = {Physics Reports},
  volume = {954},
  pages = {1--70},
  year = {2022},
  doi = {10.1016/j.physrep.2022.01.001},
  url = {https://doi.org/10.1016/j.physrep.2022.01.001}
}

@article{cygorek2022compression,
  title = {Simulation of open quantum systems by automated compression of arbitrary environments},
  author = {Cygorek, Moritz and Cosacchi, Matteo and Vagov, Alexander and Axt, V. M. and Lovett, Brendon W. and Keeling, Jonathan and Gauger, Erik M.},
  journal = {Nature Physics},
  volume = {18},
  pages = {662--668},
  year = {2022},
  doi = {10.1038/s41567-022-01544-9},
  url = {https://doi.org/10.1038/s41567-022-01544-9}
}

@article{fux2021efficient,
  title = {Efficient Exploration of Hamiltonian Parameter Space for Optimal Control of Non-{M}arkovian Open Quantum Systems},
  author = {Fux, Gerald E. and Butler, Eoin P. and Eastham, Paul R. and Lovett, Brendon W. and Keeling, Jonathan},
  journal = {Physical Review Letters},
  volume = {126},
  pages = {200401},
  year = {2021},
  doi = {10.1103/PhysRevLett.126.200401},
  url = {https://doi.org/10.1103/PhysRevLett.126.200401}
}

@article{ivander2024unified,
  author  = {Ivander, Felix and Lindoy, Lachlan P. and Lee, Joonho},
  title   = {Unified framework for open quantum dynamics with memory},
  journal = {Nature Communications},
  volume  = {15},
  pages   = {8087},
  year    = {2024},
  doi     = {10.1038/s41467-024-52081-3},
  url     = {https://doi.org/10.1038/s41467-024-52081-3}
}

@incollection{fujii2021quantum,
  title = {Quantum reservoir computing: A reservoir approach toward quantum machine learning on near-term quantum devices},
  author = {Fujii, Keisuke and Nakajima, Kohei},
  booktitle = {Reservoir Computing: Theory, Physical Implementations, and Applications},
  editor = {Nakajima, Kohei and Fischer, Ingo},
  publisher = {Springer Singapore},
  address = {Singapore},
  pages = {423--450},
  year = {2021},
  doi = {10.1007/978-981-13-1687-6_18},
  url = {https://doi.org/10.1007/978-981-13-1687-6_18}
}

@article{nakajima2019boosting,
  title = {Boosting Computational Power through Spatial Multiplexing in Quantum Reservoir Computing},
  author = {Nakajima, Kohei and Fujii, Keisuke and Negoro, Makoto and Mitarai, Kosuke and Kitagawa, Masahiro},
  journal = {Physical Review Applied},
  volume = {11},
  pages = {034021},
  year = {2019},
  doi = {10.1103/PhysRevApplied.11.034021},
  url = {https://doi.org/10.1103/PhysRevApplied.11.034021}
}

@article{mujal2021opportunities,
  title = {Opportunities in Quantum Reservoir Computing and Extreme Learning Machines},
  author = {Mujal, Pere and Mart{\'i}nez-Pe{\~n}a, Rodrigo and Nokkala, Johannes and Garc{\'i}a-Beni, Jorge and Giorgi, Gian Luca and Soriano, Miguel C. and Zambrini, Roberta},
  journal = {Advanced Quantum Technologies},
  volume = {4},
  pages = {2100027},
  year = {2021},
  doi = {10.1002/qute.202100027},
  url = {https://doi.org/10.1002/qute.202100027}
}

@article{govia2021reservoir,
  title = {Quantum reservoir computing with a single nonlinear oscillator},
  author = {Govia, L. C. G. and Ribeill, G. J. and Rowlands, G. E. and Krovi, H. K. and Ohki, T. A.},
  journal = {Physical Review Research},
  volume = {3},
  pages = {013077},
  year = {2021},
  doi = {10.1103/PhysRevResearch.3.013077},
  url = {https://doi.org/10.1103/PhysRevResearch.3.013077}
}

@article{suzuki2022natural,
  title = {Natural quantum reservoir computing for temporal information processing},
  author = {Suzuki, Yudai and Gao, Qi and Pradel, Ken C. and Yasuoka, Kenji and Yamamoto, Naoki},
  journal = {Scientific Reports},
  volume = {12},
  pages = {1353},
  year = {2022},
  doi = {10.1038/s41598-022-05061-w},
  url = {https://doi.org/10.1038/s41598-022-05061-w}
}

@article{wilde2020amortized,
  title = {Amortized channel divergence for asymptotic quantum channel discrimination},
  author = {Wilde, Mark M. and Berta, Mario and Hirche, Christoph and Kaur, Eneet},
  journal = {Letters in Mathematical Physics},
  volume = {110},
  pages = {2277--2336},
  year = {2020},
  doi = {10.1007/s11005-020-01297-7},
  url = {https://doi.org/10.1007/s11005-020-01297-7}
}

@article{wang2019asymmetric,
  title = {Resource theory of asymmetric distinguishability for quantum channels},
  author = {Wang, Xin and Wilde, Mark M.},
  journal = {Physical Review Research},
  volume = {1},
  pages = {033169},
  year = {2019},
  doi = {10.1103/PhysRevResearch.1.033169},
  url = {https://doi.org/10.1103/PhysRevResearch.1.033169}
}

@article{katariya2021geometric,
  title = {Geometric distinguishability measures limit quantum channel estimation and discrimination},
  author = {Katariya, Vishal and Wilde, Mark M.},
  journal = {Quantum Information Processing},
  volume = {20},
  pages = {78},
  year = {2021},
  doi = {10.1007/s11128-021-02992-7},
  url = {https://doi.org/10.1007/s11128-021-02992-7}
}

@article{nakahira2021general,
  title = {Generalized quantum process discrimination problems},
  author = {Nakahira, Kenji and Kato, Kentaro},
  journal = {Physical Review A},
  volume = {103},
  pages = {062606},
  year = {2021},
  doi = {10.1103/PhysRevA.103.062606},
  url = {https://doi.org/10.1103/PhysRevA.103.062606}
}

@article{hirche2023network,
  title = {Quantum Network Discrimination},
  author = {Hirche, Christoph},
  journal = {Quantum},
  volume = {7},
  pages = {1064},
  year = {2023},
  doi = {10.22331/q-2023-07-25-1064},
  url = {https://doi.org/10.22331/q-2023-07-25-1064}
}

@article{portmann2022security,
  title = {Security in quantum cryptography},
  author = {Portmann, Christopher and Renner, Renato},
  journal = {Reviews of Modern Physics},
  volume = {94},
  pages = {025008},
  year = {2022},
  doi = {10.1103/RevModPhys.94.025008},
  url = {https://doi.org/10.1103/RevModPhys.94.025008}
}

@article{chiribella2008memory,
  title = {Memory Effects in Quantum Channel Discrimination},
  author = {Chiribella, Giulio and D'Ariano, Giacomo Mauro and Perinotti, Paolo},
  journal = {Physical Review Letters},
  volume = {101},
  pages = {180501},
  year = {2008},
  doi = {10.1103/PhysRevLett.101.180501},
  url = {https://doi.org/10.1103/PhysRevLett.101.180501}
}

@article{harrow2010adaptive,
  title = {Adaptive versus nonadaptive strategies for quantum channel discrimination},
  author = {Harrow, Aram W. and Hassidim, Avinatan and Leung, Debbie W. and Watrous, John},
  journal = {Physical Review A},
  volume = {81},
  pages = {032339},
  year = {2010},
  doi = {10.1103/PhysRevA.81.032339},
  url = {https://doi.org/10.1103/PhysRevA.81.032339}
}

@article{jencova2016conditions,
  title = {Conditions for optimal input states for discrimination of quantum channels},
  author = {Jen{\v{c}}ov{\'a}, Anna and Pl{\'a}vala, Martin},
  journal = {Journal of Mathematical Physics},
  volume = {57},
  pages = {122203},
  year = {2016},
  doi = {10.1063/1.4972286},
  url = {https://doi.org/10.1063/1.4972286}
}

@article{bavaresco2021strict,
  title = {Strict Hierarchy between Parallel, Sequential, and Indefinite-Causal-Order Strategies for Channel Discrimination},
  author = {Bavaresco, Jessica and Murao, Mio and Quintino, Marco T{\'u}lio},
  journal = {Physical Review Letters},
  volume = {127},
  pages = {200504},
  year = {2021},
  doi = {10.1103/PhysRevLett.127.200504},
  url = {https://doi.org/10.1103/PhysRevLett.127.200504}
}

@article{salek2022usefulness,
  title = {Usefulness of adaptive strategies in asymptotic quantum channel discrimination},
  author = {Salek, Farzin and Hayashi, Masahito and Winter, Andreas},
  journal = {Physical Review A},
  volume = {105},
  pages = {022419},
  year = {2022},
  doi = {10.1103/PhysRevA.105.022419},
  url = {https://doi.org/10.1103/PhysRevA.105.022419}
}

@article{nakahira2021restricted,
  title = {Quantum process discrimination with restricted strategies},
  author = {Nakahira, Kenji},
  journal = {Physical Review A},
  volume = {104},
  pages = {062609},
  year = {2021},
  doi = {10.1103/PhysRevA.104.062609},
  url = {https://doi.org/10.1103/PhysRevA.104.062609}
}

@misc{lyu2026variational,
  title = {Variational Quantum Dimension Reduction for Recurrent Quantum Models},
  author = {Lyu, Chufan and Wang, Ximing and Gu, Mile and Elliott, Thomas J. and Yang, Chengran},
  year = {2026},
  eprint = {2603.09567},
  archivePrefix = {arXiv},
  primaryClass = {quant-ph},
  url = {https://arxiv.org/abs/2603.09567}
}

@article{white2020demonstration,
  title = {Demonstration of non-{M}arkovian process characterisation and control on a quantum processor},
  author = {White, G. A. L. and Hill, C. D. and Pollock, F. A. and Hollenberg, L. C. L. and Modi, K.},
  journal = {Nature Communications},
  volume = {11},
  pages = {6301},
  year = {2020},
  doi = {10.1038/s41467-020-20113-3},
  url = {https://doi.org/10.1038/s41467-020-20113-3}
}

@article{zhang2022predicting,
  title = {Predicting non-{M}arkovian superconducting-qubit dynamics from tomographic reconstruction},
  author = {Zhang, Haimeng and Pokharel, Bibek and Levenson-Falk, E. M. and Lidar, Daniel A.},
  journal = {Physical Review Applied},
  volume = {17},
  pages = {054018},
  year = {2022},
  doi = {10.1103/PhysRevApplied.17.054018},
  url = {https://doi.org/10.1103/PhysRevApplied.17.054018}
}

@article{nakamura2024gate,
  title = {Gate Operations for Superconducting Qubits and Non-{M}arkovianity},
  author = {Nakamura, Kiyoto and Ankerhold, Joachim},
  journal = {Physical Review Research},
  volume = {6},
  pages = {033215},
  year = {2024},
  doi = {10.1103/PhysRevResearch.6.033215},
  url = {https://doi.org/10.1103/PhysRevResearch.6.033215}
}

@article{OrtegaTaberner2024Unifying,
  author  = {Ortega-Taberner, Carlos and O'Neill, Eoin and Butler, Eoin and Fux, Gerald E. and Eastham, P. R.},
  title   = {Unifying methods for optimal control in non-Markovian quantum systems via process tensors},
  journal = {The Journal of Chemical Physics},
  volume  = {161},
  number  = {12},
  pages   = {124119},
  year    = {2024},
  doi     = {10.1063/5.0226031},
  url     = {https://doi.org/10.1063/5.0226031}
}

@article{AttalPautrat2006,
  author  = {Attal, St{\'e}phane and Pautrat, Yan},
  title   = {From Repeated to Continuous Quantum Interactions},
  journal = {Annales Henri Poincar{\'e}},
  volume  = {7},
  number  = {1},
  pages   = {59--104},
  year    = {2006},
  doi     = {10.1007/s00023-005-0242-8},
  url     = {https://doi.org/10.1007/s00023-005-0242-8}
}

@article{BruneauJoyeMerkli2014,
  author        = {Bruneau, Laurent and Joye, Alain and Merkli, Marco},
  title         = {Repeated Interactions in Open Quantum Systems},
  journal       = {Journal of Mathematical Physics},
  volume        = {55},
  number        = {7},
  pages         = {075204},
  year          = {2014},
  doi           = {10.1063/1.4879240},
  url           = {https://doi.org/10.1063/1.4879240},
  eprint        = {1305.2472},
  archivePrefix = {arXiv},
  primaryClass  = {math-ph}
}

@article{KretschmannWerner2005,
  author        = {Kretschmann, Dennis and Werner, Reinhard F.},
  title         = {Quantum Channels with Memory},
  journal       = {Physical Review A},
  volume        = {72},
  number        = {6},
  pages         = {062323},
  year          = {2005},
  doi           = {10.1103/PhysRevA.72.062323},
  url           = {https://doi.org/10.1103/PhysRevA.72.062323},
  eprint        = {quant-ph/0502106},
  archivePrefix = {arXiv},
  primaryClass  = {quant-ph}
}

@book{BreuerPetruccione2002,
  author    = {Breuer, Heinz-Peter and Petruccione, Francesco},
  title     = {The Theory of Open Quantum Systems},
  publisher = {Oxford University Press},
  address   = {Oxford},
  year      = {2002},
  isbn      = {978-0-19-852063-4}
}

@book{Nielsen_Chuang_2010, 
place={Cambridge}, 
title={Quantum Computation and Quantum Information: 10th Anniversary Edition}, 
publisher={Cambridge University Press}, 
author={Nielsen, Michael A. and Chuang, Isaac L.}, 
year={2010}}

@article{Kitaev1997QuantumComputations,
  author  = {Kitaev, A. Yu.},
  title   = {Quantum computations: algorithms and error correction},
  journal = {Russian Mathematical Surveys},
  volume  = {52},
  number  = {6},
  pages   = {1191--1249},
  year    = {1997},
  doi     = {10.1070/RM1997v052n06ABEH002155}
}

@article{terhal2015quantum,
  author = {Terhal, Barbara M.},
  title = {Quantum Error Correction for Quantum Memories},
  journal = {Reviews of Modern Physics},
  volume = {87},
  pages = {307--346},
  year = {2015},
  doi = {10.1103/RevModPhys.87.307}
}

@article{heshami2016quantum,
  author = {Heshami, Khabat and England, Duncan G. and Humphreys, Peter C. and Bustard, Philip J. and Acosta, Victor M. and Nunn, Joshua and Sussman, Benjamin J.},
  title = {Quantum Memories: Emerging Applications and Recent Advances},
  journal = {Journal of Modern Optics},
  volume = {63},
  number = {20},
  pages = {2005--2028},
  year = {2016},
  doi = {10.1080/09500340.2016.1148212}
}

@misc{keeling2025processTensorApproaches,
  title = {Process Tensor Approaches to Non-Markovian Quantum Dynamics},
  author = {Keeling, Jonathan and Stoudenmire, E. Miles and Ba{\~n}uls, Mari-Carmen and Reichman, David R.},
  year = {2025},
  eprint = {2509.07661},
  archivePrefix = {arXiv},
  primaryClass = {quant-ph},
  doi = {10.48550/arXiv.2509.07661},
  url = {https://arxiv.org/abs/2509.07661},
  note = {Invited perspective}
}

@article{fux2024oqupy,
  title = {{OQuPy}: A {Python} package to efficiently simulate non-{Markovian} open quantum systems with process tensors},
  author = {Fux, Gerald E. and Fowler-Wright, Piper and Beckles, Joel and Butler, Eoin P. and Eastham, Paul R. and Gribben, Dominic and Keeling, Jonathan and Kilda, Dainius and Kirton, Peter and Lawrence, Ewen D. C. and Lovett, Brendon W. and {O'Neill}, Eoin and Strathearn, Aidan and de Wit, Roosmarijn},
  journal = {Journal of Chemical Physics},
  volume = {161},
  number = {12},
  pages = {124108},
  year = {2024},
  doi = {10.1063/5.0225367},
  url = {https://doi.org/10.1063/5.0225367}
}

@article{cygorek2024ace,
  title = {{ACE}: A general-purpose non-{Markovian} open quantum systems simulation toolkit based on process tensors},
  author = {Cygorek, Moritz and Gauger, Erik M.},
  journal = {Journal of Chemical Physics},
  volume = {161},
  number = {7},
  pages = {074111},
  year = {2024},
  doi = {10.1063/5.0221182},
  url = {https://doi.org/10.1063/5.0221182}
}

@article{sannia2024dissipation,
  title = {Dissipation as a resource for {Quantum} {Reservoir} {Computing}},
  author = {Sannia, Antonio and Mart{\'i}nez-Pe{\~n}a, Rodrigo and Soriano, Miguel C. and Giorgi, Gian Luca and Zambrini, Roberta},
  journal = {Quantum},
  volume = {8},
  pages = {1291},
  year = {2024},
  month = mar,
  doi = {10.22331/q-2024-03-20-1291},
  url = {https://doi.org/10.22331/q-2024-03-20-1291}
}

@article{zhang2025learningForecasting,
  title = {Learning and forecasting open quantum dynamics with correlated noise},
  author = {Zhang, Xinfang and Wu, Zhihao and White, Gregory A. L. and Xiang, Zhongcheng and Hu, Shun and Peng, Zhihui and Liu, Yong and Zheng, Dongning and Fu, Xiang and Huang, Anqi and Poletti, Dario and Modi, Kavan and Wu, Junjie and Deng, Mingtang and Guo, Chu},
  journal = {Communications Physics},
  volume = {8},
  pages = {29},
  year = {2025},
  doi = {10.1038/s42005-025-01944-2},
  url = {https://doi.org/10.1038/s42005-025-01944-2}
}

@article{martinezpena2025inputDependence,
  title = {Input-dependence in quantum reservoir computing},
  author = {Mart{\'i}nez-Pe{\~n}a, Rodrigo and Ortega, Juan-Pablo},
  journal = {Physical Review E},
  volume = {111},
  number = {6},
  pages = {065306},
  year = {2025},
  doi = {10.1103/3775-4hfd},
  url = {https://doi.org/10.1103/3775-4hfd},
  eprint = {2412.08322},
  archivePrefix = {arXiv},
  primaryClass = {quant-ph}
}

@article{wringe2025reservoirBenchmarks,
  title = {Reservoir computing benchmarks: a tutorial review and critique},
  author = {Wringe, Chester and Stepney, Susan and Trefzer, Martin},
  journal = {International Journal of Parallel, Emergent and Distributed Systems},
  volume = {40},
  number = {4},
  pages = {313--351},
  year = {2025},
  doi = {10.1080/17445760.2025.2472211},
  url = {https://doi.org/10.1080/17445760.2025.2472211}
}

@article{vieira_temporal_2022,
	title = {Temporal correlations in the simplest measurement sequences},
	volume = {6},
	issn = {2521-327X},
	url = {https://quantum-journal.org/papers/q-2022-01-18-623/},
	doi = {10.22331/q-2022-01-18-623},
	language = {en},
	urldate = {2024-08-20},
	journal = {Quantum},
	author = {Vieira, Lucas B. and Budroni, Costantino},
	month = jan,
	year = {2022},
	pages = {623},
}

@article{Zonnios2025QuantumGeneration,
  author  = {Zonnios, Magdalini and Boyd, Alec and Binder, Felix C.},
  title   = {Quantum generation of stochastic processes: spectral invariants and memory bounds},
  journal = {New Journal of Physics},
  volume  = {27},
  number  = {6},
  pages   = {064507},
  year    = {2025},
  doi     = {10.1088/1367-2630/addc10},
  url     = {https://doi.org/10.1088/1367-2630/addc10}
}

@article{white2025unifyingPRX,
  title = {Unifying Non-Markovian Characterization with an Efficient and Self-Consistent Framework},
  author = {White, G. A. L. and Jurcevic, P. and Hill, C. D. and Modi, K.},
  journal = {Phys. Rev. X},
  volume = {15},
  issue = {2},
  pages = {021047},
  numpages = {43},
  year = {2025},
  month = {May},
  publisher = {American Physical Society},
  doi = {10.1103/PhysRevX.15.021047},
  url = {https://link.aps.org/doi/10.1103/PhysRevX.15.021047}
}

@article{whiteNonMarkovian2022PRX,
  title = {Non-Markovian Quantum Process Tomography},
  author = {White, G.A.L. and Pollock, F.A. and Hollenberg, L.C.L. and Modi, K. and Hill, C.D.},
  journal = {PRX Quantum},
  volume = {3},
  issue = {2},
  pages = {020344},
  numpages = {30},
  year = {2022},
  month = {May},
  publisher = {American Physical Society},
  doi = {10.1103/PRXQuantum.3.020344},
  url = {https://link.aps.org/doi/10.1103/PRXQuantum.3.020344}
}

@article{Thompson2017UsingQuantumTheory,
  author  = {Thompson, Jayne and Garner, Andrew J. P. and Vedral, Vlatko and Gu, Mile},
  title   = {Using quantum theory to simplify input--output processes},
  journal = {npj Quantum Information},
  year    = {2017},
  volume   = {3},
  number   = {1},
  pages    = {6},
  doi      = {10.1038/s41534-016-0001-3},
  url      = {https://doi.org/10.1038/s41534-016-0001-3}
}

@article{YangEtAl2025QuantumDimensionReduction,
  author  = {Yang, Chengran and Florido-Llin{\`a}s, Marta and Gu, Mile and Elliott, Thomas J.},
  title   = {Dimension reduction in quantum sampling of stochastic processes},
  journal = {npj Quantum Information},
  volume   = {11},
  number   = {1},
  pages    = {34},
  year     = {2025},
  doi      = {10.1038/s41534-025-00978-2}
}

@article{binderPRLpractical2018,
  title = {Practical Unitary Simulator for Non-Markovian Complex Processes},
  author = {Binder, Felix C. and Thompson, Jayne and Gu, Mile},
  journal = {Phys. Rev. Lett.},
  volume = {120},
  issue = {24},
  pages = {240502},
  numpages = {6},
  year = {2018},
  month = {Jun},
  publisher = {American Physical Society},
  doi = {10.1103/PhysRevLett.120.240502},
  url = {https://link.aps.org/doi/10.1103/PhysRevLett.120.240502}
}

@misc{kechrimparis2025quantumagentscomplexity,
title = {How Quantum Agents Can Change Which Strategies Are More Complex},
author = {Kechrimparis, Spiros and Barnett, Nix and Gu, Mile and Kwon, Hyukjoon},
year = {2025},
eprint = {2508.08092},
archivePrefix = {arXiv},
primaryClass = {quant-ph},
doi = {10.48550/arXiv.2508.08092}
}

@misc{lumbreras2026reinforcement,
title = {Reinforcement learning for quantum processes with memory},
author = {Lumbreras, Josep and Huang, Ruo Cheng and Hu, Yanglin and Fanizza, Marco and Gu, Mile},
year = {2026},
eprint = {2603.25138},
archivePrefix = {arXiv},
primaryClass = {quant-ph},
doi = {10.48550/arXiv.2603.25138}
}

\newpage

\appendix

\section{Choi conventions, comb constraints, and tester constraints}
\label{app:choi-combs-testers}

We follow the process-tensor / quantum-comb formalism
\cite{chiribella2009theoretical,pollock2018nonmarkovian,MilzPollockModi2016}. The salient points of this framework and this notation are specified in this Appendix.

\subsection{Quantum combs and their Choi representation}
\label{app:combs}

An $N$-step quantum process (comb) $\mathbf{P}^{N}$ can be realized by a sequence of completely positive and trace preserving (CPTP) maps
$\{\mathcal{P}^{(k:k+1)}_{SE}\}_{k=1}^{N}$ acting on a system and an inaccessible environment, connected only through environmental degrees of freedom, with the final environment space traced out. Each map $\mathcal{P}^{(k:k+1)}_{SE}$ acts on the system and environment spaces corresponding to the input output space at that time step,
\begin{align}
    \mathcal{P}^{(k:k+1)}_{SE}:S_kE_{k}\mapsto S_k'E_{k+1}
\end{align}
where $S_k$ and $S_k'$ denotes the system input and output spaces at time $k$, and $E_{k}$ and $E_{k+1}$ are the input and output on the environment as per Fig.~\ref{fig:tester_and_comb}. Then, writing $\circ_{E_k}$ for composition over the $k^{\text{th}}$ environment only, an $N$-step comb may be expressed as

\begin{widetext}
    \begin{align}
\mathbf{P}^{N}
:=
\tr_{E_{N+1}}\!\!\left(\!
\mathcal{P}^{(N-1:N)}_{SE}\circ_{E_{N-1}}\!\mathcal{P}^{(N-2:N-1)}_{SE}\circ_{E_2}\dots\circ_{E_1}\!\mathcal{P}^{(0:1)}_{SE}\,[\rho^{\mathbf{P}}_{SE,0}]
\!\right)
\end{align}
\end{widetext}

where the trace is taken over the final environment space $E_N$, $\circ{E_k}$ is the composition over the $k^{\text{th}}$ environment. The state $\rho^{\mathbf{P}}_{SE,0}$ is the initial system-environment state specified by the process. For notational simplicity we write $\mathcal{P}^{(k:k+1)}_{SE}\equiv \mathcal{P}^{(k)}$. The comb acts on the open system spaces
\begin{align}
S'_0, S_1, S'_1, S_2,  \dots , S'_{N-1},  S_N.
\end{align}
We use the Choi--Jamio{\l}kowski representation. For a single-step map $\mathcal{P}^{(k)}$ the Choi operator is
\begin{align}
\Upsilon^{\mathcal{P}^{(k)}}
:=
\sum_{i,j}
(\mathcal{P}^{(k)} \otimes \mathcal I)
\bigl(
|i\rangle\!\langle j| \otimes |i\rangle\!\langle j|
\bigr).
\end{align}
Multi-time Choi operators are obtained by linking consecutive Choi operators along the shared environment degrees of freedom, via the link product $\star$~\cite{chiribella2009theoretical}. For two consecutive steps,
\begin{align}\label{eqn:env_link_app}
    \begin{split}
        \Upsilon^{\mathcal{P}^{(k)}}\star \Upsilon^{\mathcal{P}^{(k-1)}}
        :=
        &\tr_{E_{k}}\!\Big[
        (\Upsilon^{\mathcal{P}^{(k-1)}}\otimes\mathds{1}_{E_{k+1}S_{k}S'_{k}})
        \\&(\mathds{1}_{E_{k-1}S_{k-1}S'_{k-1}}\otimes(\Upsilon^{\mathcal{P}^{(k)}})^{T_{E_k}})
        \Big],
    \end{split}
\end{align}
where the trace $\tr_{E_k}$ and transpose $T_{E_k}$ are taken over the joint environment space across which the maps are linked (the input space of the map $\mathcal{P}^{(k-1)}$ with the output space of $\mathcal{P}^{(k)}$). Iterating this construction and linking with an initial state on $E_1$ and tracing out the final environment $E_N$ yields the $N$-slot Choi operator (see also \cite{Berk2021ResourceTheoriesOf}),
\begin{align}
\Upsilon^{\mathbf{P}^{N}}
:=
\tr_{E_{N}}\!\Big(
\Upsilon^{\mathcal{P}^{(N)}} \star \Upsilon^{\mathcal{P}^{(N-1)}}\star\dots\star
\Upsilon^{\mathcal{P}^{(1)}}\star\rho_{SE,0}^{\mathbf{P}}
\Big).
\end{align}

Since the process is causal and trace preserving, its Choi operator satisfies the multi-time constraints
\begin{subequations}\label{eq:causality_constraints_app}
    \begin{align}
    \tr_{S_k}\!\left(\Upsilon^{\mathbf{P}^{k}}\right)
    &= \Upsilon^{\mathbf{P}^{0:{k-1}}}\otimes\mathds{1}_{S_k'},
    \qquad 1\leq k\leq N,
    \\
    \Upsilon^{\mathbf{P}^{k}}&\succeq 0,
    \qquad 1\leq k\leq N.
\end{align}
\end{subequations}
These generalize complete positivity and trace preservation to the multi-time setting.

\subsection{Testers as multi-time instruments}
\label{app:testers}

A tester is the multi-time analogue of a quantum instrument. It specifies a correlated sequence of operations (possibly with internal memory) wired into the slots of the process and returns a classical outcome. Formally, a tester is a collection of completely positive (CP), trace-nonincreasing combs $\{\mathbf{T}^{N}_j\}_j$ satisfying
\begin{align}
\sum_j \mathbf{T}^{N}_j = \mathbf{T}^{N},
\end{align}
where $\mathbf{T}^{N}$ is a deterministic tester (a causal and trace preserving comb).

With the same labeling of system spaces as above, the constraints for a deterministic tester can be written as
\begin{subequations}\label{eqn:tester_constraints_app}
\begin{align}
\tr_{{S_{k+1}}}\!\left(\Upsilon^{\mathbf{T}^{k}}\right)
&=
\Upsilon^{\mathbf{T}^{0:{k-1}}}\otimes\mathds{1}_{{S'_k}},
\qquad 0\leq k\leq N,
\\
\Upsilon^{\mathbf{T}^{k}_j} &\succeq 0,
\qquad \forall\,k,j.
\end{align}
\end{subequations}
Operationally, given a process comb $\mathbf{P}^{N}$, contracting it with a tester element produces the (subnormalized) output state associated with outcome $j$; the outcome probability is the trace of that state. In the Choi representation this reduces to the Hilbert--Schmidt pairing,
\begin{align}
\Pr(j|\mathbf{P}^{N}, \mathbf{T}^{N})
=
\tr\!\left(\Upsilon^{\mathbf{P}^{N}}\,(\Upsilon^{\mathbf{T}^{N}_j})^T\right).
\end{align}

\begin{lemma}\label{lem:Realisation of testers}
\cite{GutoskiWatrous2007,chiribella2009theoretical} Every $N$-step tester $\{\mathbf{T}^{N}_j\}_j$ admits an operational realisation as a sequential interaction between the probed system and an internal memory register. Equivalently, it can be realized either as a sequence of CPTP maps $\{\mathcal{T}^{(k:k+1)}_{SM}\}_{k=0}^{N}$ linked through memory $m$, followed by a final quantum instrument $\{\mathcal{J}^{(N+1:N+2)}_{SM,j}\}_j$ where each map in the set is a completely positive trace non-increasing map corresponding to outcome $j$, or as a sequence of quantum instruments whose overall outcome statistics reproduce the tester.
\end{lemma}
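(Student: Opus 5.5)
The plan is to build the realisation explicitly from the defining constraints of a tester, in the style of the standard comb-realisation theorem. Take the Choi operators $\Upsilon^{\mathbf{T}^{N}_j}$ together with the deterministic tester $\Upsilon^{\mathbf{T}^{N}}=\sum_j\Upsilon^{\mathbf{T}^{N}_j}$. The causality constraints in Eq.~\eqref{eqn:tester_constraints_app} fix a nested sequence of reduced operators $\Upsilon^{\mathbf{T}^{0:k}}$, with
\[
\tr_{S_{k+1}}\Upsilon^{\mathbf{T}^{k}}=\Upsilon^{\mathbf{T}^{0:k-1}}\otimes\mathds{1}_{S'_k}.
\]
I would proceed inductively in $k$, building at each step an isometry (Stinespring dilation) into a memory register $M_{k+1}$. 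These isometries are then composed over the memory spaces, and the composition reproduces $\Upsilon^{\mathbf{T}^{0:k}}$ after tracing the memory.

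\textbf{Base step.} Let $\Upsilon^{\mathbf{T}^{0}}$ be the first reduced operator. It is a positive operator on the first output slot. Purify it to a state $|\psi_0\rangle$ on $S_1\otimes M_1$, with $M_1$ taken as the support space. This state is the initial tester state $\rho^{\mathbf{T}}_{M,0}$, together with the first system output.

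\textbf{Inductive step.} Assume $\Upsilon^{\mathbf{T}^{0:k-1}}$ has been realised by an isometric network with final memory $M_k$. Write
\[
\Upsilon^{\mathbf{T}^{0:k-1}}=\tr_{M_k}|\Psi_k\rangle\langle\Psi_k|,
\]
with $|\Psi_k\rangle$ living on the accumulated slots plus $M_k$. Choose a purification $|\Psi_{k+1}\rangle$ of $\Upsilon^{\mathbf{T}^{k}}$. The constraint says that $\tr_{S_{k+1}M_{k+1}}$ of this purification equals $\Upsilon^{\mathbf{T}^{0:k-1}}\otimes\mathds{1}_{S'_k}$. The right-hand side is itself purified by $|\Psi_k\rangle\otimes|\mathds{1}\rangle\!\rangle_{S'_k\tilde S'_k}$. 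By uniqueness of purifications up to an isometry on the purifying system, there is an isometry $V_k\colon M_k\tilde S'_k\to S_{k+1}M_{k+1}$ mapping one purification to the other. Read as a channel (the transpose/Choi bookkeeping on $\tilde S'_k$ turns the maximally entangled leg into the input wire), $V_k$ is exactly the CPTP map $\mathcal{T}^{(k:k+1)}_{SM}$. The input is the process output $S'_k$ plus memory $M_k$; the output is the next process input $S_{k+1}$ plus memory $M_{k+1}$.

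\textbf{Final instrument.} At the last step the operators $\Upsilon^{\mathbf{T}^{N}_j}$ sum to the deterministic tester. Apply the same uniqueness argument to the last memory purification. This gives a decomposition
\[
\Upsilon^{\mathbf{T}^{N}_j}=\tr_{M}\bigl[(\mathds{1}\otimes E_j)\,|\Psi_{N+1}\rangle\langle\Psi_{N+1}|\bigr],
\]
with $E_j\ge0$ and $\sum_jE_j=\mathds{1}$ on the final memory, a Radon--Nikodym/steering argument. Choose any Kraus form of $\{E_j\}$; this yields the instrument $\{\mathcal{J}_{SM,j}\}_j$. The ``sequence of instruments'' form follows by splitting this final measurement, or by moving intermediate measurements forward via deferred measurement.

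To conclude, check with the link product that composing the $\mathcal{T}^{(k:k+1)}$ and $\mathcal{J}_j$ reproduces each $\Upsilon^{\mathbf{T}^{N}_j}$. Then Eq.~\eqref{eq:born-rule-combs} gives identical statistics for every process.

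\textbf{Main obstacle.} I expect the hard part to be the inductive isometry step done carefully. It requires tracking transposes in the Choi convention and the ordering of $S_k, S'_k$ slots. It also requires handling non-full-rank supports, which I would do by restricting to supports or padding memory dimensions. The memory dimension bound $\dim M_{k+1}\le\operatorname{rank}\Upsilon^{\mathbf{T}^{k}}$ drops out along the way.
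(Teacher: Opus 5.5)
Your proposal is correct and reconstructs the standard realisation argument of the works the paper cites for this lemma (Chiribella--D'Ariano--Perinotti and Gutoski--Watrous), since the paper gives no proof beyond that citation: induct on the causality constraints of the deterministic tester $\sum_j\Upsilon^{\mathbf{T}^N_j}$, using uniqueness of purifications up to an isometry on the purifying space, then obtain the final POVM on the last memory by the Radon--Nikodym (pseudo-inverse) steering construction. One small adjustment: in this paper's labelling the first tester tooth already has the input $S'_0$, so $\Upsilon^{\mathbf{T}^0}$ is the Choi operator of a channel $S'_0\to S_1$ rather than a state on $S_1$, but your inductive step with trivial $M_0$ and $\Upsilon^{\mathbf{T}^{0:-1}}$ equal to the scalar $1$ covers this case unchanged.
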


From Lemma~\eqref{lem:Realisation of testers}, a tester element can thus be written as
\begin{widetext}
    \begin{align}
\mathbf{T}^{N}_j
:=
\tr_{M_{N+2}}\!\left(
\mathcal{J}_{j}^{(N+1:N+2)}\!\circ_{M_{N+1}}\! \mathcal{T}_{SM}^{(N:N+1)}\!\circ_{M_{N}}\!\dots\circ_{M_{0}}\! \mathcal{T}_{SM}^{(0:1)}[\rho^{\mathsf{T}}_{M,0}]
\right),
\end{align}
\end{widetext}
where $\circ_{M_{k}}$ denotes composition over the memory space at the $k^{\text{th}}$ time step (see Fig.~\ref{fig:tester_and_comb}). The state $\rho^{\mathsf{T}}_{M,0}$ is the initial memory state specified by the process. Since the channel $\mathcal{T}^{0:1}_{SM}$ is a map on the system and memory, its action on the memory state can also be seen as a partial composition over the memory $\mathcal{T}^{0:1}_{SM}[\rho^{\mathsf{T}}_{M,0}]\equiv \mathcal{T}^{0:1}_{SM}\circ_M\rho^{\mathsf{T}}_{M,0}$ which leaves the system degrees of freedom as free indices that the initial state of the system (possibly entangled with an environment) can contract with.
Each map acts on the system and memory 
\begin{align}
    \mathcal{T}^{(k:k+1)}_{SM} :
 ~&S_k'M_k\mapsto S_{k+2}M_{k+2},
\end{align}

This mirrors the internal structure of the process to which it is applied. In what follows we again adopt the notation $\mathcal{T}_{SM}^{(k:k+1)}\equiv \mathcal{T}_{}^{(k)}$.  The completely positive map $\mathcal{J}_{j}^{(N+1:N+2)}$ corresponds to observed outcome $j$, for example the length $N+1$ sequence $x_{0:N}=(x_0,\dots,x_N)$ observed by monitoring the process at each time. This construction allows one to defer all intermediate measurements so that any sequence of instruments applied at different time steps can be equivalently represented by a single measurement at the end, whose outcomes $j$ label the entire sequence of intermediate outcomes. Consequently, the number of possible outcomes grows as the product of the number of outcomes at each time step, and is therefore generally exponential in $N$. The Choi operator of a tester element may be written as a link-product chain,
\begin{align}
\Upsilon^{\mathbf{T}^{N}_j}
=
\Upsilon^{\mathcal{J}^{(N+1)}_j}\star
\Upsilon^{\mathcal{T}^{(N)}}\star
\Upsilon^{\mathcal{T}^{(N-1)}}\star\dots\star
\Upsilon^{\mathcal{T}^{(0)}}\star
\rho_{M,0}^{\mathsf{T}}.
\end{align}
The link product between consecutive tester-step Choi operators is defined analogously to Eq.~\eqref{eqn:env_link_app}, but now over the shared memory space:
\begin{align}\label{eqn:anc_link_app}
\begin{split}
\Upsilon^{\mathcal{T}^{(k)}}\star \Upsilon^{\mathcal{T}^{(k-1)}}
:=
\tr_{M_{k}}\!&\Big[
(\Upsilon^{\mathcal{T}^{(k-1)}}\otimes\mathds{1}_{S'_{k}S_{k+1}M_{k+1}})\\&
(\mathds{1}_{M_{k-1}S'_{k-1}S_k}\otimes(\Upsilon^{\mathcal{T}^{(k+1)}})^{T_{M_k}})
\Big].
\end{split}
\end{align}

For all valid processes $\mathbf{P}^{N}$, tester elements must satisfy
\begin{subequations}\label{eq:tester_element_constraints_app}
    \begin{align}
0\leq\tr(\Upsilon^{\mathbf{T}^{N}_j}\Upsilon^{\mathbf{P}^{N}})&\leq 1,
\\
\sum_j \tr(\Upsilon^{\mathbf{T}^{N}_j}\Upsilon^{\mathbf{P}^{N}}) &= 1,
\end{align}
\end{subequations}
i.e.\ outcome probabilities lie in $[0,1]$ and sum to $1$.

\subsection{Binary testers for process distinguishability}\label{app:binary_testers}
For discriminating between two processes $\mathbf{P}^N$ and $\mathbf{Q}^N$, any strategy using tester $\{\mathbf{T}_j^N\}_j$ with outcomes $j$ is specified by a decision rule assigning each $j$ (typically exponential in $N$) to a guess of the underlying process. The optimal decision rule assigns each outcome $j$ to the hypothesis with larger likelihood, i.e., compares $\Pr(j|\mathbf{P}^N,\mathbf{T}^N)$ and $\Pr(j|\mathbf{Q}^N,\mathbf{T}^N)$. This induces a partition of outcomes into two sets, corresponding to the two hypotheses.

Defining a new tester with two outcomes by coarse-graining,
\begin{align}
\mathbf{T}_0^N := \sum_{j \in S} \mathbf{T}_j^N,
\qquad
\mathbf{T}_1^N := \sum_{j \notin S} \mathbf{T}_j^N   
\end{align}
which gives the binary tester $\mathbf{T}_{0|1}^N$.

\subsection{Strategy-norm distance reduces to trace and diamond-norm distances}\label{app:strategy-norm-reduces-trace-diamond}
The quantity $d_{\mathrm{str}}^{(N)}$ defined in Eq.~\eqref{eq:strategy-norm} has the interpretation of an optimal distinguishing bias between processes. The cases $N=0$ and $N=1$ correspond to special cases of the strategy norm, which generalizes the trace norm for states and the diamond norm for channels to multi-round quantum strategies~\cite{gutoski2012measure,watrous2018theory}.

For $N=0$, the process has no intervening slots and testers reduce to a final measurement with elements $0\preceq T\preceq \mathds{1}$. Hence
\begin{align}
    d_{\mathrm{str}}^{(0)}(\rho,\sigma)
=
\max_{0\preceq T\preceq \mathds{1}}
\left|\tr[(\rho-\sigma)T]\right|
=
\tfrac12\|\rho-\sigma\|_1
\end{align}
which is precisely the trace distance~\cite{Nielsen_Chuang_2010}.
For $N=1$, a tester consists of an initial state preparation, possibly entangled with an ancillary memory, followed by a final measurement after the channel. This is precisely how one would evaluate the diamond-norm distance~\cite{Kitaev1997QuantumComputations,Watrous2009SDP}, and therefore
\begin{align}
d_{\mathrm{str}}^{(1)}(\mathcal{P},\mathcal{Q})
=
\tfrac12\|\mathcal{P}-\mathcal{Q}\|_\diamond. 
\end{align}

\section{Strategy-norm distinguishability and SDP formulation}
\label{app:strategy-norm-sdp}

The (finite-time) strategy-norm distance is defined as the optimal distinguishability of two $N$-step processes over all admissible testers \cite{gutoski2012measure,watrous2018theory}. Writing $\Delta^{N}:=\mathbf{P}^{N}-\mathbf{Q}^{N}$, one may express the strategy-norm distinguishability as
\begin{align}\label{eq:strategy_norm_app}
\begin{split}
    d_{\mathrm{str}}^{(N)}(\mathbf{P},\mathbf{Q})
&:=
\sup_{\mathbf{T}^{N}_j}
\Big|
\tr\!\left(\mathbf{P}^{N}[\mathbf{T}^{N}_j]-\mathbf{Q}^{N}[\mathbf{T}^{N}_j]\right)
\Big|
\\&=
\sup_{\Upsilon^{\mathbf{T}^{N}_j}\succeq 0}
\Big|
\tr\!\left(\Upsilon^{\Delta^{N}}\,\Upsilon^{\mathbf{T}^{N}_j}\right)
\Big|,
\end{split}
\end{align}
where the optimization is over tester elements satisfying the constraints in
Eqs.~\eqref{eqn:tester_constraints_app} and \eqref{eq:tester_element_constraints_app}. For $N=0$ this reduces to trace distance between states; for $N=1$ it reduces to the diamond-norm distance between channels ~\cite{gutoski2012measure,watrous2018theory}. The strategy-norm distance admits an SDP characterisation \cite{Watrous2009SDP,gutoski2012measure}. One convenient form is the following,
\begin{align}\label{eqn:SDP_strategy_norm_app}
d_{\mathrm{str}}^{(N)}(\mathbf{P},\mathbf{Q})
=
\max_{\Upsilon^{\mathbf{T}^{N}_j}}
\Big|\tr(\Upsilon^{\Delta^{N}} \Upsilon^{\mathbf{T}^{N}_j})\Big|,
\end{align}
subject to the tester constraints
\begin{subequations}\label{eqn:SDP_constraints_app}
    \begin{align}
    0\leq\Upsilon^{\mathbf{T}^{k}_j}&\leq\mathds{1}
    \qquad \forall\,k,j,
    \\
    \sum_j\Upsilon^{\mathbf{T}^{N}_j}&=\Upsilon^{\mathbf{T}^{N}},
    \\
    \tr_{\mathcal{H}_{S_{k+1}}}(\Upsilon^{\mathbf{T}^{k}})
    &=
    \Upsilon^{\mathbf{T}^{k-1}}\otimes\mathds{1}_{\mathcal{H}_{S'_k}}
    \qquad \forall\,0\leq k\leq N.
    \end{align}
\end{subequations}

\section{$\mathsf{MAD}$ testers simulate any finite-time testers}
\label{app:mad-simulates-tester}

This appendix proves Thm.~\ref{thm:mad-compilation}. We show that any finite-time tester can be implemented by probing with a single recurrent device whose internal registers include a coherent memory, a classical register which stores outcomes, and a classical counter that directs where the outcomes are stored and routes time-dependent instruments, if required.

From Lemma~\eqref{lem:Realisation of testers}, there exist auxiliary memory Hilbert spaces $\mathcal{H}_{M_k}\equiv {M_k}$ and CPTP maps
\begin{align}
    \mathcal{T}^{(k:k+1)} :
 ~&S_k'M_k\mapsto S_{k+2}M_{k+2},
\end{align}
together with a final POVM $\{E_j\}_j$ on $\mathcal{H}_{\mathrm{m}_N}$, such that when this sequential device is wired into any process $\mathbf{P}^{N}$, the resulting outcome statistics coincide with those of the tester $\{\mathbf{T}^{N}_j\}_j$. Since $N$ is finite, only finitely many intermediate memory spaces arise. Hence there exists a Hilbert space $\mathcal{H}_{\mathrm{mem}}$ of sufficient memory where
\[
d_{\mathrm{mem}}:=\max_{0\le k\le N}\dim(\mathcal{H}_{\mathrm{M_k}}).
\]
Accordingly, without loss of generality, all step maps may be regarded as acting on a common memory space. In particular, for fixed finite time $N$, one may always choose $d_{\mathrm{suf}}<\infty$. 
We now introduce a counter register with orthonormal basis $\{\ket{0}_t,\ldots,\ket{N}_t\}$ and a coherent memory space with dimension of $d_A$. The counter stores the current time step and allows the same physical device to apply the appropriate map at each round, or simply to point to the correct classical register to store the outcome for that time step. The classical memory is initialized in $\dyad{0}^{\otimes N}_C$ and the counter at $\dyad{0}_t$. Then the $\mathsf{MAD}$ $\mathsf{T}$ acts on the system and quantum ancillary spaces at $k^{\text{th}}$ time step as
\begin{widetext}
\begin{align}
\begin{split}
        &\mathsf{T}\otimes\mathds{1}_E\left[\sum_{x_{0:k-1}}\hat{\rho}_{ESA}^{\mathsf{T},x_{0:k-1}}\otimes\dyad{k}_t\otimes\dyad{x_0,x_1,\!...,x_{k-1},0,\!...,0}_C\right]
    \\&=\sum_{x_{0:k-1}}\sum_{x_k}\mathcal{T}^{(x_k)}\otimes\mathds{1}_E[\hat{\rho}_{ESA}^{\mathsf{T},x_{0:k-1}}]\otimes\dyad{k+1}_t\otimes{\dyad{x_0,x_1,\!...,x_k,0,\!...,0}_C}
    \\&=\sum_{x_{0:k}}\hat{\rho}_{ESA}^{\mathsf{T},x_{0:k}}\otimes\dyad{k+1}_t\otimes{\dyad{x_0,x_1,\!...,x_k,0,\!...,0}_C}
\end{split}
\end{align}
\end{widetext}
such that the $k^{\text{th}}$ classical register is updated with the output of the tester at that time step. Since the system may be entangled with a larger environment, we note that the $\mathsf{MAD}$ acts together with an identity on the environment $\mathds{1}_E$ on the full environment, system and ancillary spaces.

The state $\hat{\rho}^{\mathsf{T},x_{0:k}}_{ESA}$ denotes the subnormalised branch state on the environment, system, and coherent ancillary spaces, conditioned on the outcome record $x_{0:k}$ after the $k^\text{th}$ application of the tester. Thus, $N+1$ successive applications of $\mathsf{T}$ reproduce the original sequence of tester operations
\begin{align}
\mathcal{T}^{(0)},\ldots,\mathcal{T}^{(N)}
\end{align}
on the system and memory, with the counter register selecting the appropriate operation at each step. Equivalently, the counter runs through the step labels $\ket{0}_t,\ldots,\ket{N}_t$, so that the explicit time dependence of the original tester is absorbed into a classical register while the same global operation $\mathsf{T}$ is applied at every step. The coherent ancillary space must be large enough to contain the largest intermediate quantum memory space of the original tester; hence we require
\begin{align}\label{eq:ancilla-space-app}
d_A \geq d_{\mathrm{suf}}.
\end{align}
The additional counter and outcome registers are classical memory resources and are not counted in the coherent memory dimension $d_A$. By construction, when the $\mathsf{MAD}$ tester $(\mathsf{T},\rho_{A,0}^{\mathsf{T}})$ is wired into any process $\mathbf{P}^{N}$, it reproduces exactly the same final outcome statistics as the original tester $\{\mathbf{T}^{N}_j\}_j$. In particular, if $\{\mathbf{T}^{N*}_j\}_j$ is an optimal tester for distinguishing two processes over time $N$, the corresponding compiled $\mathsf{MAD}$ reproduces the same optimal binary discrimination performance, and hence achieves the strategy-norm value for that finite time. This proves Thm.~\ref{thm:mad-compilation} with coherent ancilla dimension bounded as in Eq.~\eqref{eq:ancilla-space-app}.

\section{Proof of Theorem~\ref{thm:stepwise_distinguishability_recurrent}}
\label{app:proof_of_thm_stepwise}

Consider the recurrent process $\mathbf{H}_H^N$ acting on a system $(s)$ and environment $(e)$,
\begin{align}
\mathbf{H}^{N}_{H}
:=
\tr_{E_{N+1}}\!\left(
{H_{SE}}\circ_{E_N}\dots\circ_{E_3}{H_{SE}}\circ_{E_2} {H_{SE}}\circ_{E_1}[\rho_{SE,0}^{\mathbf{H}}]
\right),
\end{align}
where $\tr_{E_{N+1}}$ is the trace over the final environment space and $\circ_{E_k}$ is the composition over the $k^\text{th}$ environment. The initial state on the system and environment spaces is given by $\rho_{SE,0}^{\mathbf{H}}$. Note that two hypotheses may differ only in the initial state while the repeating map $H$ may be the same. If the process is measured repeatedly by applying the same $\mathsf{MAD}$ $\mathsf{T}=(\{\mathcal{T}_{SA}^{(x)}\}_x,N)$ to the system and an additional quantum ancillary space, then the evolution of the environment, system, and quantum ancillary spaces together is governed by the repeated total instrument $\{\mathcal{N}_x^{\mathsf{T},H}\}_x$,
\begin{align}
    \mathcal{N}_x^{\mathsf{T},H}:=
    ({H}_{SE}\otimes \mathds{1}_A)\circ(\mathcal{T}^{(x)}_{SA}\otimes \mathds{1}_E).
\end{align}
For each fixed hypothesis $H$, the collection $\{\mathcal{N}_x^{\mathsf{T},H}\}_x$ is a valid quantum instrument, since $\sum_x \mathcal{N}_x^{\mathsf{T},H}$ is completely positive and trace preserving.

The output state on the environment, system, and ancillary spaces after a sequence of outcomes $x_{0:k}=(x_0,\dots,x_k)$ is given by
\begin{align}
\hat{\gamma}_{x_{0:k}}^{\mathsf{T},H}
:=
\mathcal{N}_{x_k}^{\mathsf{T},H}\circ \mathcal{N}_{x_{k-1}}^{\mathsf{T},H}\circ\dots\circ \mathcal{N}_{x_0}^{\mathsf{T},H}
[\rho_{SE,0}^{\mathbf{H}}\otimes\rho_{A,0}^{\mathsf{T}}].
\end{align}
In other words, $\hat{\gamma}_{x_{0:k}}^{\mathsf{T},H}$ evolves recursively as
\begin{align}
\hat{\gamma}_{x_{0:k+1}}^{H}
=
\mathcal{N}_{x_{k+1}}^{\mathsf{T},H}[\hat{\gamma}_{x_{0:k}}^{\mathsf{T},H}],
\end{align}
with
\begin{align}
\hat{\gamma}_{x_0}^{H}
:=
\mathcal{N}_{x_0}^{\mathsf{T},H}[\rho_{SE,0}^{\mathbf{H}}\otimes\rho_{A,0}^{\mathsf{T}}].
\end{align}
Moreover, we define the subnormalized outcome-dependent states on the system and quantum ancillary spaces alone, obtained by tracing out the environment,
\begin{align}
\begin{split}
    \hat{\sigma}_{x_{0:k}}^{\mathsf{T},H}
    &:=
    \tr_{E_{k+1}}(\hat{\gamma}_{x_{0:k}}^{\mathsf{T},H})
    \\
    &=
    \mathbf{H}_{H}^k[\mathbf{T}^{k}_{x_{0:k}}],
\end{split}
\end{align}
where the second line is written in terms of the process acting on the tester element associated with the outcome record $x_{0:k}$.

Since $\{\mathcal{N}_x^{\mathsf{T},H}\}_x$ is a valid quantum instrument, equivalently since $\{\mathbf{T}^{k}_{x_{0:k}}\}_{x_{0:k}}$ is a valid tester for all $k$, we have
\begin{align}
    \sum_{x_{0:k}}\tr(\hat{\gamma}_{x_{0:k}}^{\mathsf{T},H})
    =
    \sum_{x_{0:k}}\tr(\hat{\sigma}_{x_{0:k}}^{\mathsf{T},H})
    =
    \sum_{x_{0:k}}\tr(\mathbf{H}_{H}^k[\mathbf{T}^{k}_{x_{0:k}}])
    =
    1
    \qquad
    \forall~k .
\end{align}

Next, we consider the $x_{0:k}$ branchwise difference between two recurrent processes $\mathbf{P}_P$ and $\mathbf{Q}_Q$ after the $k^\text{th}$ iteration of the maps $P$ and $Q$, respectively. We express this difference in terms of the marginal states above as
\begin{align}
\begin{split}
    \Delta_{x_{0:k}}
    &:=
    \hat{\sigma}_{x_{0:k}}^{\mathsf{T},P}-\hat{\sigma}_{x_{0:k}}^{\mathsf{T},Q}
    \\
    &=
    \tr_{E_{k+1}}\!\left(
    \hat{\gamma}_{x_{0:k}}^{\mathsf{T},P}-\hat{\gamma}_{x_{0:k}}^{\mathsf{T},Q}
    \right).
\end{split}
\end{align}
At the following time step we have
\begin{align}
\Delta_{x_{0:k+1}}
=
\tr_{E_{k+2}}\!\left(\mathcal{N}^{\mathsf{T},P}_{x_{k+1}}[\hat{\gamma}^{\mathsf{T},P}_{x_{0:k}}]\right)
-
\tr_{E_{k+2}}\!\left(\mathcal{N}^{\mathsf{T},Q}_{x_{k+1}}[\hat{\gamma}^{\mathsf{T},Q}_{x_{0:k}}]\right).
\end{align}
Adding and subtracting
$\tr_{E_{k+2}}(\mathcal{N}^{\mathsf{T},P}_{x_{k+1}}[\hat{\gamma}^{\mathsf{T},Q}_{x_{0:k}}])$, we obtain
\begin{widetext}
    \begin{align}
\begin{split}
    \Delta_{x_{0:k+1}}
    =
    \,
    \tr_{E_{k+2}}\!\left(
    \mathcal{N}^{\mathsf{T},P}_{x_{k+1}}
    [\hat{\gamma}^{\mathsf{T},P}_{x_{0:k}}-\hat{\gamma}^{\mathsf{T},Q}_{x_{0:k}}]
    \right)
    +
    \tr_{E_{k+2}}\!\left(
    (\mathcal{N}^{\mathsf{T},P}_{x_{k+1}} - \mathcal{N}^{\mathsf{T},Q}_{x_{k+1}})
    [\hat{\gamma}^{\mathsf{T},Q}_{x_{0:k}}]
    \right).
\end{split}
\end{align}
\end{widetext}
Taking the trace norm and applying the triangle inequality gives
\begin{widetext}
    \begin{align}
\begin{split}
   \big|\!\big|\Delta_{x_{0:k+1}}\big|\!\big|_1
   \leq
   \,
   \big|\!\big|
   \tr_{E_{k+2}}\!\left(
   \mathcal{N}^{\mathsf{T},P}_{x_{k+1}}
   [\hat{\gamma}^{\mathsf{T},P}_{x_{0:k}}-\hat{\gamma}^{\mathsf{T},Q}_{x_{0:k}}]
   \right)
   \big|\!\big|_1
   +
   \big|\!\big|
   \tr_{E_{k+2}}\!\left(
   (\mathcal{N}^{\mathsf{T},P}_{x_{k+1}} - \mathcal{N}^{\mathsf{T},Q}_{x_{k+1}})
   [\hat{\gamma}^{\mathsf{T},Q}_{x_{0:k}}]
   \right)
   \big|\!\big|_1 .
\end{split}
\end{align}
\end{widetext}

From contractivity of quantum instruments,
\begin{align}
    \sum_{x_{k+1}}
    \big|\!\big|
    \tr_{E_{k+2}}\!\left(
    \mathcal{N}^{\mathsf{T},P}_{x_{k+1}}
    [\hat{\gamma}^{\mathsf{T},P}_{x_{0:k}}-\hat{\gamma}^{\mathsf{T},Q}_{x_{0:k}}]
    \right)
    \big|\!\big|_1
    \leq
    \big|\!\big|
    \hat{\gamma}^{\mathsf{T},P}_{x_{0:k}}-\hat{\gamma}^{\mathsf{T},Q}_{x_{0:k}}
    \big|\!\big|_1 .
\end{align}
Therefore, summing over all words
$x_{0:k+1}=(x_0,\dots,x_k,x_{k+1})\equiv(x_{0:k},x_{k+1})$, we obtain
\begin{widetext}
    \begin{align}
\sum_{x_{0:k+1}}
\big|\!\big|\Delta_{x_{0:k+1}}\big|\!\big|_1
\leq
\,
\sum_{x_{0:k}}
\big|\!\big|
\hat{\gamma}^{\mathsf{T},P}_{x_{0:k}}-\hat{\gamma}^{\mathsf{T},Q}_{x_{0:k}}
\big|\!\big|_1
+\!\!
\sum_{x_{0:k},x_{k+1}}\!\!
\big|\!\big|\!
\tr_{E_{k+2}}\!\left(
(\mathcal{N}^{\mathsf{T},P}_{x_{k+1}} - \mathcal{N}^{\mathsf{T},Q}_{x_{k+1}})
[\hat{\gamma}^{\mathsf{T},Q}_{x_{0:k}}]
\right)\!\!
\big|\!\big|_1 .
\label{eq:Deltaxk+1_inequality}
\end{align}
\end{widetext}

Finally, we note that the distinguishability of the recurrent processes with repeated map $H\in\{P,Q\}$ can be expressed in terms of the distinguishability of the states
\begin{align}
    \rho_{k,\mathsf{T}}^{H}
    =
    \sum_{x_{0:k}\in\mathcal{X}^k}
    |x_{0:k}\rangle\!\langle x_{0:k}|_{C}
    \otimes
    \hat{\sigma}^{H}_{x_{0:k}},
\end{align}
which are defined on the system, quantum ancillary space, and classical outcome register. By storing each outcome-dependent state together with a classical flag state, the state $\rho_{k,\mathsf{T}}^{H}$ specifies the process--tester interaction completely at the level accessible to the final measurement.

With this in mind, we define
\begin{align}
\delta_{\mathsf{T}}^k
:=
\frac12
\sum_{x_{0:k}}
\big|\!\big|\Delta_{x_{0:k}}\big|\!\big|_1 .
\end{align}
Since the propagation term in Eq.~\eqref{eq:Deltaxk+1_inequality} depends on the full environment--system--ancilla branch states before tracing out the environment, we also define the corresponding full branch distinguishability
\begin{align}
\tilde{\delta}_{\mathsf{T}}^k
:=
\frac12
\sum_{x_{0:k}}
\big|\!\big|
\hat{\gamma}^{\mathsf{T},P}_{x_{0:k}}-\hat{\gamma}^{\mathsf{T},Q}_{x_{0:k}}
\big|\!\big|_1 .
\end{align}
By contractivity of the partial trace,
\begin{align}
    \delta_{\mathsf{T}}^k
    \leq
    \tilde{\delta}_{\mathsf{T}}^k .
\end{align}
We also define the additional local distinguishability-generation term
\begin{align}
\epsilon_{\mathsf{T}}^k
:=
\frac12
\sum_{x_{0:k},x_{k+1}}
\big|\!\big|
\tr_{E_{k+2}}\!\left(
(\mathcal{N}^{\mathsf{T},P}_{x_{k+1}} - \mathcal{N}^{\mathsf{T},Q}_{x_{k+1}})
[\hat{\gamma}^{\mathsf{T},Q}_{x_{0:k}}]
\right)
\big|\!\big|_1 .
\end{align}
Multiplying both sides of Eq.~\eqref{eq:Deltaxk+1_inequality} by one half gives
\begin{align}
    \delta_{\mathsf{T}}^{k+1}
    \leq
    \tilde{\delta}_{\mathsf{T}}^k
    +
    \epsilon_{\mathsf{T}}^k 
\end{align}
as claimed. If no distinguishability is hidden in the inaccessible environment so that
\begin{align}
    \tilde{\delta}_{\mathsf{T}}^k
    =
    \delta_{\mathsf{T}}^k ,
\end{align}
then the bound reduces to
\begin{align}
    \delta_{\mathsf{T}}^{k+1}
    \leq
    \delta_{\mathsf{T}}^k
    +
    \epsilon_{\mathsf{T}}^k .
\end{align}

\end{document}